\newif\ifacm
\acmtrue  
\ifacm
  \documentclass[acmsmall,screen]{acmart}
  \AtBeginDocument{%
  }
    \newcommand{\subfigtextwidth}{1.0\textwidth}
\else
  \documentclass[10pt,journal]{IEEEtran}
    \newcommand{\subfigtextwidth}{0.475\textwidth}
\fi

\usepackage{amsmath, mathtools, amsfonts} 
\usepackage{mathrsfs}

\usepackage{xcolor}  
\usepackage{array, multirow, multicol, makecell, tabularx, booktabs}
\usepackage{ragged2e}
\usepackage{caption, subcaption}

\usepackage{tikz, pgfplots}
\usetikzlibrary{arrows.meta, automata, positioning, shapes.geometric, shadows, shapes.symbols, calc}

\usepackage{siunitx}
\usepackage[normalem]{ulem} 
\usepackage{url, nameref, balance, lipsum} 

\usepackage{listings}

\usepackage[ruled, vlined, linesnumbered]{algorithm2e} 

\SetCommentSty{mycommfont}

\newcommand{\mR}{\mathsf{R}}

\newcommand{\ta}{\mathcal{A}}

\newcommand{\mdp}{\mathcal{M}}

\newcommand{\los}{\ell}

\newcommand{\schr}{\pi}
\newcommand{\argmax}{\arg\max}
\newcommand{\argmin}{\arg\min}
\newcommand{\vect}[1]{\boldsymbol{#1}}
\newcommand{\tolrew}[2]{{\mR}_{#1}^{#2}}
\newcommand{\Polytope}[1]{[\![{\mR}_{#1}]\!]}
\newcommand{\expctl}{\mathbf{Exp}_{\mathit{CTL}}}
\newcommand{\probld}{\mathbf{Prob}_{\mathit{LD}}}
\newcommand{\prhw}{\mathbf{PR}_{\mathit{HW}}}
\newcommand{\prttc}{\mathbf{PR}_{\mathit{TTC}}}
\newcommand{\pchw}{\mathbf{P50}_{\mathit{HW}}}
\newcommand{\pcttc}{\mathbf{P10}_{\mathit{TTC}}}

\usepackage{amsthm} 
\usepackage{etoolbox} 

\makeatletter
\@ifclassloaded{letter}
{
}
{
    
    \newtheorem{theorem}{Theorem}[section]
    \newtheorem{assumption}[theorem]{Assumption}
    
    \newtheorem{proposition}[theorem]{Proposition}
    \newtheorem{definition}[theorem]{Definition} 
    \newtheorem{lemma}[theorem]{Lemma} 
    \newtheorem{claim}[theorem]{Claim} 
    
   \newcommand{\rrev}[1]{{#1}} 
    \newcommand{\rev}[1]{{#1}} 
    \newcommand{\orev}[1]{{#1}} 
    
    \usepackage[colorinlistoftodos,color=red!30,textwidth=1.2cm,textsize=scriptsize,textcolor=blue,bordercolor=white]{todonotes}
    
    \makeatletter
    \@mparswitchfalse
    \makeatother
    \normalmarginpar
}
\makeatother

\begin{document}
\title{Fault-Tolerant Design and Multi-Objective Model Checking for Real-Time Deep Reinforcement Learning Systems}

\ifacm
\setcopyright{acmlicensed}
\copyrightyear{2026}
\acmYear{2026}
\acmDOI{XXXXXXX.XXXXXXX}

\acmJournal{TOSEM}
\acmVolume{00}
\acmNumber{0}
\acmArticle{000}
\acmMonth{0}

\author{Guoxin Su}
\affiliation{%
  \institution{University of Wollongong}
  \city{Wollongong}
  \country{Australia}}
\email{guoxin@uow.edu.com}
\orcid{0000-0002-2087-4894}

\author{Thomas Robinson}
\affiliation{%
  \institution{Fortescue Zero}
  \city{Kidlington}
  \country{UK}}
  \orcid{0000-0002-6150-2587}

\author{Hoa Khanh Dam}
\affiliation{%
 \institution{University of Wollongong}
 \city{Wollongong}
  \country{Australia}}
  \orcid{0000-0003-4246-0526}

\author{Li Liu}
\affiliation{%
  \institution{Chongqing University}
  \city{Chongqing}
  \country{China}}
 \orcid{https://orcid.org/0000-0002-4776-5292}

\author{David S.~Rosenblum}
\affiliation{%
  \institution{George Mason University}
  \city{Fairfax}
  \state{Virginia}
  \country{USA}}
  \orcid{0000-0003-1685-4206}

\renewcommand{\shortauthors}{Su et al.}

\begin{abstract}
Deep reinforcement learning (DRL) has emerged as a powerful paradigm for solving complex decision-making problems. However, DRL-based systems still face significant dependability challenges particularly in real-time environments due to the simulation-to-reality gap, out-of-distribution observations, and the critical impact of latency. \rev{Latency-induced faults\rrev{, in particular,}  can lead to unsafe or unstable behaviour, yet existing} fault-tolerance approaches to DRL systems lack formal methods to rigorously analyse and optimise performance and safety simultaneously in real-time settings. To address this, we propose a formal framework for designing and analysing real-time switching mechanisms between DRL agents and alternative controllers. Our approach leverages Timed Automata (TAs) for explicit switch logic design, which is then syntactically converted to a Markov Decision Process (MDP) for formal analysis. We develop a novel convex query technique for multi-objective model checking, enabling the optimisation of soft performance objectives while ensuring hard safety constraints for MDPs. Furthermore, we present MOPMC, a GPU-accelerated software tool implementing this technique, demonstrating superior scalability in both model size and objective numbers. 
\end{abstract}

\begin{CCSXML}
<ccs2012>
   <concept>
       <concept_id>10011007.10011074.10011099.10011692</concept_id>
       <concept_desc>Software and its engineering~Formal software verification</concept_desc>
       <concept_significance>500</concept_significance>
       </concept>
   <concept>
       <concept_id>10011007.10011074.10011075.10011078</concept_id>
       <concept_desc>Software and its engineering~Software design tradeoffs</concept_desc>
       <concept_significance>500</concept_significance>
       </concept>
   <concept>
       <concept_id>10010520.10010570.10010573</concept_id>
       <concept_desc>Computer systems organization~Real-time system specification</concept_desc>
       <concept_significance>500</concept_significance>
       </concept>
   <concept>
       <concept_id>10010520.10010575.10010755</concept_id>
       <concept_desc>Computer systems organization~Redundancy</concept_desc>
       <concept_significance>500</concept_significance>
       </concept>
 </ccs2012>
\end{CCSXML}

\ccsdesc[500]{Software and its engineering~Formal software verification}
\ccsdesc[500]{Software and its engineering~Software design tradeoffs}
\ccsdesc[500]{Computer systems organization~Real-time system specification}
\ccsdesc[500]{Computer systems organization~Redundancy}

\keywords{Deep reinforcement learning, fault tolerance, Markov decision process, multi-objective model checking, probabilistic model checking, real-time computing, Simplex architecture}

\received{\today}
\received[revised]{dd/mm/yyyy}
\received[accepted]{25/02/2026}

\maketitle

\else
\bstctlcite{MyBSTcontrol}

\author{Author A, author B, Author C and Author D
\IEEEcompsocitemizethanks{
\IEEEcompsocthanksitem Author A is with the School of Computing and Information Technology, University of Wollongong, Australia.\protect\\
E-mail: Author@uow.edu.au 
\IEEEcompsocthanksitem Author B is with [which department], [which university], [which country]\protect\\
 E-mail: Author@xxx.xxx.xx
\IEEEcompsocthanksitem Author C is with [which department], [which university], [which country]\protect\\
E-mail: Author@xxx.xxx.xx
\IEEEcompsocthanksitem Author D is with [which department], [which university], [which country]\protect\\
E-mail: Author@xxx.xxx.xx
}
}

\markboth{Preprint}%
{Su \MakeLowercase{\textit{et al.}}: Fault-Tolerant Real-Time Deep Reinforcement Learning Systems}

\maketitle

\begin{abstract}

\end{abstract}

\begin{IEEEkeywords}

\end{IEEEkeywords}
\fi

\section{Introduction}
\ifacm Deep \else \IEEEPARstart{D}{eep} \fi
reinforcement learning (DRL) has emerged as a powerful paradigm for solving complex sequential decision-making problems, with notable successes in domains like autonomous driving
, robotic control
, and industrial automation
. By combining deep neural networks with the trial-and-error learning process of reinforcement learning, DRL enables agents to derive optimal control policies directly from environmental \rrev{interactions}, often surpassing traditional methods where explicit modelling is intractable.

However, the deployment of DRL is hindered by inherent challenges such as the simulation-to-reality (sim-to-real) gap~\cite{zhao2020sim,Da2025} and out-of-distribution (OOD) observations \cite{Haider2024}.
These issues can lead to unpredictable behaviours that compromise system dependability.
Consequently, significant research effort has been dedicated to engineering \emph{dependable} DRL systems,  where properties such as safety and reliability are paramount.
Current approaches to achieving DRL dependability broadly fall into two categories. The first category focuses on internal improvements to DRL algorithms by incorporating various techniques (e.g., constrained optimisation~\cite{Hasanbeig2020}, risk \rrev{aversion}~\cite{Rigter2021} and model predictive control (MPC)~\cite{zanon2020safe}).
The second category involves fault-tolerant DRL systems, which employ architectural patterns such as Simplex~\cite{Phan2020,Cai2025,Maderbacher2025} and protection components  (often referred to as ``{shields}'')~\cite{Alshiekh2018,Jansen2020,Koenighofer2020,Elsayed2021,Dunlap2023}. 
These patterns can offer dual benefits to DRL systems: (i) they can enhance overall system dependability by providing assurance for critical properties; and (ii) they can improve \rrev{the} dependability of reinforcement learning by overriding interactions of the DRL agent with the environment when necessary.

Our work specifically focuses on fault-tolerant DRL systems operating in \emph{real-time} environments.
While DRL applications in real-time control are pervasive across domains such as autonomous driving and robotic control, the unique challenges of ensuring dependability in these contexts are largely under-explored in the literature.
For these systems, fault-tolerance is concerned with not only accurate computation results but also the time at which the results are computed~\cite{Reghenzani2023}.
\rev{DRL models which are trained end-to-end are usually sensitive to time variations between observations and actions~\cite{Thodoroff22a}.} 
However, latency arising from various sources (including sensing, communication, computation and actuation~\cite{Liu2023}) often manifests as rare but critical events. These events often present \rrev{themselves} as out-of-distribution observations to a DRL agent, potentially leading to catastrophic failures that the agent, with its training experience alone, is ill-equipped to handle.

Furthermore, a fault-tolerant DRL system including heterogeneous controllers must manage multiple, often conflicting, objectives. For example, always executing the actions of the DRL controller can yield the highest average-case performance but risks violating a safety constraint. Conversely, always triggering a simple, safe controller guarantees safety but may severely degrade performance. This highlights a fundamental trade-off between maximising performance and assuring safety in devising optimal control strategies. 
\rev{In realistic control domains, a wide range of safety and performance metrics must be considered. For instance, the autonomous driving DRL literature reports up to \num{95} distinct evaluation metrics in the CARLA simulator~\cite{Delavari2025}.}
Balancing \rev{numerous and potentially conflicting} objectives for such DRL systems is inherently a multi-objective optimisation problem that existing techniques rarely address in a formal, real-time setting.

To address these challenges, we introduce a formal approach to designing and analysing a real-time switching mechanism between DRL agents and alternative controllers.
\rev{Our approach builds on the long-standing principle of \emph{redundancy} in fault-tolerant design. In real-time DRL systems, latency arising from sensing, communication, or computation can lead to unsafe or unstable behaviour. To mitigate these risks, we assume the presence of alternative controllers based on traditional control techniques (e.g., rule-based controllers and model predictive control), which prioritise critical properties such as safety and stability. The core challenge is to determine when and how to switch between the DRL agent and these controllers to assure or balance multiple, often conflicting, objectives under real-time constraints.}

\rev{Specifically, we leverage} \emph{timed automata} (TAs) \cite{Alur1994}, a well-established formalism for modelling systems manifesting time-dependent behaviours, to explicitly design the switch logic.
By distinguishing under-determined conditions (due to model abstraction) and non-deterministic design choices, our approach syntactically converts a TA model into a Markov Decision Process (MDP). 
We then employ multi-objective model checking~\cite{Forejt2012},  a specialised branch of probabilistic model checking~\cite{Katoen2016,Baier2019} that verifies MDPs against multiple temporal and reward-based properties simultaneously, to analyse the trade-off between conflicting objectives.

Our main contributions are as follows:
\begin{enumerate}
\item We present a \emph{framework} to support the formal design of switch-based real-time DRL systems and diverse analysis of design choices. 
\item We propose a novel multi-objective model checking \emph{technique}, called \emph{convex query}, which computes an optimal MDP  scheduler accounting for both hard objectives (whose satisfaction must be guaranteed) and soft objectives (where trade-offs are permissible).
\item We develop MOPMC, a \emph{software tool} for multi-objective model checking that implements our proposed technique. MOPMC supports convex queries and leverages GPU acceleration for policy and value iteration. Our evaluation results demonstrate MOPMC's superior scalability over state-of-the-art tools (i.e., Storm~\cite{Hensel2021} and PRISM~\cite{Kwiatkowska2011}) in model size and number of objectives.
\end{enumerate}

The remainder of the paper is organised as follows:
Section~\ref{sec:motivating-example} describes real-time DRL systems characteristics, a \rrev{motivating} example and our switch-based architecture.
Section~\ref{sec:approach} presents our TA-based design approach and MDP-based analysis framework.
Section~\ref{sec:cq} includes our core technique, namely, a convex query method for multi-objective MDPs.
\rev{Section~\ref{sec:tool} describes the main features and architecture of our novel tool.
Section~\ref{sec:case_study} evaluates our switch design approach and our tool's performance.}
Section~\ref{sec:related_work} discusses the related work.
Finally, Section~\ref{sec:conclusions} concludes the paper and outlines further work directions.

\section{Motivations and Example}\label{sec:motivating-example}

In this section, we first present key DRL systems characteristics that are central to our work (Section~\ref{sec:rt_drl_attributes}). We then use an autonomous-driving scenario to illustrate such a DRL system (Section~\ref{sec:example}). Finally, we outline our research focus: the mechanism for switching between the DRL agent and alternative controllers (Section~\ref{sec:control_switch_overview}).

\subsection{Real-Time DRL System Characteristics}\label{sec:rt_drl_attributes}
Real-time Deep Reinforcement Learning (DRL) systems frequently operate within dynamic and often unpredictable environments where ideal operational conditions are not always guaranteed. 
For these systems, correctness is determined by not only accurate computation results but also the time at which the results are computed.
We focus on one broad class of such systems that embody the following two {characteristics}: 

Firstly, whilst typically designed for ultra-low latency, these systems may experience \emph{time delay} with a small but non-negligible probability. These delays can arise from a multitude of sources inherent in complex real-time deployments, such as sensor acquisition lag, communication network congestion, computational bottlenecks in the processing unit, or even asynchronous updates within the DRL architecture itself~\cite{Liu2023}.

Secondly, the time delay can cause significant overall {performance degradation} and, even worse, {critical requirement violation} particularly when timely and accurate responses are paramount.
To be more specific, we distinguish two kinds of \rrev{performance} objectives: 
(1) \emph{hard performance objectives} that must be guaranteed (such as safety) for the system, and (2) \emph{soft performance objectives} that are desirable for the system and can be traded off only when necessary (e.g., optimal task completion and cost).
Even seemingly minor or infrequent delays can accumulate or critically impact the timeliness of decision-making, leading to a discrepancy between the perceived state of the environment and its actual real-time condition. This can profoundly degrade soft performance objectives and even breach the hard requirements.

\subsection{Example: Autonomous-Driving Agent}\label{sec:example}
\begin{figure}[!tbh]
\centering
\begin{subfigure}[b]{\subfigtextwidth}
\centering
\begin{tikzpicture}[
node distance=.1cm, 
drlagent/.style={
rectangle,
draw=blue!70!black, 
fill=blue!20,       
line width=0.8pt,   
minimum size=5mm,
text width=1.3cm,     
align=center,
font=\bfseries,     
inner sep=0mm,      
rounded corners=2pt, 
minimum height=1.1cm,
drop shadow,
},
datalabel/.style={
ellipse, fill=green!20, draw=blue!70!black, text width=2cm,     
align=center, font=\small, inner sep=0mm,      
text=black, minimum height=1.7cm, drop shadow,
},
arrowstyle/.style={
-Triangle, 
draw=gray!70!black, 
line width=0.8pt 
}
]
\node[drlagent] (dnn) {Agent};
\node[datalabel, left=of dnn, xshift=-0.5cm] (s) {
\textbf{Perception} \\ 
 {\footnotesize (kinematics, sensor data, etc.)}
};
\node[datalabel, right=of dnn, xshift=0.5cm] (a2) {
\textbf{Prediction} \\ 
 {\footnotesize(acceleration, steering, etc.)}
};
\draw[arrowstyle] (s.east) -- (dnn.west);
\draw[arrowstyle] (dnn.east) -- (a2.west);
\end{tikzpicture}
\caption{Perception and Prediction of DRL Agent  \label{fig:real-time-drl}}
\end{subfigure}
\vspace{.5em}
\newline
\begin{subfigure}[b]{\subfigtextwidth}
\centering
\begin{tikzpicture}[
    declare function={gamma(\z)=
    2.506628274631*sqrt(1/\z)+ 0.20888568*(1/\z)^(1.5)+ 0.00870357*(1/\z)^(2.5)- (174.2106599*(1/\z)^(3.5))/25920- (715.6423511*(1/\z)^(4.5))/1244160)*exp((-ln(1/\z)-1)*\z;},
    declare function={gammapdf(\x,\k,\theta) = 1/(\theta^\k)*1/(gamma(\k))*\x^(\k-1)*exp(-\x/\theta);}
]
\begin{axis}[
  no markers, domain=0:9, samples=100,
  axis x line=left, axis y line=none, xlabel=\textit{time}, ylabel=\textit{prob.},
  every axis y label/.style={at=(current axis.above origin),anchor=east},
  every axis x label/.style={at=(current axis.right of origin),anchor=north},
  height=2.75cm, width=8.6cm,
  xtick={0, 9.0}, 
  xticklabels={$0$, $\Delta$},
  enlargelimits=false, clip=false, axis on top,
  ]
\addplot [ thick,cyan!20!black,domain=0:17.9] {gammapdf(x,2,2)};
\addplot [fill=green!20, draw=none, domain=0:9] {gammapdf(x,2,2)} \closedcycle;
\addplot [very thick, fill=blue!20, draw=none, domain=9.01:18] {gammapdf(x,2,2)} \closedcycle;
\node[ font=\footnotesize, text width=3cm, align=left] at (axis cs: 4.5,.275)(t1) {\textit{Most likely, the DRL agent returns an action within time $\Delta$}};
\node[ font=\footnotesize, text width=3.2cm] at (axis cs: 14,.275)(t2) {\textit{With a non-negligible probability, the agent returns an action after $\Delta$}};
\node[] at (axis cs: 9,0.0)(m0) {};
\node[] at (axis cs: 0,0.35)(m1) {};
\node[] at (axis cs: 0,0.0)(m2) {};
\node[] at (axis cs: 9,0.35)(m3) {};
\node[] at (axis cs: 9,0.0)(m4) {};
\draw[dashed] (m3) -- (m4);
\end{axis}
\end{tikzpicture}
\caption{Perception-to-prediction time  \label{fig:time-distribution}}
\end{subfigure}
\caption{DRL Agent for Autonomous Driving}
\end{figure}
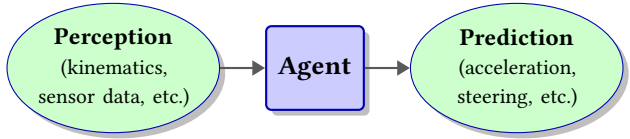
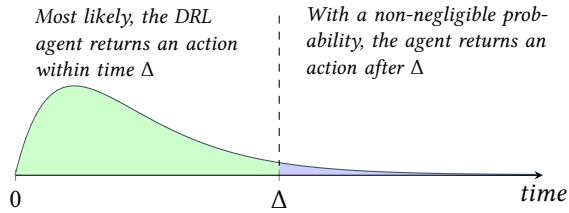

We consider an autonomous-driving system that is controlled by a DRL agent.
Typically, this DRL agent conducts two steps (c.f., Figure~\ref{fig:real-time-drl}). First, the agent perceives kinematics data such as positions, speeds and driving directions of the ego vehicle and  nearby vehicles. It then predicts control values (e.g., acceleration and steering values) for the ego vehicle.
In reality, the above two steps \rrev{occur} in real time; namely, when the agent performs the perception and prediction, the physical world  changes continuously.
In normal circumstances, these steps take a minimum time period, during which changes in the physical world are negligible.
However, \rrev{when} experiencing significant variations of the end-to-end time (also referred to as \emph{jitter}~\cite{Kopetz2022}),  the agent's decision making may be compromised. 
First, the jitter can result in performance degradation for agents trained in the normal conditions. These agents are very sensitive to the discrepancy between the normal and abnormal conditions.
Second, a delayed action can be a safety issue for deadline-driven systems. For example, if the ego vehicle is fast approaching another vehicle in the highway, an urgent action must be taken.

It is noteworthy that time variation in \rev{perception and prediction} is subject to \textit{aleatory} uncertainty (in contrast to epistemic uncertainty). In other words, the agent cannot eliminate this uncertainty even if all (including historical) data is available.  To illustrate this uncertainty, Figure~\ref{fig:time-distribution} depicts a typical long-tail distribution, illustrating a rare but non-negligible significant time delay.
It is possible to train an agent who can budget enough time to accommodate the uncertain end-to-end delay.
For example, as the exact amount of delay is unpredictable (due to aleatory uncertainty), an agent which controls the ego vehicle can provide excessive actions (e.g., low-speed \rrev{driving} and early deceleration) to avoid collisions in the abnormal conditions. However, this inevitably \rrev{sacrifices} the overall performance  (e.g., driving efficiency and comfortability) under normal conditions.

\subsection{Switch Architecture and Design Choices}\label{sec:control_switch_overview}\label{sec:determine_swtich_config}

In the traditional DRL paradigm, an agent interacts with the environment directly. 
However, as the previous section highlighted, this direct interaction in a real-time setting exposes the system to the risks of aleatory uncertainty and time delay. To address this, we advocate a long-standing philosophy in fault-tolerant system design, that is, \emph{redundancy}.

We assume that the DRL system \rrev{includes} \emph{alternative controllers} specifically engineered to be more robust on some, usually critical, performance aspects (e.g., safety and system stability) in the presence of time delay. 
These alternative controllers are often designed by more traditional methodologies, such as rule-based systems or model-predictive control. 
In our autonomous driving example, the DRL agent serves as the primary controller. It is trained to achieve smooth, energy-efficient driving and efficient lane changes, which are soft performance \rev{objectives}.
One or more rule-based controllers are designed to avoid lane departures, small headway and low time-to-collision, which are hard performance objectives.

Our focus is on the development of a rigorous approach to design a switching mechanism for coordinating \rev{these} controllers. 
This mechanism mediates the primary controller, alternative controllers and \rrev{the} environment (as illustrated in  Figure~\ref{fig:switch}).
Its core function is to decide \emph{how} and \emph{when} to switch between them in real-time, especially when the DRL agent experiences a delay.

\begin{figure}[htb!]
    \centering
    {\footnotesize
\begin{tikzpicture}[
node distance=2.7cm,
every path/.style={>={Stealth},draw=black, text=black},
block/.style={rectangle,rounded corners,draw=black, fill=blue!20, text=black, inner sep=1mm, text width=1.7cm, font=\bfseries,  align=center, minimum height=1cm, drop shadow},
]
\def\bottom#1#2{\hbox{\vbox to #1{\vfill\hbox{#2}}}}
\node[block](env)[]{Environment};
\node[block, fill=green!20](sw)[right of=env, xshift=0cm, yshift=0cm] {Real-time switch};
\node[block](c0)[right of=sw, yshift=0.7cm]{Primary controller};
\node[block](c1)[right of=sw, yshift=-0.7cm]{Alternative controller/s};
\draw[-](env.east)-- (sw.west);
\draw[-] ([yshift=.05cm]sw.east)-- (c0.west);
\draw[-] ([yshift=-.05cm]sw.east)-- (c1.west);
\end{tikzpicture}
    \caption{Real-time Switch Architecture}
    \label{fig:switch}
    }
\end{figure}

One key issue in switch design is determining design choices that can meet the hard performance objectives and balance the soft performance objectives.
The switch designer can usually prescribe a space of all possible \emph{design choices}, which are essentially switch \emph{configurations} involving time-relevant parameters. 
These choices must ensure the system to meet its hard performance objectives and balance the soft ones simultaneously.
 For the autonomous car, this means \rev{guaranteeing} safety while allowing the DRL agent to maximise other performance aspects (such as driving efficiency and comfort).
To achieve these objectives, we explore two strategies of design choices:
\begin{itemize}
\item (Dynamic Strategy) 
The switch can dynamically determine configuration parameters, allowing the system to adapt its switching behaviour during operation.
\item (Randomised Strategy) The switch can determine configuration parameters probabilistically,  namely, drawing from a pre-defined set of possible parameter values according to a specified distribution.
\end{itemize}



\section{Formal Design Framework}\label{sec:architecture}\label{sec:approach}

This section presents our formal framework \rrev{of} real-time switch design.
Figure~\ref{fig:approach} illustrates the main steps of our approach and the corresponding subsections (Sections~\ref{sec:ta} to \ref{sec:formal_problem}). 


\begin{figure}[htb!]
\ifacm\newcommand{\btextwidth}{3.5cm}\else\newcommand{\btextwidth}{2.5cm}\fi
\ifacm\newcommand{\ptextwidth}{5cm}\else\newcommand{\ptextwidth}{3.5cm}\fi
\ifacm\newcommand{\custxshift}{-3.8cm}\else\newcommand{\custxshift}{-2.3cm}\fi
    \centering
    {\footnotesize
\begin{tikzpicture}[
node distance=1.6cm,
every path/.style={draw=black, text=black},
block/.style={tape, draw=black, fill=blue!20, text=black, font=\bfseries, inner sep=1mm, text width=\btextwidth, align=center, minimum height=1cm, drop shadow},
process/.style={rectangle,draw=black, fill=green!20, text=black, font=\slshape,  inner sep=2.5mm, text width=\ptextwidth, align=left, minimum height=1cm, drop shadow},
]
\def\bottom#1#2{\hbox{\vbox to #1{\vfill\hbox{#2}}}}
\node[block](ta)[]{Switch Design Model \\  (TA)};
\node[process](modelling)[left of=ta, xshift=\custxshift ] {(1) Real-time modelling (Sec.~3.1)  \& incremental synchronised\ composition\   (Sec.~3.2)}; 
\node[block](mdp)[below of=ta, xshift=0cm] {\textbf{Switch Analysis Model }\\ (MDP) };
\node[process](conversion)[left of=mdp, xshift=\custxshift ] {(2) Model conversion\ \& parameter\ estimation  (Sec.~3.3)}; 
\node[block](objectives)[below of=mdp] {Design Choice \\ (MDP Scheduler)};
\node[process](specification)[left of=objectives, xshift=\custxshift ] {(3) Multi-objective model checking with total rewards (Sec.~3.4)}; 

\draw[-Triangle](ta.south)-- (mdp.north);
\draw[-Triangle](mdp.south)-- (objectives.north);
\draw[dashed](ta.west)--(modelling.east);
\draw[dashed](mdp.west)--(conversion.east);
\draw[dashed](objectives.west)--(specification.east);
\end{tikzpicture}
    \caption{Overview of the Switch Design Approach}
    \label{fig:approach}
    }
\end{figure}

\subsection{TA-Based Switch Modelling}\label{sec:ta}

While many existing approaches have employed discrete-time models for DRL verification and assurance \cite{Bacci2022,Koenighofer2022,Riley2022,Jansen2020}, our work specifically addresses the real-time setting.
For this purpose, we adopt {timed automata} (TAs) \cite{Alur1994} as the formal model for designing controller switches. 

A key concept in TAs is that of \emph{clocks}, which are symbolic real-valued variables representing the physical timers. Clocks increase uniformly with time and can be reset to $0$. 
For the scope of our work, we consider TAs that are \emph{augmented with Boolean variables}.
Let $B$ be a set of Boolean variables and $Z$ be a set of clocks. 
\emph{Guarding conditions}, or simply \emph{guards}, are defined by the following syntactic rules:
\begin{equation*}
\begin{aligned}
\varphi ::= & ~ \top ~|~ {b} \mid \lambda \leq \lambda \mid \neg \varphi \mid \varphi \wedge \varphi , \quad
\lambda ::=  z \mid c \\
\end{aligned}
\end{equation*}
where ${b}\in B$, $z\in Z$ and $c$ is a constant. Let ${G}$ denote the set of all such guards.
\begin{definition}[\rev{Timed} automata]
A \textit{\rev{timed} automaton} (TA) is a tuple 
$$\ta=(S, s_{0}, A,\mathit{inv}, \mathit{enb}, \mathit{tran}, \mathit{res})$$ where
\begin{itemize}
    \item $S$ is a non-empty set of states (often called locations);
    \item $s_{0}\in S$ is the initial state;
    \item $A$ is a non-empty set of actions; 
    \item $\mathit{inv}: S \mapsto {G}$ is an invariant function;
    \item $\mathit{enb} : S\times A \mapsto {G}$ is \rrev{an} {enabling} function that \rrev{maps} a state-action pair to a {guard};
     \item $\mathit{tran}: S \times A \mapsto S$ is a transition function that maps a state-action pair to its next state;    
     \item $\mathit{res}: S\times A  \mapsto 2^{{Z}}$ is a clock-resetting function that specifies a subset of clocks to be reset to 0 upon a transition.
\end{itemize}
\end{definition} 

The invariant function $\mathit{inv}$ provides conditions that must hold in a state. 
The enabling function $\mathit{enb}$ specifies the guarding condition for an action to be executable from a given state.
The transition function $\mathit{tran}$ determines the next state when performing an action. 
The clock-resetting function $\mathit{res}$ indicates which clocks are reset to $0$ as part of the transition.
If $\mathit{enb}(s,a)$ does not imply $\neg \top$ (false) (that is, enabling $a$ at $s$ is logically impossible), we say that $a$ is disabled at $s$.
Let $A(s)$ denote the subset of \emph{enabled}  actions at $s$.
In practice, only enabled actions are relevant.

For design modelling purposes, we introduce the following important assumption for TAs: 
{The guards for enabled actions at each state are either mutually exclusive or logically equivalent}.  
Formally, this assumption is expressed as follows:
\begin{assumption}\label{thm:assump}
For any $s\in S$ such that $|A(s)|>1$, one of the following two conditions holds:
\begin{enumerate}
\item $\forall a,a'\in A(s)$ such that $a\neq a'$, $\mathit{enb}(s,a) $ and  $ \mathit{enb}(s,a')$ are mutually exclusive. States satisfying this proposition are called \emph{branching states}. Let $S_\mathit{BR}\subseteq S$ denote the set of branching states.
\item $\forall a,a'\in A(s)$, $\mathit{enb}(s,a) $ and $\mathit{enb}(s,a')$ are logically equivalent. 
States satisfying this proposition are called \emph{choice states}. Let $S_\mathit{CH}\subseteq S$ denote the set of choice states.
\end{enumerate}
\end{assumption}

Assumption~\ref{thm:assump} does not impose a theoretical constraint, as any TA can be transformed to an equivalent TA that meets this assumption. However, its practical merit in system design is {significant} because it clearly distinguishes between two types of non-deterministic behaviours.
\begin{itemize}
\item Transitions from branching states represent underdetermination arising from model abstraction. These underdetermined behaviours will be resolved in system execution (when the exact values of clocks and Boolean variables are known).
\item Actions associated with \rev{choice} states represent open, non-deterministic design choices, which must be settled by the model designer.
\end{itemize}

When representing \rrev{a} TA, it is convenient to specify its \emph{edges}.
An edge in TA is denoted $s \overset{\varphi: a,C }{\longrightarrow} s'$, which means ``$\mathit{enb}(s,a) = \varphi$, $ \mathit{tran}(s,a) =s'$, and $ \mathit{res}(s, a) = C$'' where $C\subseteq Z$ (a subset of clocks).
We further simplify edge labels in common special cases:
\begin{itemize}
\item  $s\overset{\varphi:a}{\longrightarrow}s'$ abbreviates $s\overset{\varphi:a,\emptyset}{\longrightarrow}s'$ (no clock reset).
\item  $s\overset{a}{\longrightarrow}s'$ abbreviates $s\overset{\top:a}{\longrightarrow}s'$ (always enabled action $a$, no clock reset).
\item $s\overset{C}{\longrightarrow}s'$ abbreviates $s\overset{\top:\tau, C}{\longrightarrow}s'$   (always enabled anonymous action).
\end{itemize}

\begin{figure}[!thbp]
    \footnotesize\centering
\begin{tikzpicture}[shorten >=1pt,node distance=1.4cm,on grid,>={Stealth[round]},
    actionnode/.style={circle, draw, minimum size=1mm},
    every state/.style={draw=blue!50,very thick,fill=blue!20, inner sep=0cm}]
\node[initial left, state] (s_0){$s_0$};
\node[state] (s_c) [right=of s_0, xshift=.8cm] {$s_\mathtt{c}$};
\node[state] (s_p^m) [right=of s_c, yshift=1cm] {$s_\mathtt{p}^\mathtt{1}$};
\node[state] (s_s) [right=of s_c, yshift=-1cm] {$s_\mathtt{s}^\mathtt{1}$};
\node[state] (s_p^a) [right=of s_p^m] {$s_\mathtt{p}^\mathtt{2}$};
\node[state] (s_s^m) [right=of s_s] {$s_\mathtt{s}^\mathtt{2}$};
\node[state] (s_s^a) [right=of s_s^m] {$s_\mathtt{s}^\mathtt{3}$};
\node[state] (s_r) [right=of s_p^a] {$s_\mathtt{r}$};
\path[->] 
(s_0) edge node[above]{$\mathtt{conf}(t, \iota)$ } (s_c) 
(s_c) edge node[]{$z<t:\mathtt{obs}$} (s_p^m) edge node[]{$z = t$} (s_s)
(s_p^m) edge node[below]{$\mathtt{act_p}$} (s_p^a)
(s_p^a) edge node[below] {$\mathtt{ex_p}$} (s_r)
(s_s) edge node[above]{$\mathtt{obs}$} (s_s^m)
(s_r) edge [bend right = 35] node[above]{$\{z\}$} (s_0)
(s_s^m) edge node[above]{$\mathtt{act_s}$} (s_s^a)
(s_s^a) edge [bend right = 40] node[above]{$\neg \varphi:\mathtt{re_s}$} (s_s) edge node[]{$\varphi:\mathtt{ex_s}$} (s_r)
;
\end{tikzpicture}
\caption{$\ta_{\mathit{sw}0}$: Base Model of a Real-time Switch}\label{fig:ta_sw}
\end{figure}

Figure~\ref{fig:ta_sw} depicts an exemplar TA $\ta_{\mathit{SW}0}$ which represents a base model of a real-time switch between a primary controller (e.g., a DRL agent) and one or more secondary controllers (e.g., rule-based controllers). 
For clarity, an abstract action $\mathtt{conf}(t,\iota)$ is used to represent non-deterministic design choices, which intuitively means 
``once the time delay exceeds $t$, switch to the alternative controller $\iota$.''
%
The initial state $s_0$ is the only choice state of $\ta_{\mathit{SW}0}$, where the switch follows a concrete design choice and transitions to $s_\mathtt{c}$.
From $s_\mathtt{c}$, the switch waits for an observation. If an observation ($\mathtt{obs}$) is received before the clock $z$ reaches $t$  (namely $z<t$), the switch transitions to $s_\mathtt{p}^1$ (the primary path). However, if the observation is not received when $z=t$, it immediately moves to $s_\mathtt{s}^1$ (the secondary path).
At $s_\mathtt{p}^1$, the switch calls the DRL agent to obtain an action ($\mathtt{act_p}$) and then exits to $s_\mathtt{r}$. 
At $s_\mathtt{s}^1$, upon receiving an observation ($\mathtt{obs}$), it calls the secondary controller and obtains an action (\rev{$\mathtt{act_s}$}).
Then, depending on a Boolean condition $\varphi$ (which may represent the road condition), it exits (with transition label ``$\varphi:\mathtt{ex_s}$'') to $s_\mathtt{r}$ or returns (with transition label ``$\neg \varphi:\mathtt{re_s}$'') to $s_\mathtt{s}^1$.
Finally, from $s_\mathtt{s}^1$, the switch returns to the initial state $s_0$, resetting the clock $z$ to zero.

\subsection{Incremental Design by Synchronised Composition}\label{sec:syn_comp}

Incremental design is a crucial concept in model-driven development. 
For TAs, incremental design can be achieved via the standard synchronised composition of TAs.
\begin{definition}[Synchronised composition]
Given $\ta_i =(S_i, s_{i0}, A_i, \mathit{inv}_i, \mathit{enb}_i, \mathit{tran}_i, \mathit{res}_i) $ where $i\in\{1,2\}$, the {synchronised composition} $\ta_1\| \ta_2$ is the tuple
$$\big( S_{1}\times S_{2}, (s_{10}, s_{20}), A_1\cup A_2, \mathit{enb}_{1,2}, \mathit{tran}_{1,2}, \mathit{res}_{1,2}\big) $$ where $\rev{\mathit{inv}_{1,2} (s_1.s_2)} = \mathit{inv}_{1} (s_1)\wedge \mathit{inv}_{2} (s_2)$, and  $\mathit{enb}_{1,2}$, $\mathit{tran}_{1,2}$ and $\mathit{res}_{1,2}$ are formalised as follows: (i) If $a\in A_1\cap A_2$ (common action), then
\begin{itemize}
\item $\mathit{enb}_{1,2} (s_1, s_2, a) = \mathit{enb}_1(s_1, a) \wedge \mathit{enb}_2( s_2, a)$;
\item $\mathit{tran}_{1,2} (s_1, s_2, a) = (\mathit{tran}_1(s_1, a) , \mathit{tran}_2( s_2, a))$;
\item $\mathit{res}_{1,2} (s_1, s_2, a) = \mathit{res}_1(s_1, a) \cup \mathit{res}_2( s_2, a)$.
\end{itemize}
(ii) If $a\in A_1\backslash A_2$ (action unique to $\ta_1$), then
\begin{itemize}
\item $\mathit{enb}_{1,2} (s_1, s_2, a) = \mathit{enb}_1(s_1, a) $;
\item $\mathit{tran}_{1,2} (s_1, s_2, a) = (\mathit{tran}_1(s_1, a) , s_2)$;
\item $\mathit{res}_{1,2} (s_1, s_2, a) = \mathit{res}_1(s_1, a)$.
\end{itemize}
(iii) Otherwise (action unique to $\ta_2$),
\begin{itemize}
\item $\mathit{enb}_{1,2} (s_1, s_2, a) = \mathit{enb}_2(s_2, a) $;
\item $\mathit{tran}_{1,2} (s_1, s_2, a) = (s_1, \mathit{tran}_2(s_2, a))$;
\item $\mathit{res}_{1,2} (s_1, s_2, a) = \mathit{res}_2(s_2, a)$.
\end{itemize}
\end{definition}

\begin{figure}[!thbp]
    \footnotesize\centering
\begin{subfigure}[b]{\subfigtextwidth}
\centering
\begin{tikzpicture}[shorten >=1pt,node distance=2.29cm,on grid,>={Stealth[round]},
    actionnode/.style={circle, draw, minimum size=1mm},
    every state/.style={draw=blue!50,very thick,fill=blue!20, inner sep=0cm}]
\node[initial, state] (l_0){$s_{\mathtt{ts}}^0$};
\node[state] (l_1) [right of=l_0] {$s_{\mathtt{ts}}^1$};
\node[] (cdots)[right of=l_1] {$\cdots$};
\node[state] (l_12) [right of=cdots] {$s_{\mathtt{ts}}^n$};
\path[->] 
(l_0) edge node[above]{{$\mathtt{act_p}$ or $\mathtt{act_s}$}} (l_1) 
(l_1) edge node[above]{{$\mathtt{act_p}$ or $\mathtt{act_s}$}}(cdots) 
(cdots) edge node[above]{{$\mathtt{act_p}$ or $\mathtt{act_s}$}} (l_12)
;
\end{tikzpicture}
\caption{${\ta}_\mathit{TS}$: Timestep Counter ($n$ the maximum timestep)\label{fig:ra1}}
\end{subfigure}
\newline
\begin{subfigure}[b]{\subfigtextwidth}
\centering
\begin{tikzpicture}[shorten >=1pt,node distance=1.4cm,on grid,>={Stealth[round]},
    actionnode/.style={circle, draw, minimum size=1mm},
    every state/.style={draw=blue!50,very thick,fill=blue!20, inner sep=0cm}]
\node[initial, state] (l_0){$s_{\mathtt{cc}}^0$};
\node[state] (l_1) [right of=l_0] {$s_{\mathtt{cc}}^1$};
\node[] (cdots)[right of=l_1] {$\cdots$};
\node[state] (l_12) [right of=cdots] {$s_{\mathtt{cc}}^k$};
\path[->] 
(l_0) edge node[above]{$\mathtt{re_s}$} (l_1) 
(l_1) edge node[above]{$\mathtt{re_s}$} (cdots) 
(cdots) edge node[above]{$\mathtt{re_s}$} (l_12)
(l_12) edge[loop right] node[right]{$\mathtt{re_s}$ or $\mathtt{ex_s}$}(l_12)
;
\end{tikzpicture}
\caption{${\ta}_\mathit{CC}$: Consecutive Call Counter ($k$ the minimum number of calls)  \label{fig:ra2}}
\end{subfigure}
\newline
\begin{subfigure}[b]{\subfigtextwidth}
\centering
\begin{tikzpicture}[shorten >=1pt,node distance=1.8cm,on grid,>={Stealth[round]},
    actionnode/.style={circle, draw, minimum size=1mm},
    every state/.style={draw=blue!50,very thick,fill=blue!20, inner sep=0cm}]
\node[initial, state] (l_0){$s_{\mathtt{mr}}^0$};
\node[state] (l_00)[right of=l_0]{$s_{\mathtt{mr}}^1$};
\node[state] (l_1) [right of=l_00,  yshift=1cm] {$s_{\mathtt{mr}}^\mathtt{yes}$};
\node[state] (l_2) [right of=l_00, yshift=-1cm] {$s_{\mathtt{mr}}^\mathtt{no}$};
\path[->] 
(l_0) edge node[above] {$\mathtt{obs}$}(l_00)
(l_00) edge [bend right = 0] node[below]{$\varphi_{\mathtt{mr}}$} (l_1) edge [bend left = 0] node[above]{$\neg \varphi_{\mathtt{mr}}$} (l_2)
(l_1) edge [bend right = 10]  node[above]{$\mathtt{act_p}$ or $\mathtt{act_s}$ } (l_0)
(l_2) edge [bend left = 10]  node[below]{$\mathtt{act_p}$ or $\mathtt{act_s}$} (l_0)
;
\end{tikzpicture}
\caption{${\ta}_\mathit{MR}$: Metric Recorder ($\varphi_{\mathtt{mr}}$ a proposition of the metric) 
\label{fig:ra3}}
\end{subfigure}
\caption{TAs for  Synchronised Composition with $\ta_{\mathit{SW}0}$}
\label{fig:ta_examples}
\end{figure}

Figure~\ref{fig:ta_examples} presents three additional TAs. 
$\ta_\mathit{TS}$ (Figure~\ref{fig:ra1}) models a timestep counter, representing the progression of simulation steps up to a maximum 
$n$. $\ta_\mathit{CC}$ (Figure~\ref{fig:ra2}) models a consecutive call counter, tracking the minimum number of consecutive calls $k$ to the secondary controller.
$\ta_\mathit{MR}$ (Figure~\ref{fig:ra3}) acts as a metric recorder, capturing whether a proposition $\varphi_{\mathtt{mr}}$ of the metric is satisfied or not.
\rev{In practice, $\mathit{MR}$ can be any custom metric. For example, if $\mathit{MR}$ is a continuous variable, then $\varphi_{\mathtt{mr}}$ can be the assertion $\mathit{MR}\leq c$ where $c$ is a prescribed constant.}
These TAs can be composed with the base model $\ta_{\mathit{SW}0}$ to refine their behaviour by introducing additional statefulness and conditions related to some metrics. 
The synchronised composition $\ta_{\mathit{SW}1}=\ta_{\mathit{SW}0} \| \ta_{\mathit{TS}}  \| \ta_{\mathit{CC}}\|\ta_{\mathit{MR}}$ is the resulted incremental design model for the switch.
\rev{According to the semantics of synchronised composition, if $\ta_{\mathit{SW}1}$ performs $\mathtt{ex_p}$, $\{z\}$ or $\tau$, it can run on its own. But if it performs any other action, it must synchronise with other TAs whose action sets contain the same action.}

\subsection{Syntactic-Converted MDP as Analysis Model}\label{sec:syntactic_mdp}

We first recall the definition of MDPs, which serve as the formal analysis model \orev{in our approach}.
\begin{definition}[Markov decision process]
A Markov decision process (MDP) is a tuple $\mdp = (S', s_0, A', \mathit{prob})$ where 
\begin{itemize}
\item $S'$ is a non-empty set of states;
\item $s_0\in S'$ is the initial state;
\item $A'$ is a non-empty set of actions;
\item $ \mathit{prob}:S'\times A'\times S '\to [0,1]$ is a {probabilistic transition function} s.t.\ 
$ \sum_{s'\in S'} \mathit{prob}(s,a,s')\in \{0,1\}$.
\end{itemize}
\end{definition}

Similar to $\ta$, we define the set of enabled actions $A'(s)$ at a state $s$ for $\mdp$ as a subset of $A'$ such that $a\in A'(s)$ if and only if $\mathit{prob}(s,a,s')>0$ for some $s'$.  
Randomised dynamic design choices for the switch (c.f., Section~\ref{sec:determine_swtich_config}) are formalised as (memoryless) schedulers in $\mdp$. 
\begin{definition}[Scheduler]
A (memoryless) \emph{scheduler} (aka.\ \emph{policy}) for $\mdp$ is the function $\pi$ that maps each $s\in S'$ to a probability distribution over $A'(s)$, namely, $\pi(s)(a) \geq 0$ for all $a\in A'(s)$ and $\sum_{a\in A'(s)} \pi(s)(a) = 1$.
Let $\mathit{Sch}(\mdp)$ denote \emph{the set of (memoryless) schedulers} for $\mdp$.
\end{definition}

The clear distinction between underdetermined enabling conditions and non-deterministic actions allows an \textit{ad hoc} conversion of $\ta$ (design model) into $\mdp$ (analysis model). In the conversion, we use an anonymous action $\tau$ to mask actions at non-choice states, and replace all enabling conditions by (underdetermined) transition probabilities. It is noteworthy that this conversion differs from the standard TA semantics, in which $\ta$ is interpreted as an (infinite) transition system (with clock valuations as states).

We interpret each guard at a branching state as a random variable.
We use ${Pr}(\varphi| s)$ where $s\in S_\mathit{BR}$ (a branching state) and $\varphi=\mathit{enb}(s,a)$ for some $a$ to denote the \emph{probability} that $\varphi$ is satisfied (and thus $a$ is enabled).
Given a TA $\ta=(S, s_{0}, A, \mathit{enb}, \mathit{tran}, \mathit{res})$, a \emph{syntactic-converted MDP} $\mdp = (S', s_0, A', \mathit{prob})$ is generated by the following procedure:
\begin{itemize}
\item Let $S' = S$ and $A' = \emptyset$ initially. 
\item For all $s\in S$:
	\begin{itemize}
	\item If $s\in S_\mathit{CH}$  (choice state) or $|A(s)| =1$, then:
		\begin{itemize}
		\item Let $A'(s) = A(s)$; 
		\item Let $\mathit{prob}(s, a, s') = 1$ for all $a\in A(s)$ and $s'=\mathit{tran}(s,a)$;
		\end{itemize}	
	\item If $s\in S_\mathit{BR}$ (branching state), then for each $ a\in A(s)$: 
		\begin{itemize}
		\item Add a \emph{new} state $\tilde{s}$ to $S'$, $\tau$ to $A'(s)$, and $a$ to $A'(\tilde{s})$;
		\item Let $\mathit{prob}(s, \tau, \tilde{s}) = \mathit{Pr}(\varphi|s)$ 
		where $\varphi= \mathit{enb}(s,a)$;
		\item Let $\mathit{prob}(\tilde{s}, a, s') =1$ where $s'=\mathit{tran}(s,a)$.
		\end{itemize}
	\end{itemize}
\end{itemize}
Figure~\ref{fig:mdp_sw} depicts a syntactic-converted MDP $\mdp_{\mathit{SW}0}$ for the base TA model $\ta_{\mathit{SW}0}$ (where 
$\mathtt{conf}(t, \iota)$ represents multiple non-deterministic actions). \rev{Note that three new states $\tilde{s}_\mathtt{p}^1$, $\tilde{s}_\mathtt{s}^1$ and $\tilde{s}_\mathtt{r}$ are added to $\mdp_{\mathit{SW}0}$ in the conversion process.}

\begin{figure}[!thbp]
    \footnotesize\centering
\begin{tikzpicture}[shorten >=1pt,node distance=1.4cm,on grid,>={Stealth[round]},
    actionnode/.style={circle, draw, minimum size=1mm},
    every state/.style={draw=blue!50,very thick,fill=blue!20, inner sep=0cm}]
\node[initial left, state] (s_0){$s_0$};
\node[state] (s_c) [right=of s_0, xshift=.8cm] {$s_\mathtt{c}$};
\node[state] (s_p^11) [right=of s_c, yshift=1cm,  xshift=.8cm] {$\tilde{s}_\mathtt{p}^\mathtt{1}$};
\node[state] (s_p^1) [right=of {s_p^11}] {$s_\mathtt{p}^\mathtt{1}$};
\node[state] (s_s^1) [right=of s_c, yshift=-1cm, xshift=.8cm] {$s_\mathtt{s}^\mathtt{1}$};
\node[state] (s_s^11) [right=of s_s^1, yshift=1cm] {$\tilde{s}_\mathtt{s}^\mathtt{1}$};
\node[state] (s_p^2) [right=of s_p^1] {$s_\mathtt{p}^\mathtt{2}$};
\node[state] (s_s^2) [right=of s_s^1] {$s_\mathtt{s}^\mathtt{2}$};
\node[state] (s_s^3) [right=of s_s^2] {$s_\mathtt{s}^\mathtt{3}$};
\node[state] (s_rr) [right=of s_s^3] {$\tilde{s}_\mathtt{r}$};
\node[state] (s_r) [right=of s_p^2] {$s_\mathtt{r}$};
\path[->] 
(s_0) edge node[above]{$\mathtt{conf}(t, \iota)$ } (s_c) 
(s_c) edge node[]{${Pr}(z<t|s_\mathtt{c})$} (s_p^11) edge node[]{${Pr}(z\geq t|s_\mathtt{c})$} (s_s^1)
(s_p^11) edge node[below]{$\mathtt{obs}$} (s_p^1)
(s_p^1) edge node[below]{$\mathtt{act_p}$} (s_p^2)
(s_p^2) edge node[below] {$\mathtt{ex_s}$} (s_r)
(s_s^1) edge node[below]{$\mathtt{obs}$} (s_s^2)
(s_r) edge [bend right = 25] node[above]{$\{z\}$} (s_0)
(s_s^2) edge node[below]{$\mathtt{act_s}$} (s_s^3)
(s_s^3) edge [] node[]{${Pr}(\neg \varphi|s_\mathtt{s}^3)$} (s_s^11) 
(s_s^11) edge node[]{$\mathtt{re_s}$} (s_s^1)
(s_s^3) edge node[above]{${Pr}(\varphi|s_\mathtt{s}^\mathtt{3})$} (s_rr)
(s_rr) edge node[]{$\mathtt{ex_s}$} (s_r)
;
\end{tikzpicture}
\caption{$\mdp_{\mathit{SW}0}$: Syntactic-converted MDP of $\ta_{\mathit{SW}0}$}\label{fig:mdp_sw}
\end{figure}

The syntactic conversion does not resolve transition probabilities for branching states.
Those transition probabilities are probabilities that the corresponding guards (i.e., clock-based and Boolean conditions) are enabled.
For example, in $\mdp_{\mathit{SW}0}$, the probability of transitioning from $s_\mathtt{c}$ to $\tilde{s}_\mathtt{p}^1$ is ${Pr}(z< t|s_\mathtt{c})$, and that of transitioning from $s_\mathtt{s}^3$ to $\tilde{s}_\mathtt{r}$ is ${Pr}(\varphi | s_\mathtt{s}^3)$.
To determine them, we adopt the following two simple estimation methods:
\begin{enumerate}
\item Probabilities can be estimated by assuming a specific probability distribution. For example, if delay times have a known probability distribution, ${Pr}(z< t|s_\mathtt{c})$ and ${Pr}(z\geq t|s_\mathtt{c})$ in $\mdp_{\mathit{SW}0}$ can be calculated.
\item Statistics can be inferred by simulating the system in an environment and counting transition frequencies. For example, observing whether the exit condition $b$ in $\ta_{\mathit{SW}0}$ is satisfied can estimate ${Pr}(\varphi|s_\mathtt{s}^3)$ in $\mdp_{\mathit{SW}0}$ based on transition frequencies from $s_\mathtt{s}^3$ to  $\tilde{s}_\mathtt{s}^1$ and $\tilde{s}_\mathtt{r}$.
\end{enumerate}
Discussion of more complex methods for learning MDPs from sampling processes (such as \cite{li2006towards,allen2021learning}) is beyond the scope of this paper.

\rev{We employ TAs rather than MDPs as primary switch design models because of two reasons. First, MDPs lack the notion of clocks, which is essential for expressing conditions such as timed guards and timeouts in real-time system design. Second, the transition matrix of an MDP can only be determined after the design model has been constructed (including the separation of choice and branching states), rendering it unsuitable for the initial design.}

\subsection{Design Choice Analysis}\label{sec:formal_problem}

In our approach, the performance objectives of a TA-based switch design model are expressed by using rewards within its syntactic-converted MDP.
\begin{definition}[Reward function]\label{def:reward_function}
A \emph{reward function} for $\mdp = (S', s_0, A', \mathit{prob})$ is a function \rev{$\rho: S'\times A'\to \mathbb{Q}^+$} where $S'$ is the state space of $\mdp$. Let $\vect{\rho}= (\rho_1,\ldots,\rho_m)$ denote a \emph{reward vector function} for $\mdp$ for any $m\in \mathbb{N}$.
\end{definition}

Our primary goal of analysing $\mdp$  lies in finding a scheduler to yield optimal rewards with respect to different $\rho$.
Without loss of generality, we aim to compute optimal schedulers in \emph{multi-objective MDPs}. 
Formally, under a given scheduler $\pi$, starting from the initial state $s_0$, a trajectory (i.e., an alternating sequence of states and actions) $s_0a_0s_1a_1s_2a_2\ldots$ can be generated randomly, which yields a reward sequence \rev{$\rho(s_0,a_0)\rho(s_1,a_1)\rho(s_2,a_2)\ldots$}
\begin{definition}[Total reward]\label{def:total_reward}
Given a scheduler $\schr$ and a reward function $\rho$ for $\mdp$,  the \emph{(expected) total reward}, denoted $\tolrew{\rho}{\schr}$,  is 
the {expected sum} of these (random) rewards, \rev{namely $\rho(s_0,a_0)+\rho(s_1,a_1)+\rho(s_2,a_2)+\ldots$} Given a reward vector function $\vect{\rho}$, let $\tolrew{\vect{\rho}}{\schr} =(\tolrew{\rho_1}{\schr}\ldots, \tolrew{\rho_m}{\schr})$ denote a \emph{total reward vector}.\footnote{Rigorously speaking, as no discount factor is considered, we need to assume the existence of these total rewards (aka \emph{reward finiteness}). This assumption can be verified using the standard reachability analysis in model checking.}
\end{definition}

\rev{Total rewards formalise the performance objectives for switch analysis. In practice, total rewards can characterise a broad range of intuitive performance metrics. }
Let $\mdp_{\mathit{SW}1} $ be the syntactic-converted MDP of 
$\ta_{\mathit{SW}1}=\ta_{\mathit{SW}0} \| \ta_{\mathit{TS}}  \| \ta_{\mathit{CC}}\|\ta_{\mathit{MR}}$.
Two examples of reward functions for $\mdp_{\mathit{SW}1} $ are defined as follows:
\begin{itemize}
\item $\rho_{\mathit{CTL}}$ for $\ta_{\mathit{SW}0}$ that assigns $1$ to $s_{\mathtt{p}}^1$ (entry to the primary path), $2$ to $s_{\mathtt{s}}^1$  (entry to the secondary path ) and $0$ to all other states. This intuitively specifies that ``the cost of running a secondary controller is twice the cost for the primary controller''.
\item  $\rho_\mathit{MR}$ for $\ta_{\mathit{MR}}$ that assigns $1$ to $s_{\mathtt{mr}}^\mathtt{yes}$ (metric satisfied) and $0$ to all other states, thereby tracking \rev{``satisfaction of the proposition $\varphi_\mathtt{mr}$''}.
\end{itemize}
\rev{

We now explain how total rewards characterise three basic categories of performance metrics (i.e., variables): Boolean, discrete and continuous variables. 
\begin{itemize}
\item (\emph{Expected value for discrete metric}) The controller cost (CTL) is a discrete metric. The total reward  for $\rho_{\mathit{CTL}}$ under $\schr$ (denoted $\tolrew{\mathit{CTL}}{\schr}$) means the \emph{expected} controller cost.
\item (\emph{Probability for Boolean metric}) Suppose MR is a Boolean variable such as ``on lane''. We define that  $\varphi_\mathtt{mr}$ is true if and only if the vehicle is on lane.  The total reward  for $\rho_{\mathit{MR}}$ under $\schr$ (denoted $\tolrew{\mathit{MR}}{\schr}$)  means the on-lane \emph{probability}.
\item (\emph{Percentile rank for continuous metric}) Suppose MR is a continuous variable such as ``time to collision'' (TTC). We specify a TTC value $c$, and define that $\varphi_\mathtt{mr}$ is true if and only if the TTC is not larger than $q$. In this way, the total reward $\tolrew{\mathit{MR}}{\schr}$ means the TTC's \emph{percentile rank of score} $q$;  in other words, $q$ is the $\tolrew{\mathit{MR}}{\schr}$-th percentile of TTC. Moreover, if we specify multiple scores $q_1,\ldots q_k$, then the corresponding total rewards $\tolrew{\mathit{MR1}}{\schr},\ldots,\tolrew{\mathit{MRk}}{\schr}$ (as percentile ranks) approximate the continuous distribution of TTC.
\end{itemize}
}
Leveraging the existing methods of multi-objective model checking, we can conduct the following three kinds of analysis of total rewards:
\begin{itemize}
\item (\emph{Direct evaluation}) Given a pre-defined scheduler $\schr$, compute the total reward vector $\tolrew{\vect{\rho}}{\schr}$.
\item (\emph{Single-objective optimisation}) For each $\rho_i$, find an optimal scheduler $\schr$ that minimises $\tolrew{\rho_i}{\schr}$.
\item (\emph{Achievability query}) Given a threshold vector $(v_1, \ldots, v_m)$, find a scheduler $\schr$ which can achieve $\tolrew{\rho_1}{\schr}\leq v_1$, \ldots and $\tolrew{\rho_m}{\schr}\leq v_m$ simultaneously.
\end{itemize}
In addition, we introduce a new kind of analysis, called \emph{convex query}.
Let $\los: \mathbb{R}^m \mapsto \mathbb{R}$ be a convex function that characterises the \emph{trade-off} of objectives, namely, $\los(\tolrew{\vect{\rho}}{\schr})$ formalises their overall cost under a given scheduler $\schr$. Moreover, let $[\vect{l}, \vect{u}]$ be an $m$-dimensional (possibly unbounded) box that specifies the \emph{hard constraint} for these objectives.
The convex query is expressed as follows:
\begin{equation}\label{eq:ccq0}
\begin{aligned}
& \textit{Find an optimal scheduler $\schr^*\in\mathit{Sch}(\mdp)$}\\
& \quad \textit{ such that $\tolrew{\vect{\rho}}{\schr^*}\in [\vect{l}, \vect{u}]$ and  $\los(\tolrew{\vect{\rho}}{\schr^*})$ is minimised.}
\end{aligned}
\end{equation}
For example, for $\mdp_{\mathit{SW}1}$ we can consider the following convex query:
\begin{equation}\label{eq:ccq-sw}
\begin{aligned}
& \textit{Find an optimal scheduler $\schr^*\in\mathit{Sch}(\mdp_{\mathit{SW}1})$}\\
& \quad \textit{ such that $\tolrew{{\rho_{\mathit{CTL}}}}{\schr^*} \leq e_1$, $ \tolrew{{\rho_{\mathit{MR}}}}{\schr^*}\leq e_2$} \textit{ and $(\tolrew{{\rho_{\mathit{CTL}}}}{\schr^*}-d_1)^2 + (\tolrew{{\rho_{\mathit{MR}}}}{\schr^*}-d_2)^2$ is minimised.}
\end{aligned}
\end{equation}
\rev{Informally speaking, Query~\eqref{eq:ccq-sw} means:
\emph{Finding an optimal scheduler (i.e.\ design configuration)  such that (i) both the expected controller cost and the MR probability must not exceed their hard thresholds ($e_1, e_2$) and (ii) the deviation from their soft thresholds ($d_1, d_2$) are minimised.
}}

Our framework provides an expressive mechanism to reason about performance objectives for real-time switch design.
The formal technique to solve Problem~\eqref{eq:ccq0} is detailed in Section~\ref{sec:cq}.

\section{Convex Query on Multi-Objective MDP}\label{sec:cq}
In this section, we first establish the necessary geometric definitions for our technical approach (Section~\ref{sec:geometric}) and then review the basic querying methods for multi-objective MDPs (Section~\ref{sec:existing_queries}).
Building on this, we formulate the convex query problem for multi-objective MDPs and present our novel algorithm to solve this problem (Section~\ref{sec:convex_query}). We then describe an iterative procedure to handle a critical step in this algorithm (Section~\ref{sec:parallel_pvi}). \rev{Finally, we briefly describe the construction of random schedulers (Section~\ref{sec:scheduler_generation}).} For readability, the formal proofs of our main theorems 
are provided in Appendix~\ref{sec:proofs}.

\subsection{Geometric Preliminaries}\label{sec:geometric}
\rev{Informally, a convex query of total rewards on multi-objective MDPs finds an optimal value within a feasible solution set. This solution set is usually a polytope in a high-dimensional Euclidean space (where the dimension equals the number of objectives). Below we present  geometric definitions and fundamental properties of polytopes which underpin our convex query method.}

For a point (i.e., vector) $\vect{v}
\in \mathbb{R}^n$  for some $n$, let $v_i$ denote the $i^\mathrm{th}$ element of $\vect{v}$.
A \emph{weight vector} $\vect{w}$ is a vector such that $w_i\geq 0$ and $\sum_{i=1}^n w_i=1$.
The \emph{dot product} of $\vect{v}$ and $\vect{u}$, denoted $\vect{v}\cdot \vect{u}$, is the sum $\sum_{i=1}^n v_iu_i$.
For a set $\Phi=\{\vect{v}_1, \ldots, \vect{v}_m\} \subseteq \mathbb{R}^n$,
a \emph{convex combination} in $\Phi$ is $\sum_{i=1}^m w_i \vect{v}_i$ for some weight vector $\vect{w}\in \mathbb{R}^m$.
The set of convex combinations of vectors in $\Phi$ is a \textit{polytope}, denoted $\mathbf{V}(\Phi)$ (i.e., V-representation of polytopes). 
We say $\vect{u}$ \emph{dominates (resp., strictly dominates) $\vect{u}'$ in the direction of $\vect{v}$} if $\vect{u}\cdot \vect{v}\geq \vect{u}'\cdot \vect{v}$ (resp., $\vect{u}\cdot \vect{v}>\vect{u}'\cdot \vect{v}$).
Let $\Psi\subseteq \mathbb{R}^n$. 
A vector $\vect{u}\in \Psi$ is \emph{Pareto optimal} if there is a direction such that $\vect{u}$ dominates all points in $\Psi$.
A \emph{Pareto curve} in $\Psi$ is the set of Pareto optimal vectors in $\Psi$. 
Given a set of point-direction pairs $\Lambda=\{(\vect{u}_i, \vect{v}_i)\}_{i=1}^k$ where $\vect{u}_i, \vect{v}_i\in \mathbb{R}^n$ for all $i$, we can also define a polytope $\mathbf{H}(\Lambda)=\{\vect{x}\in \mathbb{R}^n \mid \vect{u}_i\cdot\vect{v}_i\geq \vect{x}\cdot \vect{v}_i, 1\leq i\leq k\}$ (i.e., H-representation of polytopes). 
For convenience, let $\mathbf{V}(\emptyset)=\emptyset$ and $\mathbf{H}(\emptyset)=\mathbb{R}^n$.
%
\begin{lemma}[Separating and supporting hyperplane theorems \cite{boyd2004convex}]\label{lem:hyperplane}
    Let $\Psi\subseteq \mathbb{R}^n$ be a convex set of points. For any $\vect{v}\not\in \Psi$, there is a weight vector $\vect{w}$ such that $\vect{w}\cdot\vect{v}> \vect{w}\cdot\vect{x}$ for all $\vect{x}\in \Psi$. We say that $\vect{w}$ \emph{separates} $\vect{v}$ from $\Psi$.
    Also, for any $\vect{u}$ on the Pareto curve of $\Psi$, there is a weight vector $\vect{w}'$ such that $\vect{w}'\cdot\vect{u}\geq \vect{w}'\cdot\vect{x}$ for all $\vect{x}\in \Psi$. We call the set $\{\vect{x}\in\mathbb{R}^n\mid \vect{w}'\cdot\vect{x}=\vect{w}'\cdot\vect{u}\}$ a \emph{supporting hyperplane} of $\Psi$.
\end{lemma}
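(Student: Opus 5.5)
The plan is the standard projection argument, with one caveat stated up front. The lemma as written cannot hold for an arbitrary convex $\Psi$. A normal vector whose entries are nonnegative and sum to $1$ requires $\Psi$ to be downward closed, i.e.\ $\vect{x}\in\Psi$ and $\vect{y}\leq\vect{x}$ componentwise imply $\vect{y}\in\Psi$. Strict separation additionally requires $\vect{v}$ to lie outside the closure of $\Psi$. I will assume both: that $\Psi$ is closed, convex and downward closed. This is how the lemma is used later, since achievable sets of total-reward vectors in multi-objective MDPs are downward closed polytopes.

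\emph{Separation.} Let $\vect{p}$ be the Euclidean projection of $\vect{v}$ onto $\Psi$. It exists and is unique because $\Psi$ is closed, convex and nonempty. Set $\vect{c}=\vect{v}-\vect{p}\neq\vect{0}$.
\begin{enumerate}
\item The variational inequality for projections gives $\vect{c}\cdot(\vect{x}-\vect{p})\leq 0$ for all $\vect{x}\in\Psi$.
\item Hence $\vect{c}\cdot\vect{x}\leq\vect{c}\cdot\vect{p}<\vect{c}\cdot\vect{p}+\|\vect{c}\|^2=\vect{c}\cdot\vect{v}$ for all $\vect{x}\in\Psi$.
\item To see $\vect{c}\geq 0$, fix a coordinate $i$ and use downward closure: $\vect{p}-t\vect{e}_i\in\Psi$ for every $t>0$.
\item Applying step 2 to $\vect{x}=\vect{p}-t\vect{e}_i$ yields $\vect{c}\cdot\vect{p}-t c_i<\vect{c}\cdot\vect{v}$ for all $t>0$, which is impossible if $c_i<0$. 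So $c_i\geq 0$.
\item Since $\vect{c}\neq\vect{0}$, the sum $\sum_i c_i$ is strictly positive, and $\vect{w}=\vect{c}/\sum_i c_i$ is a weight vector separating $\vect{v}$ from $\Psi$.
\end{enumerate}

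\emph{Support.} Read literally, the paper's definition of Pareto optimality already hands us a direction $\vect{d}$ with $\vect{d}\cdot\vect{u}\geq\vect{d}\cdot\vect{x}$ for all $\vect{x}\in\Psi$. The only task is to upgrade $\vect{d}$ to a weight vector.
\begin{enumerate}
\item The downward-closure argument from step 3 above applies verbatim to $\vect{d}$, giving $d_i\geq 0$ for every $i$. Normalising gives $\vect{w}'$.
\item As a fallback, if one prefers the usual meaning of Pareto optimal (no point of $\Psi$ is at least as large in every coordinate and strictly larger in one), use a limiting argument instead. The points $\vect{v}_k=\vect{u}+\frac1k\mathbf{1}$ lie outside $\Psi$, because $\vect{v}_k$ strictly dominates $\vect{u}$ and $\Psi$ is downward closed.
\item Separation then gives weight vectors $\vect{w}_k$ with $\vect{w}_k\cdot\vect{v}_k>\vect{w}_k\cdot\vect{x}$ for all $\vect{x}\in\Psi$.
\item The simplex is compact, so a subsequence $\vect{w}_k$ converges to some weight vector $\vect{w}'$.
\item Passing to the limit gives $\vect{w}'\cdot\vect{u}\geq\vect{w}'\cdot\vect{x}$ for all $\vect{x}\in\Psi$, so $\{\vect{x}\mid\vect{w}'\cdot\vect{x}=\vect{w}'\cdot\vect{u}\}$ is a supporting hyperplane.
\end{enumerate}

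The main obstacle is not the convex analysis, which is textbook material \cite{boyd2004convex}. It is making the implicit hypotheses explicit and checking they hold wherever the lemma is invoked. In particular, one must confirm that the achievable sets built from the total-reward vectors $\tolrew{\vect{\rho}}{\schr}$ are closed and downward closed, or replace them by their downward closures, which leaves all Pareto-optimal points unchanged.
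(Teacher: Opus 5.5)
Your argument is sound for the amended statement. You are also right that the lemma as printed is false: for $\Psi=\{\vect{0}\}$ and $\vect{v}=-\mathbf{1}$, no weight vector separates $\vect{v}$ from $\Psi$, and strict separation needs closedness.

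The paper gives no proof at all; it only cites Boyd and Vandenberghe. Their separating and supporting hyperplane theorems give a nonzero normal of arbitrary sign, with strict separation of a point from a closed convex set. So the two routes genuinely differ. The citation needs only closedness (polytopes are closed) but yields signed directions. Your projection argument, together with the test points $\vect{p}-t\vect{e}_i$, buys nonnegative normals at the price of downward closure. Your compactness limit for the support part is also fine.

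Where the proposal goes wrong is the claim that downward closure is how the lemma is meant to be used here.
\begin{itemize}
\item In this paper $\Polytope{\vect{\rho}}$ is the set of total-reward vectors itself, and it is stated to be a \emph{bounded} polytope. So it is not downward closed.
\item Algorithm~\ref{alg:main-convex-query} normalises directions only by $\|\vect{w}\|_1=1$. For instance, $\vect{w}:=(\underline{\vect{v}}-\overline{\vect{v}})/\|\underline{\vect{v}}-\overline{\vect{v}}\|_1$ can have negative entries, as $\vect{w}_2$ and $\vect{w}_4$ in Figure~\ref{fig:iters} do.
\item For bounded $\Psi$, the weight-vector version of separation fails for every $\vect{v}\notin\Psi$ that is componentwise below some $\vect{x}\in\Psi$, since then $\vect{w}\cdot\vect{v}\leq\vect{w}\cdot\vect{x}$ for every weight vector $\vect{w}$.
\item Passing to the downward closure does not rescue these $\vect{v}$, because they belong to it.
\end{itemize}

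The repair consistent with the paper is to replace ``weight vector'' by ``nonzero direction'' and assume $\Psi$ is closed and nonempty. Your steps 1--2 then already finish the separation part with $\vect{w}=\vect{c}/\|\vect{c}\|_1$. The support part becomes literally the paper's definition of Pareto optimality.

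Two small points. In your fallback, $\vect{u}+\frac{1}{k}\mathbf{1}\notin\Psi$ follows from Pareto optimality of $\vect{u}$, not from downward closure. And normalising $\vect{d}$ requires $\vect{d}\neq\vect{0}$, which you should state.
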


\subsection{Multi-Objective MDP: Basic Queries}\label{sec:existing_queries}
    Let $\Polytope{\vect{\rho}}^\mdp = \{\tolrew{\vect{\rho}}{\schr}\in \mathbb{R}^m \mid \schr\in\mathit{Sch}(\mdp)\}$ \rev{where $m=|\vect{\rho}|$. In words, $\Polytope{\vect{\rho}}^\mdp$ is the set of \rrev{$m$}-dimensional total reward vectors that can be obtained by possible schedulers of $\mdp$ (which is associated with reward vector function $\vect{\rho}$}).
As $\mdp$ is clear in the context,  we abbreviate $\Polytope{\vect{\rho}}^\mdp$ as $\Polytope{\vect{\rho}}$.
The following lemma is a fundamental property for multi-objective MDP model checking.

\begin{lemma}[Polytope for multi-objective MDP \cite{Etessami2008,Forejt2011}]
 $\Polytope{\vect{\rho}}$ is a bounded polytope with finitely many faces.
\end{lemma}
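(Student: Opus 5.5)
The plan is to show that $\Polytope{\vect{\rho}}$ is the convex hull of the finitely many total reward vectors induced by \emph{deterministic} memoryless schedulers. Under the standing reward-finiteness assumption (footnote to Definition~\ref{def:total_reward}), this set is the V-representation $\mathbf{V}(\Phi)$ of a bounded polytope, and any such polytope has finitely many faces.

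The first step is to fix the finite set $\Phi=\{\tolrew{\vect{\rho}}{\sigma}\mid \sigma \text{ deterministic memoryless}\}$. It has at most $\prod_{s\in S'}|A'(s)|$ elements, and each element is finite by reward finiteness.

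The second step establishes $\Polytope{\vect{\rho}}\subseteq \mathbf{V}(\Phi)$. For a memoryless randomised $\schr$, I would write the total reward through expected state-action frequencies $x_{s,a}=\sum_{t}\Pr^{\schr}(s_t=s,a_t=a)$. These satisfy the linear flow constraints
\[
\sum_{a}x_{s,a}=\mathbb{1}[s=s_0]+\sum_{s',a'}x_{s',a'}\,\mathit{prob}(s',a',s),
\]
and $\tolrew{\rho_i}{\schr}=\sum_{s,a}\rho_i(s,a)x_{s,a}$ is a linear image of $\vect{x}$. The feasible set of the flow constraints is a polyhedron. Its vertices correspond to deterministic memoryless schedulers, which is the standard linear-programming fact for transient MDPs. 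Hence every feasible $\vect{x}$, and so every achievable reward vector, is a convex combination of vectors in $\Phi$.

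The third step establishes $\mathbf{V}(\Phi)\subseteq\Polytope{\vect{\rho}}$. Given a convex combination of frequency vectors, the combined $\vect{x}$ still satisfies the flow constraints. I would define $\schr(s)(a)=x_{s,a}/\sum_{b}x_{s,b}$ wherever the denominator is positive, and choose arbitrarily elsewhere. A uniqueness argument for the linear system then shows that this $\schr$ reproduces $\vect{x}$, so it attains the given reward vector.

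The main obstacle is the vertex/feasibility argument in the unbounded, undiscounted setting. There may be end components in which all rewards are zero, so the frequencies $x_{s,a}$ can be infinite even though the rewards are finite. I would first collapse such zero-reward end components, as in \cite{Etessami2008,Forejt2011}. After that step the frequencies are finite, and the LP correspondence and the uniqueness claim go through. Finally, bounded polytope and finitely many faces follow immediately from $\mathbf{V}(\Phi)$ with $\Phi$ finite, via Minkowski--Weyl.
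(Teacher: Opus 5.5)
The paper gives no argument of its own; it only cites \cite{Etessami2008,Forejt2011}. Your occupation-measure route is the standard one from that literature, and the inclusion $\Polytope{\vect{\rho}}\subseteq\mathbf{V}(\Phi)$ is fine once end components are handled. The gap is in your third step, at the point where you ``first collapse'' the zero-reward end components. Collapsing changes the scheduler class. In the collapsed MDP a memoryless scheduler may randomise at the collapsed state between an exit and the new ``stay'' action. In $\mdp$ this means deciding once, on entry, whether to remain in the end component forever, and a memoryless scheduler of $\mdp$ cannot do that. So the LP correspondence shows that memoryless schedulers of the \emph{collapsed} MDP fill $\mathbf{V}(\Phi)$, not that $\mathit{Sch}(\mdp)$ does.

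No argument can close this step, because with memoryless schedulers the statement is false in general. Let $S'=\{s_0,t\}$, where at $s_0$ action $a$ loops back to $s_0$ and action $b$ moves to $t$. Let $t$ be absorbing with one action, and set $\rho(s_0,b)=1$ with all other rewards $0$; every total reward is then at most $1$. If $\schr(s_0)(a)=p<1$, then $t$ is reached almost surely and $\tolrew{\rho}{\schr}=1$, while $p=1$ gives $0$. Hence $\Polytope{\rho}=\{0,1\}$, whereas $\mathbf{V}(\Phi)=[0,1]$; the value $1/2$ needs a single initial coin flip. Working directly in $\mdp$, your normalisation fails for the same reason. The staying scheduler has $x_{s_0,a}=\infty$, so any mixture giving it positive weight normalises to $\schr(s_0)(a)=1$.

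To get a correct statement you need one more ingredient, and there are two options.
\begin{itemize}
\item Let the polytope range over general (history-dependent, randomised) schedulers, which is the setting of the cited convexity results. Then $\mathbf{V}(\Phi)\subseteq\Polytope{\vect{\rho}}$ is immediate: pick the deterministic memoryless scheduler $\sigma_j$ with probability $\lambda_j$ at the start. For the reverse inclusion, run your Step 2 in the collapsed MDP. Map each of its deterministic memoryless schedulers back to a deterministic memoryless scheduler of $\mdp$ that walks, at zero cost, to the chosen exit inside the end component or stays there.
\item Keep memoryless schedulers, but assume that no state of an end component has an enabled action leaving it. This appears to hold for the paper's composed switch models, whose bounded timestep counter leaves only terminal self-loops. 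Then no collapse is needed, since states outside end components are transient under every scheduler, and your frequency, vertex and normalisation arguments go through verbatim on those states.
\end{itemize}
In either case, state explicitly that $S'$ and $A'$ are finite, which your bound on $|\Phi|$ silently uses.
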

The above proposition states a fundamental property for multi-objective MDPs, which underpins the multi-objective queries. 
\rev{For example, the set of total reward vectors for $\mdp_{\mathit{SW}1}$ is a polygon (i.e., 2-dimensional polytope). The shape of the polytope $\Polytope{\vect{\rho}}$ may be extremely complex, as the worst-case number of its vertices and faces can be exponential in the number of states in the MDP. Multi-objective model checking aims to find out optimal points in $\Polytope{\vect{\rho}}$ for given queries \emph{in an efficient and scalable manner}.}

Before presenting our novel method of convex queries, we recall the three basic types of multi-objective queries~\cite{Forejt2012}.\footnote{A discussion of other multi-objective MDP queries is presented in Section~\ref{sec:related_work:mopmc}.}
Let $\vect{t} \in \mathbb{R}^m$ be a given point, which is used a threshold.
(i) An \textit{achievability query}, which is considered most basic among all queries,  decides whether there exists $\vect{r}\in \Polytope{\vect{\rho}}$ such that either $r_i\leq t_i$ or $r_i\geq t_i$ for all $1\leq i \leq m$.
In words, achievability queries can only check hard constraints. 
(ii)  A \textit{numerical (aka quantitative) query} maximises (or minimises) $r_1$ such that $\vect{r}\in\Polytope{\vect{\rho}}$ and $r_i\leq t_i$ or $r_i\geq t_i$ for all $2\leq i\leq m$. 
Thus, numerical queries can only express constrained optimisation of one objective.
(iii) A \textit{Pareto query} computes the Pareto front of $\Polytope{\vect{\rho}}$, which is computationally expensive and only practical when the multi-objective dimension is very small (e.g., $\leq 3$). 
As our main motivation is computing an optimal scheduler to determine switch design choices, we do not consider Pareto queries.

\subsection{Convex Query}\label{sec:convex_query}
As discussed in Section~\ref{sec:formal_problem}, we use a convex function $\los: \mathbb{R}^m\mapsto \mathbb{R}$ to characterise the total reward trade-off with respective to $\vect{\rho}$ within $\mdp$. For technical reasons, we assume that $\los$ has a continuous gradient on $\mathbb{R}^m$.
Recall that $[\vect{l}, \vect{u}] = [l_1, u_1]\times \ldots\times [l_m, u_m]$ is a possibly unbounded box in $\mathbb{R}^m$ 
(note that to abuse notations, if $l_i = -\infty $ or $u_i = +\infty$ then the interval $[l_i, u_i]$ is an unbounded open interval).
We now present the convex query definition, which is a formulation of Query~\eqref{eq:ccq0}.
\begin{definition}[Convex query]
   A \emph{convex query} on $\mdp$ (associated with $\vect{\rho}$) with respect to a loss function $\los$ and a lower (resp., upper) bound $\vect{l}$ (resp., $\vect{u}$) is the following minimisation problem:
  \begin{equation}\label{eq:ccq}
  \begin{aligned}
  \text{minimse } & \quad \los(\vect{x})  \\
  \text{such that } & \quad \vect{x}\in  \Polytope{\vect{\rho}}  \cap [\vect{l}, \vect{u}]
  \end{aligned}
  \end{equation}
\end{definition}

\begin{algorithm}[t]
\caption{Convex Query\label{alg:main-convex-query}}
\SetKwFor{ParallelForEach}{parallel foreach}{do}{}
\SetKwBlock{DoParallel}{do in parallel}{}
\SetKwRepeat{DoWhile}{do}{while}
\SetKw{Break}{break}
\KwIn{$\mdp $, $\vect{\rho}$, {$\vect{l}$, $\vect{u}$,}
$\los$, $\varepsilon\geq 0$}
$\Phi:=\emptyset$; $\Lambda: =\emptyset$\; 
Initialise $\overline{\vect{v}}, \underline{\vect{v}}$\;
Choose $\vect{w}\in \mathbb{R}^{m}$ s.t.\ $\|\vect{w}\|_1=1$\tcp*[l]{$m=|\vect{\vect{\rho}}|$}
\While{true}{
\label{ln:inner_start}
Find $\vect{r}\in \Polytope{\vect{\rho}}$ s.t.\ $\{\vect{y}\mid \vect{w}\cdot\vect{y}= \vect{w}\cdot\vect{r}\}$ is a supporting hyperplane of $\Polytope{\vect{\rho}}$\;\label{ln:find-supp-hp}
$\Phi:=\Phi\cup \{\vect{r}\}$; {$\Lambda:=\Lambda\cup \{(\vect{w},\vect{r})\}$}\;\label{ln:add-vector}
{\If{$\mathbf{H}(\Lambda)\cap[\vect{l}, \vect{u}]=\emptyset$}{\Return{``infeasible''\;}}}
\If{$|\Lambda|=1$ or $\vect{w} \cdot\vect{r}  <  \vect{w} \cdot \underline{\vect{v}}$\label{ln:main-condition1}}{
{$\underline{\vect{v}}:= \argmin_{\vect{z}\in \mathbf{H}(\Lambda)\cap[\vect{l}, \vect{u}]}\los(\vect{z})$}\;\label{ln:main-find-minimum2}
}
{$\overline{\vect{v}}:= \argmin_{\vect{x}\in \mathbf{V}(\Phi)}\|\underline{\vect{v}}-\vect{x}\|$}\tcp*[l]{mini.\ norm point} \label{ln:main-find-minimum1} 
{\If{$\overline{\vect{v}}\in [\vect{l}, \vect{u}]$ and $\los(\overline{\vect{v}})- \los(\underline{\vect{v}})\leq\varepsilon$
\label{ln:main-exit-condition2}}{\Return{$\overline{\vect{v}}$}\;}}
{$\vect{w}:=  {(\underline{\vect{v}}-\overline{\vect{v}})}/{\|\underline{\vect{v}}-\overline{\vect{v}}\|_1}$}\tcp*[l]{unit length scaling}\label{ln:find-w-1}
}
\end{algorithm}

Algorithm~\ref{alg:main-convex-query} presents an algorithm to solve Problem~\eqref{eq:ccq}. 
In each iteration, the algorithm works by finding a vertex $\vect{r} \in \Polytope{\vect{\rho}}$ and a direction $\vect{w}$ such that $(\vect{w}, \vect{r})$ defines a supporting hyperplane of $\Polytope{\vect{\rho}}$.
It adds $\vect{w}$ to $\Phi$ and $(\vect{w}, \vect{r})$ to $\Lambda$. 
In this way, it iteratively constructs two structures: $\mathbf{H}(\Lambda)$ which is an over-approximation of $\Polytope{\vect{\rho}}$ and  $\mathbf{V}(\Phi)$ which is an under-approximation of $\Polytope{\vect{\rho}}$ (i.e., $\mathbf{V}(\Phi)\subseteq \Polytope{\vect{\rho}} \subseteq \mathbf{H}(\Lambda)$). 
The optimisation in Line~\ref{ln:main-find-minimum2} computes a point $\underline{\vect{v}}\in \mathbf{H}(\Lambda)\cap [\vect{l}, \vect{u}]$ minimising $\los$, while that in Line~\ref{ln:main-find-minimum1} computes $\overline{\vect{v}}\in\mathbf{V}(\Phi)$ \orev{which minimises its distance to  $\underline{\vect{v}}$. The algorithm either returns an approximate optimal value eventually or asserts that the query is ``infeasible'' (namely ``no points can meet the hard constraints'').}

\begin{theorem}[Correctness of Algorithm~\ref{alg:main-convex-query}]\label{thm:main-alt}
 Algorithm~\ref{alg:main-convex-query} terminates. It returns ``infeasible'' if and only if $\Polytope{\vect{\rho}}\cap[\vect{l}, \vect{u}] =\emptyset$. If it returns $\overline{\vect{v}}$ (i.e., $\Polytope{\vect{\rho}}\cap[\vect{l}, \vect{u}] \neq \emptyset$) then 
 $$\min_{\vect{x}\in\Polytope{\vect{\rho}}\cap[\vect{l},\vect{u}]}\los(\vect{x}) \leq \los(\overline{\vect{v}}) \leq \min_{\vect{x}\in\Polytope{\vect{\rho}}\cap[\vect{l},\vect{u}]}\los(\vect{x}) + \varepsilon.$$
\end{theorem}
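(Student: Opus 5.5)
The argument rests on the sandwich invariant $\mathbf{V}(\Phi)\subseteq\Polytope{\vect{\rho}}\subseteq\mathbf{H}(\Lambda)$, which I will establish first by induction on the iterations. Every $\vect{r}$ added in Line~\ref{ln:find-supp-hp} is an achievable total-reward vector, so $\mathbf{V}(\Phi)\subseteq\Polytope{\vect{\rho}}$ because $\Polytope{\vect{\rho}}$ is convex. Every pair $(\vect{w},\vect{r})$ defines a supporting hyperplane with $\vect{w}\cdot\vect{r}\geq\vect{w}\cdot\vect{x}$ for all $\vect{x}\in\Polytope{\vect{\rho}}$, which is exactly the half-space used in $\mathbf{H}(\Lambda)$. (The sign convention is fixed by reading Line~\ref{ln:find-supp-hp} in the orientation matching the definition of $\mathbf{H}$.)

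Next come the two feasibility statements and the $\varepsilon$-bound. If the algorithm returns ``infeasible'', then $\mathbf{H}(\Lambda)\cap[\vect{l},\vect{u}]=\emptyset$, and the invariant gives $\Polytope{\vect{\rho}}\cap[\vect{l},\vect{u}]=\emptyset$. If the algorithm returns $\overline{\vect{v}}$, then $\overline{\vect{v}}\in\mathbf{V}(\Phi)\cap[\vect{l},\vect{u}]\subseteq\Polytope{\vect{\rho}}\cap[\vect{l},\vect{u}]$, which yields the lower inequality and also shows the feasible set is nonempty. For the upper inequality I will show, by induction, that $\underline{\vect{v}}$ always minimises $\los$ over $\mathbf{H}(\Lambda')\cap[\vect{l},\vect{u}]$ for some $\Lambda'\subseteq\Lambda$ whose half-spaces $\underline{\vect{v}}$ satisfies. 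Line~\ref{ln:main-condition1} only skips the recomputation when $\vect{w}\cdot\vect{r}\geq\vect{w}\cdot\underline{\vect{v}}$, that is, when $\underline{\vect{v}}$ already lies in the new half-space. Because adding constraints cannot lower the minimum, $\los(\underline{\vect{v}})$ equals $\min_{\mathbf{H}(\Lambda)\cap[\vect{l},\vect{u}]}\los$, which is at most $\min_{\Polytope{\vect{\rho}}\cap[\vect{l},\vect{u}]}\los$. The exit condition then gives the $\varepsilon$-bound.

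The main obstacle is the remaining two claims: termination, and the ``only if'' direction of infeasibility (the algorithm must eventually detect infeasibility, or exit). Both follow from a progress argument.

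\begin{itemize}
\item \textbf{Case $\underline{\vect{v}}\notin\Polytope{\vect{\rho}}$.} The new direction $\vect{w}\propto\underline{\vect{v}}-\overline{\vect{v}}$ separates $\underline{\vect{v}}$ from $\mathbf{V}(\Phi)$. This is the standard projection property: $\overline{\vect{v}}$ is the minimum-norm point, so $(\underline{\vect{v}}-\overline{\vect{v}})\cdot(\vect{x}-\overline{\vect{v}})\leq 0$ on $\mathbf{V}(\Phi)$. The supporting vertex $\vect{r}$ found for this $\vect{w}$ either satisfies $\vect{w}\cdot\vect{r}<\vect{w}\cdot\underline{\vect{v}}$, in which case $\underline{\vect{v}}$ is cut off and recomputed, or it enlarges $\mathbf{V}(\Phi)$ past $\overline{\vect{v}}$ in direction $\vect{w}$.
\item \textbf{Finiteness.} By the polytope lemma, $\Polytope{\vect{\rho}}$ has finitely many vertices and faces. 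Assuming Line~\ref{ln:find-supp-hp} returns a vertex (as the value iteration does), the set $\Phi$ can take only finitely many values, and $\mathbf{H}(\Lambda)$ is built from finitely many distinct facet-type constraints. I would argue that a new vertex or a new strictly cutting hyperplane appears unless the direction repeats.
\item \textbf{Repeated direction.} If the direction repeats, I will show that $\underline{\vect{v}}=\overline{\vect{v}}\in\Polytope{\vect{\rho}}$, from which the exit test holds, or else that $\mathbf{H}(\Lambda)$ has shrunk to $\Polytope{\vect{\rho}}$ in the relevant region. In the latter situation infeasibility is detected exactly when $\Polytope{\vect{\rho}}\cap[\vect{l},\vect{u}]=\emptyset$, and otherwise $\underline{\vect{v}}\in\Polytope{\vect{\rho}}$ forces $\overline{\vect{v}}=\underline{\vect{v}}$ eventually, so the loop exits even for $\varepsilon=0$.
\end{itemize}

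Making this finite-progress argument airtight, in particular ruling out cycling when the direction repeats, is where I expect the real work to be.
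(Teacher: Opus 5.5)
Your sandwich invariant, the implication ``infeasible $\Rightarrow$ $\Polytope{\vect{\rho}}\cap[\vect{l},\vect{u}]=\emptyset$'' and the $\varepsilon$-bound are fine. Your check that a non-recomputed $\underline{\vect{v}}$ still minimises $\los$ over $\mathbf{H}(\Lambda)\cap[\vect{l},\vect{u}]$ is more careful than the paper's.

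\textbf{The gap is termination, which is the heart of the theorem.} Your finiteness bullet claims $\mathbf{H}(\Lambda)$ is built from finitely many distinct facet-type constraints. That is false. Each cut is $\{\vect{x}\mid\vect{w}\cdot\vect{x}\leq\vect{w}\cdot\vect{r}\}$, where $\vect{w}\propto\underline{\vect{v}}-\overline{\vect{v}}$ can be any vector in the normal cone at $\vect{r}$, not a facet normal. Once $\Phi$ stops growing, every iteration can return an old point of $\Phi$. For such a point the projection property and maximality give $\vect{w}\cdot\vect{r}=\vect{w}\cdot\overline{\vect{v}}$, yet the iteration still adds a new cut that removes the current $\underline{\vect{v}}$. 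So ``a new strictly cutting hyperplane appears'' is not a finite progress measure, and nothing in your outline rules out such a run continuing forever (in particular for $\varepsilon=0$).

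Conversely, the case you single out as the obstacle cannot occur. Suppose $\vect{w}_s=\vect{w}_t=\vect{w}$ for end-of-iteration directions $s<t$, and let $\vect{r}_s$ be the point returned for $\vect{w}_s$. Then $\underline{\vect{v}}_t$ respects the cut $(\vect{w},\vect{r}_s)$. Since $\vect{r}_s\in\Phi$, the projection property gives $\vect{w}\cdot\vect{r}_s\leq\vect{w}\cdot\overline{\vect{v}}_t$. Hence $\vect{w}\cdot\underline{\vect{v}}_t\leq\vect{w}\cdot\vect{r}_s\leq\vect{w}\cdot\overline{\vect{v}}_t<\vect{w}\cdot\underline{\vect{v}}_t$, a contradiction.

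\textbf{The missing idea is a pairwise-obtuseness argument on a fixed face.} Since $\overline{\vect{v}}\in\Polytope{\vect{\rho}}$ and $\vect{r}$ maximises $\vect{w}\cdot\vect{x}$ there, either $\vect{w}\cdot\vect{r}>\vect{w}\cdot\overline{\vect{v}}$ or $\vect{w}\cdot\vect{r}=\vect{w}\cdot\overline{\vect{v}}$.
\begin{itemize}
\item In the first case, the projection property gives $\vect{w}\cdot\vect{y}\leq\vect{w}\cdot\overline{\vect{v}}<\vect{w}\cdot\vect{r}$ for every earlier $\vect{y}\in\Phi$. So the $\vect{w}$-maximising face of $\Polytope{\vect{\rho}}$ is new, and this happens only finitely often. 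No vertex assumption on Line~\ref{ln:find-supp-hp} is needed.
\item In the second case, $\overline{\vect{v}}$ lies on the $\vect{w}$-maximising face. By pigeonhole, infinitely many such iterations share one face $F$. For two of them, $i<j$, we have $\overline{\vect{v}}_j\in F$, so $\vect{w}_i\cdot\overline{\vect{v}}_j=\vect{w}_i\cdot\overline{\vect{v}}_i=\vect{w}_i\cdot\vect{r}_i$. Also $\underline{\vect{v}}_j$ respects the cut $(\vect{w}_i,\vect{r}_i)$. Hence $\vect{w}_i\cdot\vect{w}_j\propto\vect{w}_i\cdot\underline{\vect{v}}_j-\vect{w}_i\cdot\vect{r}_i\leq 0$.
\end{itemize}
At most $2m$ nonzero vectors in $\mathbb{R}^m$ can have pairwise non-positive inner products, which is a contradiction.

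Once termination is known, the ``only if'' direction needs no claim that $\mathbf{H}(\Lambda)$ shrinks to $\Polytope{\vect{\rho}}$, which need not happen. A returned $\overline{\vect{v}}$ lies in $\mathbf{V}(\Phi)\cap[\vect{l},\vect{u}]\subseteq\Polytope{\vect{\rho}}\cap[\vect{l},\vect{u}]$. So if that set is empty, the loop can only halt with ``infeasible''.
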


The last part of Theorem~\ref{thm:main-alt} states that Algorithm~\ref{alg:main-convex-query} returns an approximate optimal value of the convex function $\los$ in $\Polytope{\vect{\rho}}\cap[\vect{l},\vect{u}]$ (with an error up to $\varepsilon$).

\begin{figure}[t]
\centering
\scriptsize
\def\opac{0.5}
\def\dotsize{1.25pt}
\def\redboxsize{1.5} 

\def\fa(#1){3 - #1*0.45}
\def\fb(#1){3.6 - #1*4}
\def\fc(#1){0.5 + #1*0.2}
\def\fd(#1){2 - #1}

\def\xa{1.9} \def\xb{0.37} \def\xc{3} \def\xd{1}

\tikzset{
  axis/.style={-latex},
  funcline/.style={blue},
  dashedline/.style={gray!40, dashed},
  point/.style={fill=blue, circle, radius=\dotsize},
  optstar/.style={star, star points=5, star point ratio=3,
                  minimum size=5pt, fill=blue, inner sep=0pt}
}

\newcommand{\drawCommonElements}{%
    \coordinate (Pa) at ({\xa}, {\fa(\xa)});
    \coordinate (Pb) at ({\xb}, {\fb(\xb)});
    \coordinate (Pc) at ({\xc}, {\fc(\xc)});
    \coordinate (Pd) at ({\xd}, {\fd(\xd)});
    \fill[red, opacity=\opac] (0,0) rectangle (\redboxsize,\redboxsize);
    \draw[axis] (-0.2,0) -- (4,0) node[below] {$\rho_\mathit{CTL}$};
    \draw[axis] (0,-0.2) -- (0,3) node[left, yshift=-.2cm] {$\rho_\mathit{MR}$};
    \draw[red] (0,\redboxsize) -- (4,\redboxsize);
    \node[left] at (0,\redboxsize) {$e_2$};
    \draw[red] (\redboxsize,0) -- (\redboxsize,3);
    \node[below] at (\redboxsize,0) {$e_1$};
    \draw[funcline, domain=0.1:4]   plot (\x, {\fa(\x)});
    \draw[funcline, domain=0.15:0.85]plot (\x, {\fb(\x)});
    \draw[funcline, domain=0.1:4]   plot (\x, {\fc(\x)});
    \fill (Pa) circle (\dotsize) node[right] {$\vect{r}_1$};
    \fill (Pb) circle (\dotsize) node[above] {$\vect{r}_2$};
    \fill (Pc) circle (\dotsize) node[right] {$\vect{r}_3$};
    \draw[densely dotted,-latex] (Pa) -- ++(0.2,0.4) node[right] {$\vect{w}_1$};
    \draw[densely dotted,-latex] (Pb) -- ++(-0.6,-0.2) node[below right] {$\vect{w}_2$};
    \draw[densely dotted,-latex] (Pc) -- ++(0.1,-0.5) node[left] {$\vect{w}_3$};
}

\newcommand{\drawOptimalAt}[2]{%
  \node[optstar] at (#1,#2) {};
}

\begin{subfigure}[t]{0.48\textwidth}
\centering
\begin{tikzpicture}[x=1cm,y=1cm,domain=0:4]
  \drawCommonElements
  \drawOptimalAt{0.25}{0.25}
  \fill[blue,opacity=\opac] (Pa)--(Pb)--(Pc)--cycle;
  \fill (1.64,1.64) circle (\dotsize) node[right] {$\overline{\vect{v}}$};
  \fill (0.72,0.72) circle (\dotsize) node[right] {$\underline{\vect{v}}$};
\end{tikzpicture}
\caption{Execution I, the 3rd Iteration}\label{fig:iter3-1}
\end{subfigure}
\hfill
\begin{subfigure}[t]{0.48\textwidth}
\centering
\begin{tikzpicture}[x=1cm,y=1cm,domain=0:4]
  \drawCommonElements
  \drawOptimalAt{0.25}{0.25}
  \fill[blue,opacity=\opac] (Pa)--(Pb)--(Pd)--(Pc)--cycle;
  \draw[funcline, domain=0.1:1.9] plot (\x,{\fd(\x)});
  \fill (Pd) circle (\dotsize)
       node[below right,xshift=0cm] {$\vect{r}_4=\overline{\vect{v}}=\underline{\vect{v}}$};
  \draw[densely dotted,-latex] (Pd)--++(-.6,-.6) node[right] {$\vect{w}_4$};
\end{tikzpicture}
\caption{Execution I, the 4th Iteration}\label{fig:iter4-1}
\end{subfigure}

\begin{subfigure}[t]{0.48\textwidth}
\centering
\begin{tikzpicture}[x=1cm,y=1cm,domain=0:4]
  \drawCommonElements
  \drawOptimalAt{1.9}{1.9}
  \fill[blue,opacity=\opac] (Pa)--(Pb)--(Pc)--cycle;
  \fill (1.64,1.64) circle (\dotsize) node[right] {$\overline{\vect{v}}$};
  \fill (1.5,1.5) circle (\dotsize) node[below] {$\underline{\vect{v}}$};
\end{tikzpicture}
\caption{Execution II, the 3rd Iteration}\label{fig:iter3-2}
\end{subfigure}
\hfill
\begin{subfigure}[t]{0.48\textwidth}
\centering
\begin{tikzpicture}[x=1cm,y=1cm,domain=0:4]
  \drawCommonElements
  \drawOptimalAt{1.9}{1.9}
  \fill[blue,opacity=\opac] (Pa)--(Pb)--(Pd)--(Pc)--cycle;
  \draw[funcline, domain=0.1:1.9] plot (\x,{\fd(\x)});
  \fill (Pd) circle (\dotsize) node[left,xshift=-0.1cm] {$\vect{r}_4$};
  \draw[densely dotted,-latex] (Pd)--++(-.6,-.6) node[below] {$\vect{w}_4$};
  \fill (1.5,1.5) circle (\dotsize) node[right] {$\overline{\vect{v}}=\underline{\vect{v}}$};
\end{tikzpicture}
\caption{Execution II, the 4th Iteration}\label{fig:iter4-2}
\end{subfigure}

\caption{Illustrative Executions of Alg.~\ref{alg:main-convex-query} for $\mdp_{\mathit{SW}1}$ \orev{(where $\star$ indicates the location of   $(d_1, d_2)$, the soft threshold vector in Eq.~\eqref{eq:ccq-sw})}}
\label{fig:iters}
\end{figure}

Figure~\ref{fig:iters} illustrates two possible executions of Algorithm~\ref{alg:main-convex-query} for $\mdp_{\mathit{SW}1}$. As $\rho_\mathit{CTL}$ and $\rho_\mathit{MR}$ are considered, all vectors (points) are two-dimensional. 
In the first execution, three pairs of vectors are computed: $(\vect{w}_1, \vect{r}_1)$, $(\vect{w}_2, \vect{r}_2)$, and $(\vect{w}_3, \vect{r}_3)$, as shown in Figure~\ref{fig:iter3-1}. 
Accordingly, $\Phi = \{ \vect{w}_1, \vect{w}_2, \vect{w}_3 \}$ and $\Lambda = \{ (\vect{w}_1, \vect{r}_1), (\vect{w}_2, \vect{r}_2), (\vect{w}_3, \vect{r}_3) \}$. 
The algorithm computes a point $\underline{\vect{v}}$ closest to $\star$ within the intersection of $\mathbf{H}(\Lambda)$ (the region defined by the three blue side lines) and $(-\infty, e_1]\times (-\infty, e_2]$ (the red region). 
It also computes a point $\overline{\vect{v}}$ closest to $\underline{\vect{v}}$ within $\mathbf{V}(\Phi)$ (the blue region).
Let $\vect{w} = (\underline{\vect{v}} - \overline{\vect{v}})/\|\underline{\vect{v}} - \overline{\vect{v}}\|$. 
Figure~\ref{fig:iter4-1} shows the next iteration in which the vertex $\vect{r}_4$ is computed by setting the direction $\vect{w}_4$ to $\vect{w}$. 
Then, by adding $\vect{r}_4$ to $\Phi$ and $(\vect{w}_4, \vect{r}_4)$ to $\Lambda$, the new locations of $\underline{\vect{v}}$ and $\overline{\vect{v}}$ are obtained, as shown in Figure~\ref{fig:iter4-2}.
As $\underline{\vect{v}}$ and $\overline{\vect{v}}$ coincide,  they represent the minimum location and thus the algorithm terminates.
In the second execution, we assume that the idealised minimum is attained at a different location of $\star$ (outside the red region), as shown in Figure~\ref{fig:iter3-2}.
Thus, \orev{the location of $\underline{\vect{v}}$ differs from that in Figure~\ref{fig:iter3-1} while that of $\overline{\vect{v}}$ remains unchanged.}
Figure~\ref{fig:iter4-2} illustrates the vertex $\vect{r}_4$ , direction $\vect{w}_4$, and new locations of $\underline{\vect{v}}$ and $\overline{\vect{v}}$.
Again, the algorithm terminates, and the minimum locates at $\overline{\vect{v}}$ (and equivalently $\underline{\vect{v}}$).


\subsection{Policy-Value Iteration for Hyperplane Computation}\label{sec:parallel_pvi}

Line~\ref{ln:find-supp-hp} in Algorithm~\ref{alg:main-convex-query} is the most computation-intense step in the convex query. It can be computed by using a standard iterative solution method (i.e., policy-value iteration) in MDP model checking. 
\begin{proposition}\label{thm:value_iteration}
Given any $\vect{w}\in \mathbb{R}^m$ such that $m = |\vect{\rho}|$, let $\schr_{\vect{w}} = \argmax_{\schr\in \mathit{Sch}(\mdp)} \tolrew{\vect{\rho}\cdot \vect{w}}{\schr}$. Then, $\{\vect{y} \mid \vect{w}\cdot\vect{y} = \vect{w} \cdot\tolrew{\vect{\rho}}{\schr_{\vect{w}}}\}$ is a seperation hyperplane of $\Polytope{\vect{\rho}}$.
\end{proposition}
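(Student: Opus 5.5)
The plan is to reduce the statement to linearity of total rewards in the reward function, and then use the definition of $\schr_{\vect{w}}$ as a maximiser. The first step is to record that for any scheduler $\schr$ and any $\vect{w}\in\mathbb{R}^m$ the scalarised reward $\vect{\rho}\cdot\vect{w} = \sum_{i} w_i\rho_i$ is itself a reward function on $S'\times A'$. Its total reward then satisfies
\begin{equation*}
\tolrew{\vect{\rho}\cdot\vect{w}}{\schr} = \sum_{i=1}^m w_i \tolrew{\rho_i}{\schr} = \vect{w}\cdot\tolrew{\vect{\rho}}{\schr}.
\end{equation*}
This follows from linearity of expectation applied to the sum $\sum_t (\vect{\rho}\cdot\vect{w})(s_t,a_t)$. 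Reward finiteness (the footnote to Definition~\ref{def:total_reward}) lets us exchange the finite sum over $i$ with the expectation of the infinite series. I would phrase it as: each series $\sum_t \rho_i(s_t,a_t)$ has nonnegative terms and finite expectation, so each converges almost surely and in $L^1$, and finite linear combinations behave linearly. This holds even when $w_i$ has mixed signs, which the statement allows.

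The second step is to justify that $\schr_{\vect{w}}$ exists, i.e.\ that the $\argmax$ over $\mathit{Sch}(\mdp)$ is attained. By the preceding lemma, $\Polytope{\vect{\rho}}$ is a bounded polytope, and it is a closed, compact set. The linear functional $\vect{y}\mapsto\vect{w}\cdot\vect{y}$ therefore attains its maximum on it at some point, and by definition of $\Polytope{\vect{\rho}}$ that point is $\tolrew{\vect{\rho}}{\schr}$ for some memoryless scheduler $\schr$. Alternatively, one can appeal to the standard fact that single-objective total-reward MDPs admit an optimal memoryless (indeed deterministic) scheduler, which policy/value iteration computes; this is what the paper's procedure uses anyway.

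The third step is the conclusion. For any $\vect{y}\in\Polytope{\vect{\rho}}$, write $\vect{y}=\tolrew{\vect{\rho}}{\schr}$ for some $\schr$. Step one and maximality of $\schr_{\vect{w}}$ give
\begin{equation*}
\vect{w}\cdot\vect{y} = \tolrew{\vect{\rho}\cdot\vect{w}}{\schr} \le \tolrew{\vect{\rho}\cdot\vect{w}}{\schr_{\vect{w}}} = \vect{w}\cdot\tolrew{\vect{\rho}}{\schr_{\vect{w}}}.
\end{equation*}
So the whole polytope lies in the closed half-space $\{\vect{y}\mid \vect{w}\cdot\vect{y}\le c\}$ with $c=\vect{w}\cdot\tolrew{\vect{\rho}}{\schr_{\vect{w}}}$. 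The point $\tolrew{\vect{\rho}}{\schr_{\vect{w}}}$ lies both in the polytope and on the hyperplane $\vect{w}\cdot\vect{y}=c$. This is exactly the supporting/separating property in the sense of Lemma~\ref{lem:hyperplane}, and hence the property required in Line~\ref{ln:find-supp-hp} of Algorithm~\ref{alg:main-convex-query}. If $\vect{w}$ is a weight vector, the touching point is moreover Pareto optimal, but that is not needed here.

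I expect the main obstacle to be step two rather than the algebra. The statement quantifies only over memoryless schedulers, so I must make sure the optimum of the scalarised objective is attained inside $\mathit{Sch}(\mdp)$ and that reward finiteness persists for the scalarised reward. The cleanest route is to rely on compactness of $\Polytope{\vect{\rho}}$ from the earlier lemma, which sidesteps MDP-theoretic optimality arguments. I would mention the memoryless-optimality result only to connect with policy-value iteration.
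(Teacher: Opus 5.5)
Your proposal is correct and takes essentially the same route as the paper. The paper's proof is the chain $\vect{w}\cdot\tolrew{\vect{\rho}}{\schr}=\tolrew{\vect{w}\cdot\vect{\rho}}{\schr}\le\tolrew{\vect{w}\cdot\vect{\rho}}{\schr_{\vect{w}}}=\vect{w}\cdot\tolrew{\vect{\rho}}{\schr_{\vect{w}}}$, obtained from linearity of total rewards and the maximality of $\schr_{\vect{w}}$; it takes for granted both that the maximiser exists and that linearity holds, whereas you justify these via compactness of $\Polytope{\vect{\rho}}$ and reward finiteness, which is a harmless addition.
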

The above proposition suggests the following two steps to compute the support vector $ \tolrew{\vect{\rho}}{\schr_{\vect{w}}}$:
\begin{enumerate}
\item Compute an optimal schedule $\schr_{\vect{w}}$ that maximises $\tolrew{\vect{\rho}\cdot \vect{w}}{\schr}$ where $\schr\in \mathit{Sch}(\mdp)$ by policy iteration.
\item Compute $ \tolrew{\rho_i}{\schr_{\vect{w}}}$ for each $i= 1,\ldots, m$ by value iteration (where $m = |\vect{\rho}|$, the number of objectives).
\end{enumerate}

A salient observation of this scheme is the inherent \textit{parallelisation} in the second step. This characteristic is a significant departure from single-objective MDP model checking \rrev{and motivates parallel-computing} solutions in our convex queries and, in general, multi-objective MDP model checking.

\subsection{\rev{Random Scheduler Construction}}\label{sec:scheduler_generation}
Algorithm~\ref{alg:main-convex-query} returns an (approximate) optimal value, but a solution to Problem~\eqref{eq:ccq0} should be an optimal scheduler. However, this scheduler can be generated in the following way:
Suppose the algorithm returns $\overline{\vect{v}}$ after $n$ iterations.
Let $\Phi = \{\tolrew{\vect{\rho}}{\pi_1} ,\ldots, \tolrew{\vect{\rho}}{\pi_n} \}$ where $\pi_1,\ldots,\pi_n$ be the corresponding optimal schedulers.
The returned vector $\overline{\vect{v}}\in \mathbf{V}(\Phi )$ is actually represented by $p_1\tolrew{\vect{\rho}}{\pi_1} +\ldots+ p_n\tolrew{\vect{\rho}}{\pi_n} $ where $(p_1,\ldots,p_n)$ is a weight vector.
The convex combination of schedulers $p_1{\pi_1} +\ldots+ p_n{\pi_n}$ yields an approximate optimal solution to Problem~\eqref{eq:ccq0}. 
In practice, viewing the convex vector as a probability distribution, we can randomly select schedulers from $\schr_1, \ldots, \schr_n$. 

\section{{MOPMC} Implementation}\label{sec:tool}

This section presents our tool, a \textbf{M}ulti-\textbf{O}bjective \textbf{P}robabilistic \textbf{M}odel \textbf{C}hecker (MOPMC).
The aims of MOPMC's development were to implement the new method for analysing design choices within our real-time switching design framework and to create a new, standalone software tool for multi-objective MDP model checking.
Compared with the existing probabilistic model checking tools~\cite{Andriushchenko2024}, MOPMC features two key innovations:
\begin{enumerate}
    \item It supports convex queries on multi-objective MDPs by implementing Algorithm~\ref{alg:main-convex-query}.
    \item It utilises GPU to accelerate the policy-value iteration (c.f. Section~\ref{sec:parallel_pvi}).
\end{enumerate}

\subsection{Model Specification}
MOPMC employs the PRISM language, a domain-specific language for high-level specification of MDP models.
For a complete guide to the PRISM language, please refer to the official PRISM manual.\footnote{
\url{https://prismmodelchecker.org/manual/ThePRISMLanguage/}}
List~\ref{list:prism_model} presents a specification snippet of $\mdp_{\mathit{SW}1}$, the syntactic-converted MDP for \orev{our running example} $\ta_{\mathit{SW}1}=\ta_{\mathit{SW}0} \| \ta_{\mathit{TS}}  \| \ta_{\mathit{CC}}\|\ta_{\mathit{MR}}$. 
The keyword \textbf{\texttt{mdp}} indicates the model type. 
The main model specification {includes} three parts.

\definecolor{prismgreen}{rgb}{0, 0.6, 0}
\lstdefinelanguage{Prism}{ 
basicstyle=\color{red}\scriptsize\ttfamily, 
keywords={bool,C,ceil,clock,const,ctmc,double,dtmc,endinit,endmodule,endrewards,
endsystem,F,false,floor,formula,G,global,I,init,int,label,max,mdp,min,
module,nondeterministic,P,Pmin,Pmax,prob,probabilistic,pta,R,rate,rewards,
Rmin,Rmax,S,stochastic,system,ta,true,U,X},
keywordstyle={\bfseries\color{black}},
numberstyle=\tiny\color{black},
comment=[l] {//}, morecomment=[s]{/*}{*/}, 
commentstyle= \color{prismgreen}, 
tabsize=4, 
captionpos=b, 
escapechar=@ 
}

\begin{figure}[t]
\begin{center}
\begin{minipage}{0.88\textwidth}
\lstset{
language={Prism}, numbers=left, frame=single,
rulesepcolor=\color{black}, rulecolor=\color{black}, breaklines=true,
breakatwhitespace=true, firstnumber=1, firstline=1, lastline=25,
caption={(\rev{Partial}) Specification of $\mdp_{\mathit{SW}1} $ in PRISM},label={list:prism_model}
}
\begin{lstlisting}
mdp // modeling a switch between pri and sec controllers under m configs
const int n; // max number of timesteps
const int k; // min number of consecutive calls to a secondary controller
const int m: // number of configurations
const double p1; // probability of entering the primary controller in config 1
// similar declarations for config 2..m
const double q1; // probability of hitting 'yes' in MetricRecorder in config 1
// similar declarations for config 2..m
// other probability parameters...
module BaseSwitch
	cfg: [0..m]; //m configs, 0 represents a null config
  	sm: [0..2] init 0; //main state variable
  	sp: [0..4] init 0; //primary control state varaible
  	ss: [0..5] init 0; //secondary control state varaible 
  	[conf] sm=0 -> (sm'=1) & (cfg'=1);
	// more config transition rules for cfg=2...m
	[] sm=1 & sp=0 & ss=0 & cfg=1 -> (1-p1) : (sp'=1) + p1 : (ss'=1);
	// more prob transition rules for cfg=2...m
	[obs] sp=1 -> (sp'=2);
	[obs] ss=1 -> (ss'=2);
	[act_p] sp=2 -> (sp'=3);
	[act_s] ss=2 -> (ss'=3);
	// other transition rules...
endmodule
module TimestepCounter
 	tc: [0..n] init 0;
	[conf] tc<MAX_TS -> true;
 	[obs] tc<MAX_TS -> true;
 	[act_p] tc<n -> (tc'=tc+1);
  	[act_s] tc<n -> (tc'=tc+1);
	// other transition rules...
endmodule
module ConsecCallCounter
	cc: [0..k]; //consecutive call counts
 	[re_s] cc<k -> (cc'=cc+1);
  	[re_s] cc=k -> (cc'=k);
  	[ex_s] cc=k -> (cc'=0);
endmodule
module MetricRecorder
  	mr: [0..3] init 0; //2: yes, 3: no
  	[obs] mr=0 -> (mr'=1);
	[] cfg=1 & mr=1 -> q1 : (mr'=2) + (1-q1) : (mr'=3);
	// more prob transition rules for cfg=2...m and other transition rules...
endmodule
rewards "EXP_CTL"
	[act_p] true : 1; // assigns cost 1 to each call to primary controller
	[act_s] true : 2; // assigns cost 2 to each call to sec. controller
endrewards
rewards "MR" // mr=2 indicates ``yes'' is recorded 
  [ex_p] mr=2 : 1/n; 
  [ex_s] mr=2 : 1/n; 
  [re_s] mr=2 : 1/n; 
endrewards
\end{lstlisting}
\end{minipage}
\vspace{-2em}
\end{center}
\end{figure}

The first part defines three constant integers \texttt{n} (the maximum number of timesteps), \texttt{k} (the minimum number of consecutive calls to a secondary controller), \rev{ \texttt{m} (the number of configurations)}, and transition probability parameters (e.g., \rev{\texttt{p1}, \ldots, \texttt{pm}}).
All these constants, whose the meanings are annotated in the specification snippet, need to be instantiated by the model designer or using parameter estimation. 

The second part defines four modules \texttt{BaseSwitch}, 
\texttt{TimestepCounter}, \texttt{ConsecCallCounter} and \texttt{MetricRecorder} corresponding to the four simple TAs: $\ta_{\mathit{SW}0}$, $\ta_{\mathit{TS}} $, $\ta_{\mathit{CC}}$ and $\ta_{\mathit{MR}}$, respectively.
Each module contains (integer) state variables and transition rules, whose syntax is self-explanatory. 
For example:
\rev{
\begin{itemize}
\item \texttt{cfg:[0...m]} defines that the variable \texttt{cfg} ranges from 0 to m;
\item \texttt{[] sm=1 \& sp=0 \& ss=0 \& cfg=1 -> (1-p1):(sp'=1) + p1:(ss'=1)} reads as: ``if \texttt{sm=1 \& sp=0 \& ss=0 \& cfg=1} and an implicit action \texttt{[]} is performed, then one of these two cases occurs: (i) with probability \texttt{1-p1}, \texttt{sp} is updated to 1, and (ii) with probability \texttt{p1}, \texttt{ss} is updated to 1'';
\item  \texttt{[obs] sp=1 -> (sp'=2)} means: ``if \texttt{sp=1} and \texttt{[obs]} is performed then \texttt{sp} is updated to 2''.
\end{itemize}}
The transition labels (i.e., actions) are relevant to an important semantic convention in PRISM models: Transition rules that are defined in \emph{different} modules but share the same action must be synchronised. Thus, \orev{the transition rule that is labelled by \texttt{[obs]} in \texttt{BaseSwitch} must be synchronised with transition rules in other modules which are also labelled by \texttt{[obs]} (if any).}
By contrast, \orev{transitions rules with an empty label \texttt{[]}} are executed independently.

The last part of the model specification includes two reward functions \texttt{``\rev{EXP\_CTL}''} and \texttt{``\rev{MR}''}, which correspond to $\rho_{\mathit{CTL}}$ and $\rho_{\mathit{MR}}$, respectively.
These two reward functions are used to define total rewards in the convex query objective and constraints (c.f., Query~\eqref{eq:ccq-sw}).

\subsection{Tool Architecture} 
MOPMC utilises Storm's C++ interface to parse, build and pre-process MDP models specified in the PRISM language. 
In addition to total rewards, it also accepts accumulated rewards and temporal properties as multi-objective property specifications. 
The architectural overview of MOPMC is depicted in Figure~\ref{fig:mopmc_arch}.
For input, MOPMC processes MDPs described in the PRISM modelling language, accepting properties including total rewards, accumulated rewards, and temporal properties. Output includes optimal values and, for total reward queries, an exportable scheduler. The tool employs Storm's parser and model builder, along with a customised model exporter.  The constructed models are fed into MOPMC's query algorithms, which currently handle both constrained/unconstrained convex queries and achievability queries. 
These queries call the standard or GPU-accelerated policy-value iteration routines.
The tool's  implementation uses C++ and CUDA, and its source code is available in the GitHub repository \url{https://github.com/gxshub/mopmc}. 
 
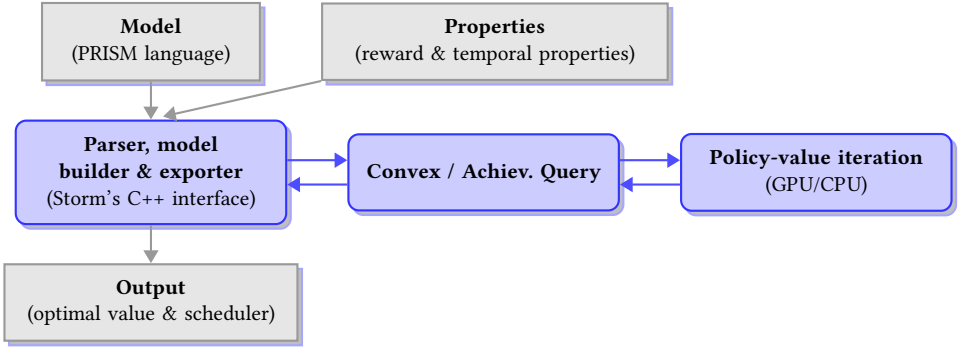
\begin{figure}[!tbh]
\ifacm\newcommand{\blocklength}{3.2cm}\else\newcommand{\propertyblocklength}{1.8cm}\fi
\ifacm\newcommand{\propertyblocklength}{3.5cm}\else\newcommand{\propertyblocklength}{2.1cm}\fi
    \centering
    \footnotesize
   \begin{tikzpicture}[
        scale=0.8, 
        node distance= 0.5cm and 0.8cm, 
        every path/.style={draw=blue!70, text=black, thick}, 
        block/.style={
            rectangle,
            rounded corners,
            draw=blue!80,
            fill=blue!20,
            text=black,
            inner sep=2mm,
            text width=\blocklength, 
            align=center,
            minimum height=1cm,
            drop shadow
        },
        inputoutput/.style={rectangle, aspect=1.5, draw=gray!80, fill=gray!20, text=black, inner sep=2mm, align=center, minimum height=0.8cm, text width=\propertyblocklength, drop shadow}, 
        arrowlable/.style={-Triangle,font=\small, text width=2cm, align=center, midway, above},
    ]
    \node (model) [inputoutput,  text width=\propertyblocklength-1cm] {\textbf{Model} \\ (PRISM language)};
    \node (property) [inputoutput, right=of model, text width=\propertyblocklength+0.7cm ] {\textbf{Properties} \\ (reward \& temporal properties)};
    \node (frontend) [block, below=of model] {\textbf{Parser, model builder \& exporter} \\ (Storm's C++ interface)};
    \node (cq) [block, right=of frontend] {\textbf{Convex / Achiev.\ Query}};
    \node (vi) [block, right=of cq] {\textbf{Policy-value iteration} \\ (GPU/CPU)};
    \node (output) [inputoutput, below=of frontend] {\textbf{Output} \\ (optimal value \& scheduler)};
    \draw[-Triangle,draw=gray!80] (model.south) -- (frontend.north); 
    \draw[-Triangle,draw=gray!80] (property.south west) -- ([xshift=.2cm,yshift=.05cm]frontend.north); 
    \draw[-Triangle,draw=gray!80] (frontend.south) -- (output.north);
    \draw[-Triangle,] ([yshift=.2cm]frontend.east) -- ([yshift=.2cm]cq.west);
    \draw[-Triangle,] ([yshift=-.2cm]cq.west) -- ([yshift=-.2cm]frontend.east);
    \draw[-Triangle,] ([yshift=.2cm]cq.east) -- ([yshift=.2cm]vi.west); 
    \draw[-Triangle,] ([yshift=-.2cm]vi.west) -- ([yshift=-.2cm]cq.east);

    \end{tikzpicture}
    \caption{MOPMC Architecture}
    \label{fig:mopmc_arch}
\end{figure}

\section{Evaluation}\label{sec:case_study}

Our evaluation aims to address the following research questions (RQs): 
\begin{description}

\item[\textbf{RQ1}:] What is the impact of time delay on the performance of DRL agents with respect to critical evaluation metrics?

\item[\textbf{RQ2}:] To what extent can our real-time switch architecture, combined with multi-objective model checking and the proposed convex query method, effectively balance and mitigate the impact of time delay in DRL systems?


\item [{\textbf{RQ3}}:] How does the computational performance of MOMPC, in terms of runtime efficiency, compare to that of existing probabilistic model checking tools?

\end{description}

\subsection{Experiment Setup and Performance Metrics}\label{sec:exp_setup}

In most DRL environments for continuous control problems, the agent's observation-action rate is typically set \rrev{to} $k/\Delta t$, where $k$ is a positive integer and $\Delta t$ is the time interval between two consecutive updates conducted by the simulator. 
This means that after an agent observes the environment state and performs an action, the simulator updates the environment $k$ times before the agent receives its next observation.
However, 
in real-time scenarios, the time elapsed between an agent's observation and action must be explicitly accounted for, and the environment and the DRL agent operate asynchronously. Agents trained in synchronous\rev{,} deterministic tick-tock simulations may manifest significant performance degradation and even result in safety hazards when deployed in a realistic environment with variable delays.

For our evaluation, we implemented a real-time version of Gymnasium HighwayEnv~\cite{highway-env}, a DRL environment designed for behavioural planning in autonomous driving.
In this \rev{modified} setup, we injected a fixed or randomised \emph{time delay} between the autonomous driving agent's observation and action. Crucially, during this delay period, the simulated highway environment continues to update \rev{asynchronously}, reflecting real-world dynamics where the world does not pause for the agent's decision-making.

Since our focus is on the controller switch mechanism \rev{rather than} DRL training itself, we utilised off-the-shelf DRL libraries, specifically Stable Baselines3~\cite{stable-baselines3}, to train a DRL agent in the standard HighwayEnv (without any induced time delay). 
The experimental data presented \rrev{in} this case study is based on the Deep Deterministic Policy Gradient (DDPG) algorithm, a widely used classic DRL approach. 
We also implemented simple rule-based controllers that serve as the fallback options activated by our switch when significant time delay occurs. 

We \rev{considered} the performance of the DRL agent (i.e., DDPG controller) and rule-based controllers against three common autonomous driving metrics: \emph{time-to-collision} (TTC), and \emph{headway} (HW), and \emph{lane-departure} (LD).
These metrics, while not necessarily used as reward signals during DRL training, represent safety-critical performance indicators. 
In addition, as the rule-based controllers have more restricted behaviours than the DDPG controller, \orev{they are usually not a preferable  option in normal conditions. In view of this, we associated a higher cost value to the former and a lower cost value to the latter.  The \emph{overall controller cost} is denoted CTL}. 
\rev{Another important reason of selecting these four metrics is that they represent \emph{three basic types of variables}: LD is a \emph{Boolean variable}, {CTL} is a \emph{discrete variable}, and TTC and HW are \emph{continuous variables}.
In the next section we will show how to use total rewards to characterise them.}
The source code and experimental data for this case study are available in the GitHub repository \url{https://github.com/gxshub/realtime-drl-switch}.
All the evaluations were performed on a Ubuntu 20.04 LTS system with an Intel i9-12900 CPU, 24GB of RAM and 16GB of GPU memory.

\subsection{Time Delay Impact on DRL Agent (RQ1)}\label{sec:time_delay_impact}

Table~\ref{tab:time_delay_impact} summarises the evaluation results for \textbf{RQ1}, namely the impact of time delay on the DDPG agent across the three safety-critical metrics.
\rev{The data are collected by evaluating the agent in the real-time HighwayEnv. The field $\pcttc$ denotes the \emph{10th percentile of TTC}, $\pchw$ refers to the \emph{50th percentile (i.e.\ median) of headway}, and $\probld$ is the probability of lane departure.}

\begin{table}[htb]
       \centering
    \caption{\rev{Impact of Fixed (Left) and Randomised (Right) Time Delays on a DDPG Agent (Time in Seconds)}}
    \label{tab:time_delay_impact}
\fontsize{9}{9}\selectfont 
\setlength{\tabcolsep}{3pt} 
    \begin{minipage}[t]{0.48\textwidth} 
        \centering
        \label{tab:fixed_delay}
        \begin{tabular}{cccc}
   	 \toprule
   	 \textbf{Fixed delay} & \rev{$\pcttc$} & \rev{$\pchw$} & \rev{$\probld$} \\
	\midrule
	0.00 & 28.0064 & 4.8403 & 0.0001 \\
	0.05 & 27.8275 & 4.7022 & 0.0000 \\
	0.10 & 24.9783 & 4.0851 & 0.0001 \\
	0.15 & 20.6544 & 4.2097 & 0.0328 \\
	0.20 & {16.8771} & {4.6250} & {0.2316} \\
	0.25 & 16.5499 & 4.6607 & 0.2444 \\
	0.30 & 11.7091 & 4.7424 & 0.5416 \\
	0.35 & 9.5141 & 4.6305 & 0.5924 \\
	0.40 & 8.7159 & 4.7036 & 0.6324 \\
	0.45 & 8.8419 & 4.7278 & 0.6378 \\
	0.50 & 9.2667 & 5.0252 & 0.6308 \\
   	 \bottomrule
    \end{tabular}
    \end{minipage}
    \hfill
    \begin{minipage}[t]{0.48\textwidth} 
        \centering
        \label{tab:randomised_delay}
   \begin{tabular}{cccc}
    \toprule
    \textbf{Avg delay} & $\pcttc$ & $\pchw$ & $\probld$ \\
	\midrule
	0.00 & 28.0064 & 4.8403 & 0.0001 \\
	0.05 & 18.8983 & 4.3366 & 0.0003 \\
	0.10 & 17.5975 & 3.8275 & 0.0063 \\
	0.15 & 15.4452 & 3.7491 & 0.0354 \\
	{0.20} & {14.4638} & {3.9710} & {0.1152} \\
	0.25 & 13.3526 & 4.1480 & 0.1923 \\
	0.30 & 12.7368 & 4.2904 & 0.2849 \\
	0.35 & 12.3128 & 4.5370 & 0.3526 \\
	0.40 & 11.6652 & 4.5911 & 0.4057 \\
	0.45 & 11.9127 & 4.6872 & 0.4375 \\
	0.50 & 12.1273 & 4.8036 & 0.4655 \\
    \bottomrule
   \end{tabular}
    \end{minipage}
\end{table}
The results in Table~\ref{tab:time_delay_impact} clearly illustrate the detrimental effect of increasing time delay on the performance of the DRL agent. 
With negligible delays (0.00s to 0.05s), the agent maintains a very low lane-departure probability and healthy TTC values, indicating stable and safe operation. 

\rev{For both fixed and randomised delay impacts, as the delay time increases beyond $0.10 \text{s}$, an obvious degradation in safety is observed. The $\pcttc$ values decrease dramatically, signifying a greater propensity for dangerously close interactions and potential collisions. The $\pchw$ values remain relatively stable but still \rrev{show} a decrease initially before rising again. The $\probld$ metric also indicates a dramatically increasing chance of lane departure.} 

\rrev{These} data strongly motivates the need for mechanisms, such as the proposed controller switch, to enhance robustness of DRL-driven systems against such time-sensitive performance degradation.

\subsection{Switch Design and Analysis (RQ2)}\label{sec:eval:swtich_design_analysis}

\subsubsection{Switch Design Modelling}

\rev{In this case study, we used a composed TA $\ta_{\mathit{SW}2} = \ta_{\mathit{SW}0} \| \ta_{\mathit{TS}} $ $ \| \ta_{\mathit{CC}}\|\ta_{\mathit{LD}}\| \ta_{\mathit{TTC}} \|\ta_{\mathit{HW}}
$ (which is an extension to $\ta_{\mathit{SW}1}$).
Recall that $\ta_{\mathit{SW}0}$ (depicted in Figure~\ref{fig:ta_sw}) is the base model which represents the most basic behaviours of the switch, $\ta_{\mathit{TS}}$ (Figure~\ref{fig:ra1}) records the timesteps, and $\ta_{\mathit{CC}}$ (Figure~\ref{fig:ra2}) records the consecutive calls to a secondary controller. 
}
The three additional TAs $\ta_{\mathit{LD}}$, $\ta_{\mathit{TTC}}$ and $\ta_{\mathit{HW}}$ are concrete instances of $\ta_{\mathit{MR}}$ (Figure~\ref{fig:ra3}), which correspond to the aforementioned three performance metrics LD, TTC, and HW, respectively.
\rrev{Semantically,  $\ta_{\mathit{SW}2}$ is specified by defining the edges for each of its TA components.} 
\rev{The interactions of these atomic TAs are through the synchronisation of the shared actions among them.}

\rev{We considered two secondary controllers and set the time threshold to trigger those controllers to 0.1s or 0.2s.  Therefore, for the action representing an abstract configuration $\mathtt{conf}(t, \iota)$ in $\ta_{\mathit{SW}0}$, $t$ is $0.1$ or $0.2$, and  $\iota$  is $1$ or $2$.
This leads to four (concrete) configurations (i.e., basic design choices) for $\ta_{\mathit{SW}2}$, denoted C1--C4.} Based on the four basic configurations, we analysed two advance configuration strategies: (i) randomised configurations (i.e., determining the four decision choices randomly), and (ii) dynamic configurations (i.e., determining different design choices at different timesteps of $\ta_{\mathit{SW}2}$).

Following the methodology detailed in Section \ref{sec:syntactic_mdp}, \rrev{$\ta_{\mathit{SW}2}$ can be converted into an MDP, denoted $\mdp_{\mathit{SW}2}$}. 
\rrev{Similar to $\ta_{\mathit{SW}2}$, $\mdp_{\mathit{SW}2}$ can be specified both graphically and semantically. However, in order to analyse it with MOPMC and other existing probabilistic model checking tools, we specify $\mdp_{\mathit{SW}2}$ syntactically in the PRISM language. }  \rev{For readability, the complete PRISM specification of $\mdp_{\mathit{SW}2}$ is included in Appendix~\ref{sec:complete_prism_spec}. 
This MDP has two integer parameters representing the maximum timestep and the minimum number of consecutive calls to a second controller. Adjusting the maximum timestep, in particular, allows us to scale the model size. It has four non-deterministic actions each of its choice states and includes initially \num{20} undetermined transition probabilities.} Among these probabilities, we then estimated the time-delay probabilities by using an approximately exponential distribution with scale parameter $0.2$ (seconds), and conducted simulations within our real-time HighwayEnv to derive values for other transition probabilities.

\rev{
Two further points are worth noting:  
First, we used the TA $\ta_{\mathit{SW}2}$ (but not the MDP $\mdp_{\mathit{SW}2}$) as our formal switch design model because the latter lacks the clock notion, which is essential for real-time switch modelling.  
Second, the construction of both $\ta_{\mathit{SW}2}$ and $\mdp_{\mathit{SW}2}$ can be incremental owing to their modular nature.
}

 \rev{
\subsubsection{Defining Reward Functions and Total Rewards}
In Section~\ref{sec:time_delay_impact}, we have adopted the LD probability ($\probld$) and the TTC/HW percentiles ($\pcttc$ and $\pchw$) as specific performance metrics. Here we show how to use total rewards (c.f., Definition~\ref{def:total_reward}) to define $\probld$ as well as reverse concepts of $\pcttc$ and $\pchw$ (referred to as \emph{percentile ranks}).
Note that 
$\prttc$ (resp., $\prhw$) of score $q$ is the probability of $\prttc\leq q$ (resp., $\prhw\leq q$). For example, $\prttc$ of score $\pcttc$  (the 10th percentile) is $0.1$, and \rrev{$\pchw$ of score $\prhw$ (the median) is $0.5$}.
As we also considered the cost of controllers (CTL), we show how to define the expected controller cost (denoted $\expctl$) as a total reward.
Table~\ref{tab:rewards_metrics} presents the definitions and meanings of reward functions and total rewards corresponding total rewards are summarised in Table~\ref{tab:rewards_metrics}. 
}

\begin{table}[tbh!] 
\centering
\caption{\rev{Total Rewards for CTL, LD, TTC and HW}}\label{tab:rewards_metrics}
\fontsize{9}{9.5}\selectfont 
\setlength{\tabcolsep}{3.5pt} 
\begin{tabular}{ccp{3.5cm}p{3.5cm}cc} 
\toprule
\textbf{Metric} & \textbf{Type} & \textbf{Reward function} & \textbf{Meaning of total reward} & \rrev{\textbf{Notation}}  \\
\midrule
CTL & discrete & assign a cost to each controller &  CTL \emph{expected value}  & \rrev{$\expctl$}   \\
LD & Boolean &assign 1 if $\text{LD}$ is true and 0 otherwise & LD \emph{probability} &  \rrev{$\probld$}  \\
TTC & continuous &  assign 1 (0) if $\text{TTC}$ is below (above) $q$ seconds & TTC \emph{percentile rank} of score $q$ \rrev{(dual of percentile)} &  \rrev{$\prttc$} \\
HW & continuous & assign 1 (0) if $\text{HW}$ is below (above) $q'$ seconds &  HW \emph{percentile rank} of score $q'$ \rrev{(dual of percentile)} &  \rrev{$\prhw$} \\
\bottomrule
\end{tabular}
\end{table}

\rev{
The alignment of formal definitions based on total rewards with intuitive quantitative metrics is an important advantage of our approach.
This facilitates a switch designer to understand the numerical analysis results returned from multi-objective model checking (c.f., Section~\ref{sec:switch_analysis_in_case_study}). 
The PRISM specification of switch models in Listing~\ref{list:prism_model} and Appendix~\ref{list:prism_model_added} also include reward functions.
}

\subsubsection{\rev{Switch Design Analysis}}\label{sec:switch_analysis_in_case_study}
\rev{To analyse the trade-off between the four performance metrics (i.e., CTL, LD, TTC and HW), 
we conducted different types of analysis using multi-objective model checking: Direction evaluation, single-objective optimisation, achievability query and convex query (see Section~\ref{sec:formal_problem} for a description of these analyses).}


\rev{Table~\ref{tab:case_study_data} presents the analysis results obtained from a specific randomised delay setting, where the average delayed time is 0.2s.}  
The numerical data is computed by using Storm~\cite{Hensel2021}, PRISM~\cite{Kwiatkowska2011} and our new tool MOPMC\orev{, all of which produced identical results (subject to numerical rounding) for the same model checking problems (if supported).}
%

\begin{table}[tbh!] 
\centering
\caption{\orev{Analysis Results of Switch with Randomised Delays Averaging 0.2 Seconds (where $\prttc$ and $\prhw$ scores are $\pcttc$ (i.e.\ 14.4638s) and $\pchw$  (i.e.\ 3.9710s)}}
\label{tab:case_study_data}
\fontsize{9}{9.5}\selectfont 
\setlength{\tabcolsep}{3pt} 
\begin{tabular}{lp{6.3cm}ccccc} 
\toprule
\textbf{Analysis Type} & \textbf{Configuration} & $\expctl$ & $\prttc$ & $\prhw$ & $\probld$ \\
\midrule
Direct evaluation & (C1) switch to controller 1 at 0.1s & +0.0000 & +0.0369 & +0.1320 & +0.0004 \\
Direct evaluation & (C2) switch to controller 2 at 0.1s & +0.0843 & +0.0002 & +0.0018 & +0.0111 \\
Direct evaluation & (C3) switch to controller 1 at 0.2s & +0.0408 & +0.0355 & +0.1293 & +0.0000 \\
Direct evaluation & (C4) switch to controller 2 at 0.2s & +0.0851 & +0.0000 & +0.0067 & +0.0111 \\
Direct evaluation & Uniform combination of C1--C4       & +0.0631 & +0.0173 & +0.0637 & +0.0059 \\
Single-obj.\ optim. & Config.\ minimising \rrev{$\expctl$} & {1.8485} & - & - & - \\
Single-obj.\ optim. & Config.\ minimising \rrev{$\prttc$} & - & {0.0308} & - & - \\
Single-obj.\ optim. & Config.\ minimising \rrev{$\prhw$} & - & - & {0.1839} & - \\
Single-obj.\ optim. & Config.\ minimising \rrev{$\probld$} & - & - & - & {0.1639} \\
Achiev.\ query & \underline{Some} config.\ satisfying thresholds on the right. & $\leq$2.0 & $\leq$0.040 & $\leq$0.23 & $\leq$0.174 \\
Achiev.\ query & \underline{No} config.\ satisfying thresholds on the right. & $\leq$1.9 & $\leq$0.035 & $\leq$0.22 & $\leq$0.169 \\
Convex query & Config.\ minimising MSE of the (non-satisfiable) thresholds in the second achiev.\ query & +0.0515 & +0.0092 & +0.0350 & +0.0084\\
\bottomrule
\end{tabular}
\end{table}



\rev{
First, we used single-objective optimisations to compute the lowest possible values (i.e., baselines) of the four metrics across all combinations of the four basic configurations C1–C4. Direct evaluations using C1–C4 and their uniform combination reveal the inherent trade-off among these metrics. Specifically, C1 achieves the lowest values for $\expctl$ and $\probld$, C2 yields the lowest value for $\prhw$, and C4 minimises $\prttc$. This occurs because the optimal value of each individual metric can always be attained by one of the basic configurations. 
More formally, C1--C4 correspond to the vertices in the (four-dimensional) polytope of $\mdp_{\mathit{SW}2}$, which includes all possible value  for $\expctl$, $\prttc$, $\prhw$ and $\probld$.
Multi-objective model checking pursues their suitable trade-off within the polytope.
For example, by using a uniform combination of configurations C1 to C4, we observed a more balanced performance across the four metrics.

Next, to understand the thresholds and trade-off of those metrics, we selected two sets of thresholds for the four metrics and performed multi-objective achievability queries. The first achievability query asserts that there is a configuration to satisfy $\expctl\leq2.0$, $\prttc\leq0.04$, $\prhw\leq0.23$ and $\probld\leq0.174$ simultaneously, while the second query confirms that $\expctl\leq1.9$, $\prttc\leq0.035$, $\prhw\leq0.22$ and $\probld\leq0.169$ is non-achievable.

Third, in view of the above two achievability queries, we further examined how closely the four metric values can approach the above non-achievability thresholds. To this end, we performed a convex query using MOPMC (which implements Algorithm~\ref{alg:main-convex-query}) to compute the minimal Mean Squared Error (MSE) from those thresholds.
The minimal MSE is achieved at $\expctl = 1.9$, $\prttc = 0.0377$, $\prhw = 0.2189$, and $\probld = 0.1720$, obtained through a weighted combination of configurations C1 to C4. 
In other words, these four performance values correspond to a minimum-MSE point in the polytope of $\mdp_{\mathit{SW}2}$ (subject to numerical rounding errors).
Therefore, schedulers exported from convex queries usually achieve better design choices than arbitrary configuration strategies. 
Note that we adopted MSE as it is a common function to express trade-off, but other convex functions (such as variance and weighted MSE) are also applicable.

Overall, the results in Table~\ref{tab:case_study_data} demonstrate that our approach supports rigorous analysis of trade-off  between diverse performance metrics.}


\subsubsection{\rev{Delay Impact Mitigation Evaluation}}

\rev{The primary goal of implementing our switch-based control system is to mitigate the detrimental impact of delays on critical vehicle safety metrics. We applied switching configurations obtained from our design analysis to the DRL system and evaluated their effectiveness across a range of increasing average delays, specifically within the interval of $0.20 \pm 0.10$ seconds. The two representative configurations which we examined are: the uniform combination of the four basic switch configurations C1--C4, and the ``CQ-exported'' configuration exported from the multi-objective convex query in Table $\text{\ref{tab:case_study_data}}$. Performance differences are calculated relative to the ``no switch'' baseline, where only the standalone DDPG agent operates without secondary controllers.
For performance metrics, we considered $\pcttc$.  
For TTC, we considered percentiles $\pcttc$ and $\prttc$, percentile ranks $\prttc$ and $\prhw$, and the LD probability $\probld$. The corresponding scores of percentile ranks are indicated parenthetically.
These metrics are calculated in the same way as in Section~\ref{sec:time_delay_impact} (i.e., using statistics on simulation data directly), but different from that in Section~\ref{sec:switch_analysis_in_case_study} (through switch modelling and probabilistic model checking).
}

\rev{The results, detailed in Table~\ref{tab:case_study_data_more}, reveal a consistent trade-off employed by the switching policies to manage increased latency. The baseline (``no switch'') data is adopted and (for $\prttc$ and $\prhw$) derived from Table~\ref{tab:time_delay_impact}. As the average delay increases, both the {C1--C4 uniform combination} and the $\text{CQ-exported}$ configurations demonstrate a significant improvement in longitudinal safety. The risk metrics $\prttc$ and $\prhw$ show substantial decrement (e.g., up to {-0.0867} for $\prttc$ and {-0.3575} for $\prhw$ at {0.30s} delay). This improved longitudinal safety is evidenced by the large positive change in the average headway $\pchw$ and the corresponding large positive change in $\pcttc$ at high delays (e.g., +\num{24.9239} for $\pcttc$ at {0.30s}). The cost of using the switch in the current configurations is a systematic increase in the lateral risk, as shown by the consistent positive difference in the $\probld$ metric across all delay levels. It is also important to observe that, except for average delay with \num{0.1} seconds, the CQ-exported configuration achieves a better trade-off across the five performance metrics compared to the simple $\text{C1--C4 uniform}$ configuration.}

\begin{table}[tbh!] 
\centering
\caption{\rev{Performance Comparison w/o Secondary Controllers with Varying Average Delay Time (where ``CQ-exported'' refers to the config.\ exported from the convex query in the last row of Table~\ref{tab:case_study_data})}}
\label{tab:case_study_data_more}
\fontsize{9}{9}\selectfont 
\setlength{\tabcolsep}{3pt} 
\begin{tabular}{ccccccc} 
\toprule
\textbf{Config.\ } & \textbf{Avg delay} & $\pcttc$ & $\prttc$ (\textbf{score}) & $\pchw$ & $\prhw$ (\textbf{score}) & $\probld$ \\
\midrule 
No switch & \num{0.10} & \num{17.5975} & \num{0.1000} (\num{17.5975}) & \num{3.8275} & \num{0.5000} (\num{3.8275}) & \num{0.0063} \\
C1--C4 unif.\ com. & \num{0.10} & \textbf{-\num{0.7655}} & \textbf{+\num{0.0080}} (\num{17.5975}) & \textbf{+\num{0.3448}} & \textbf{-\num{0.0703}} (\num{3.8275}) & \textbf{+\num{0.0305}} \\
CQ-exported & \num{0.10} & \textbf{-\num{0.3378}} & \textbf{+\num{0.0036}} (\num{17.5975}) & \textbf{+\num{0.4255}} & \textbf{-\num{0.0869}} (\num{3.8275}) & \textbf{+\num{0.0348}} \\
\midrule
No switch & \num{0.15} & \num{15.4452} & \num{0.1000} (\num{15.4452}) & \num{3.7491} & \num{0.5000} (\num{3.7491}) & \num{0.0354} \\
C1--C4 unif.\ com.& \num{0.15} & \textbf{+\num{2.8626}} & \textbf{-\num{0.0255}} (\num{15.4452}) & \textbf{+\num{1.0036}} & \textbf{-\num{0.1873}} (\num{3.7491}) & \textbf{+\num{0.0586}} \\
CQ-exported & \num{0.15} & \textbf{+\num{3.9778}} & \textbf{-\num{0.0346}} (\num{15.4452}) & \textbf{+\num{1.2021}} & \textbf{-\num{0.2185}} (\num{3.7491}) & \textbf{+\num{0.0642}} \\
\midrule
No switch & \num{0.20} & \num{14.4638} & \num{0.1000} (\num{14.4638}) & \num{3.9710} & \num{0.5000} (\num{3.9710}) & \num{0.1152} \\
C1--C4 unif.\ com. & \num{0.20} & \textbf{+\num{7.1221}} & \textbf{-\num{0.0511}} (\num{14.4638}) & \textbf{+\num{1.6159}} & \textbf{-\num{0.2496}} (\num{3.9710}) & \textbf{+\num{0.0543}} \\
CQ-exported & \num{0.20} & \textbf{+\num{9.2413}} & \textbf{-\num{0.0598}} (\num{14.4638}) & \textbf{+\num{1.8840}} & \textbf{-\num{0.2804}} (\num{3.9710}) & \textbf{+\num{0.0570}} \\
\midrule
No switch & \num{0.25} & \num{13.3526} & \num{0.1000} (\num{13.3526}) & \num{4.1480} & \num{0.5000} (\num{4.1480}) & \num{0.1923} \\
C1--C4 unif.\ com. & \num{0.25} & \textbf{+\num{12.6027}} & \textbf{-\num{0.0717}} (\num{13.3526}) & \textbf{+\num{2.2751}} & \textbf{-\num{0.2924}} (\num{4.1480}) & \textbf{+\num{0.0707}} \\
CQ-exported & \num{0.25} & \textbf{+\num{16.0426}} & \textbf{-\num{0.0790}} (\num{13.3526}) & \textbf{+\num{2.6359}} & \textbf{-\num{0.3224}} (\num{4.1480}) & \textbf{+\num{0.0757}} \\
\midrule
No switch & \num{0.30} & \num{12.7368} & \num{0.1000} (\num{12.7368}) & \num{4.2904} & \num{0.5000} (\num{4.2904}) & \num{0.2849} \\
C1--C4 unif.\ com. & \num{0.30} & \textbf{+\num{19.7276}} & \textbf{-\num{0.0827}} (\num{12.7368}) & \textbf{+\num{3.0990}} & \textbf{-\num{0.3362}} (\num{4.2904}) & \textbf{+\num{0.0713}} \\
CQ-exported& \num{0.30} & \textbf{+\num{24.9239}} & \textbf{-\num{0.0867}} (\num{12.7368}) & \textbf{+\num{3.4980}} & \textbf{-\num{0.3575}} (\num{4.2904}) & \textbf{+\num{0.0718}} \\
\bottomrule
\end{tabular}
\end{table}

\rev{
\subsubsection{Sensitivity Analysis}

As we estimated the transition probabilities based on prescribed distributions and using simulation data, the probabilities may be inaccurate. To evaluation the robustness of our probabilistic model checking (particularly the convex query) results against the transition probability perturbations, we performed an empirical sensitivity analysis. We perturbed each transition probability by up to $\pm15\%$. More specifically, the perturbation levels range from $0.015$ to $0.15$ (corresponding to $\pm 1.5\%$ to $\pm 15\%$ change in probabilities).

The performance results for the four metrics  $\pcttc$, $\prttc$ (with a {score} of 14.4638s), $\pchw$ and $\prhw$ (with a {score} of 3.9710s) and $\probld$ returned from convex queries are presented in Figure~\ref{fig:sens_analy}.
Each box plot summarises a set of 40 data points (derived from increasing or decreasing 20 transition probabilities by the perturbation level). The dashed red line indicates the unperturbed baseline performance for direct comparison.

Overall, there is a general trend of increasing variability (i.e., wider boxes and longer whiskers) proportional to the perturbation scale for most metrics. However, the main finding is that the convex query results are still stable, as the median performance across all metrics never significantly deviates from the baseline.
}

\begin{figure}[!htbp]
\begin{center}
\includegraphics[width=0.95\textwidth]{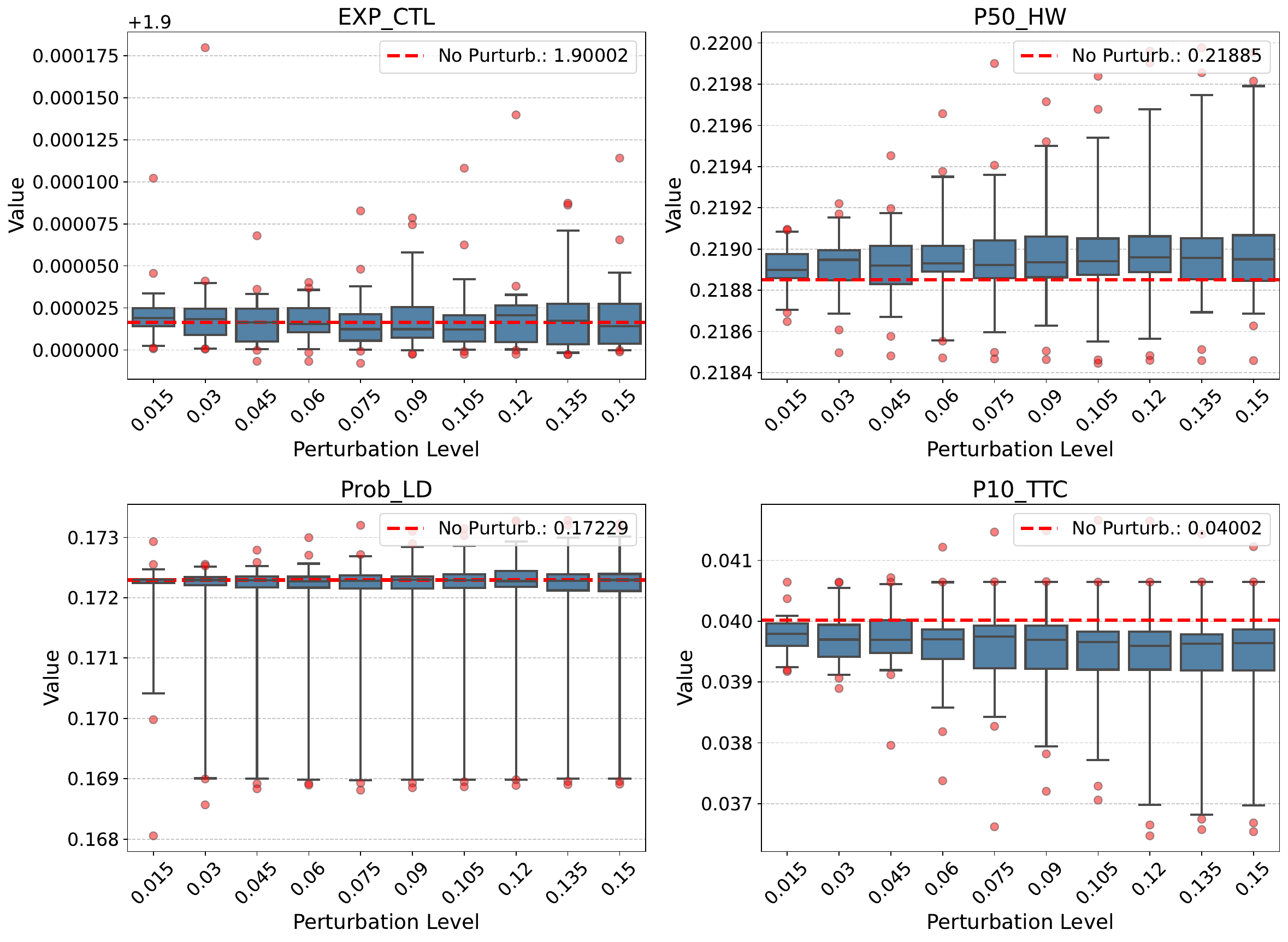}
\caption{\rev{Sensitivity Analysis of Convex Query Outputs}}
\label{fig:sens_analy}
\end{center}
\end{figure}

\subsection{MOPMC Runtime Performance {(RQ3)} }

\begin{table*}[t] 
\centering
\caption{Runtime Performance of MOPMC, Storm, and PRISM (Timeout: 1000s)}\label{tab:performance_results}
\fontsize{8}{9}\selectfont 
\setlength{\tabcolsep}{2pt} 
\newcolumntype{P}[1]{>{\centering\arraybackslash}p{#1}} 
\begin{tabular}{@{} cccc cccc c c  c c cc @{}}%
\toprule
\multicolumn{12}{l}{\textbf{Model A (Switch Model): Varying Model Size (Number of Objectives: 4)}} \\ 
\midrule
\multirow{4}{*}{\makecell{Time- \\ step \\ (max)}} & \multirow{4}{*}{States} & \multirow{4}{*}{Choices} & \multirow{4}{*}{\makecell{Trans.}} & \multicolumn{4}{c}{Achievability Query} & \multicolumn{4}{c}{Convex Query} \\ 
\cmidrule(lr){5-8} \cmidrule(lr){9-12} %
& & & & \multicolumn{4}{c}{\makecell{Runtime}} & \multicolumn{2}{c}{\makecell{Runtime (Loop count)}} & \multicolumn{2}{c}{GPU vs CPU}\\ %
\cmidrule(lr){5-8} \cmidrule(lr){9-10} \cmidrule(lr){11-12} %
& & & & \makecell{Storm} & \makecell{PRISM} & \makecell{MOPMC \\ GPU} & \makecell{MOPMC \\ CPU} & \makecell{MOPMC \\ GPU} & \makecell{MOPMC \\ CPU} & {\makecell{E2E\\ time}} & {\makecell{per-iter\\ time }} \\
\midrule
20  & 19317   & 19377   & 68033   & 0.23 & 1.32 & 0.35 & 0.17 & 0.68 (42) & 1.00 (40) & $32\%\downarrow$ & $35\%\downarrow$  \\ 
40  & 76937   & 77057   & 276673  & 0.61 & 1.45 & 0.45 & 0.49 & 2.09 (52) & 6.83 (40) & $69\%\downarrow$ & $76\%\downarrow$ \\ 
60  & 172957  & 173137  & 626113  & 1.31 & 3.41 & 0.94 & 1.25 & 3.93 (33) & 100 (200) & $96\%\downarrow$ & $76\%\downarrow$ \\ 
80  & 307377  & 307617  & 1116353 & 2.43 & 7.33 & 1.70 & 2.70 & 12.48 (55) & 96.41 (71) & $87\%\downarrow$ & $83\%\downarrow$ \\ 
100 & 480197  & 480497  & 1747393 & 3.91 & 13.20 & 2.73 & 5.08 & 75.54 (200) & 576.02 (200) & $87\%\downarrow$ & $87\%\downarrow$ \\ 
120 & 691417  & 691777  & 2519233 & 6.19 & 22.37 & 4.27 & 8.62 & 122.60 (200) & 122.50 (21) & $0\%$ & $89\%\downarrow$ \\ 
140 & 941037  & 941457  & 3431873 & 8.77 & 34.12 & 6.14 & 13.73 & 28.57 (24) & 187.79 (20) & $85\%\downarrow$ & $87\%\downarrow$ \\
160 & 1229057 & 1229537 & 4485313 & 12.01 & 50.16 & 8.82 & 20.65 & 36.87 (21) & 267.56 (18) & $86\%\downarrow$ & $88\%\downarrow$ \\
180 & 1555477 & 1556017 & 5679553 & 16.06 & 70.68 & 12.07 & 29.16 & 73.93 (33) & 703.05 (35) & $89\%\downarrow$ & $89\%\downarrow$ \\ 
200 & 1920297 & 1920897 & 7014593 & 20.93 & 95.87 & 16.17 & 40.37 & 92.32 (30) & 740.68 (26) & $88\%\downarrow$ & $89\%\downarrow$ \\
\midrule
\midrule 
\multicolumn{12}{l}{\textbf{Model B: Varying Number of Objectives (States: 15051, Choices: 30001, Trans: 89203)}} \\ 
\midrule
\multirow{4}{*}{\makecell{Num.\\ obj.}} & \multirow{4}{*}{States} & \multirow{4}{*}{Choices} & \multirow{4}{*}{\makecell{Trans.}} & \multicolumn{4}{c}{Achievability Query} & \multicolumn{4}{c}{Convex Query} \\ 
\cmidrule(lr){5-8} \cmidrule(lr){9-12} %
& & & & \multicolumn{4}{c}{\makecell{Runtime}} & \multicolumn{2}{c}{\makecell{Runtime (Loop count)}} & \multicolumn{2}{c}{GPU vs CPU}\\ %
\cmidrule(lr){5-8} \cmidrule(lr){9-10} \cmidrule(lr){11-12} %
& & & & \makecell{Storm} & \makecell{PRISM} & \makecell{MOPMC \\ GPU} & \makecell{MOPMC \\ CPU} & \makecell{MOPMC \\ GPU} & \makecell{MOPMC \\ CPU} & {\makecell{E2E\\ time}} & {\makecell{per-iter\\ time }} \\
\midrule
6   & \multicolumn{3}{c}{} & 0.81 & 1.52 & 0.35 & 0.43 & 0.24 (3) & 0.24 (3) & $0\%$ & $0\%$ \\
7   & \multicolumn{3}{c}{} & 3.94 & 1.95 & 0.41 & 0.53 & 0.67 (22) & 0.29 (3) & $56\%\uparrow$ & $69\%\downarrow$ \\ 
8   & \multicolumn{3}{c}{} & 24.63 & 2.31 & 0.37 & 0.58 & 0.35 (3) & 1.45 (20) & $76\%\downarrow$ & $61\%\uparrow$ \\ 
9   & \multicolumn{3}{c}{} & 150.78 & 2.74 & 0.38 & 0.81 & 0.50 (15) & 0.32 (3) & $56\%\uparrow$ & $69\%\downarrow$ \\
10  & \multicolumn{3}{c}{} & \textsc{t.o.} & 3.02 & 0.38 & 0.80 & 0.72 (24) & 1.89 (23) & $62\%\downarrow$ & $63\%\downarrow$ \\ 
20  & \multicolumn{3}{c}{} & \textsc{t.o.} & 10.49 & 0.81 & 2.29 & 0.96 (18) & 2.86 (22) & $66\%\downarrow$ & $59\%\downarrow$ \\ 
40  & \multicolumn{3}{c}{} & \textsc{t.o.} & 29.79 & 1.38 & 1.47 & 2.78 (45) & 10.44 (46) & $73\%\downarrow$ & $73\%\downarrow$ \\
60  & \multicolumn{3}{c}{} & \textsc{t.o.} & 24.93 & 1.42 & 4.08 & 3.78 (45) & 12.60 (40) & $70\%\downarrow$ & $73\%\downarrow$ \\
80  & \multicolumn{3}{c}{} & \textsc{t.o.} & 41.58 & 1.80 & 2.52 & 7.33 (67) & 34.96 (82) & $79\%\downarrow$ & $74\%\downarrow$ \\
100 & \multicolumn{3}{c}{} & \textsc{t.o.} & 49.72 & 2.21 & 7.82 & 26.49 (200) & 103.39 (194) & $74\%\downarrow$ & $75\%\downarrow$ \\
\midrule
\midrule\multicolumn{12}{l}{\textbf{Model C (Team Formulation Protocol): Varying Model Size and Objective Number}} \\ 
\midrule
\multirow{4}{*}{\makecell{Num.\\ obj.}} & \multirow{4}{*}{States} & \multirow{4}{*}{Choices} & \multirow{4}{*}{\makecell{Trans.}} & \multicolumn{4}{c}{Achievability Query} & \multicolumn{4}{c}{Convex Query} \\ 
\cmidrule(lr){5-8} \cmidrule(lr){9-12} %
& & & & \multicolumn{4}{c}{\makecell{Runtime}} & \multicolumn{2}{c}{\makecell{Runtime (Loop count)}} & \multicolumn{2}{c}{GPU vs CPU}\\ %
\cmidrule(lr){5-8} \cmidrule(lr){9-10} \cmidrule(lr){11-12} %
& & & & \makecell{Storm} & \makecell{PRISM} & \makecell{MOPMC \\ GPU} & \makecell{MOPMC \\ CPU} & \makecell{MOPMC \\ GPU} & \makecell{MOPMC \\ CPU} & {\makecell{E2E\\ time}} & {\makecell{per-iter\\ time }} \\
\midrule
2  & 12475 & 14935 & 15228 & 0.10 & 2.27 & 0.44 & 0.16 & 0.45 (16) & 0.12 (16) & $275\%\uparrow$ & $275\%\uparrow$ \\
4  & 12475 & 14935 & 15228 & 0.16 & 4.41 & 0.38 & 0.11 & 0.38 (19) & 0.12 (19) & $217\%\uparrow$ & $217\%\uparrow$ \\
7  & 12475 & 14935 & 15228 & 0.14 & 5.03 & 0.30 & 0.15 & 0.39 (17) & 0.14 (17) & $179\%\uparrow$ & $179\%\uparrow$ \\
10 & 12475 & 14935 & 15228 & 0.20 & 5.61 & 0.37 & 0.16 & 0.69 (200) & 0.55 (200) & $25\%\uparrow$ & $25\%\uparrow$ \\
2  & 96665 & 115289 & 116464 & 0.42 & 62.07 & 0.73 & 0.42 & 0.63 (12) & 0.49 (12) & $29\%\uparrow$ & $29\%\uparrow$ \\
4  & 96665 & 115289 & 116464 & 0.52 & 131.29 & 0.71 & 0.49 & 0.73 (16) & 1.28 (81) & $43\%\downarrow$ & $189\%\uparrow$ \\
7  & 96665 & 115289 & 116464 & 8.03 & 174.34 & 0.87 & 0.67 & 1.04 (98) & 2.32 (122) & $55\%\downarrow$ & $44\%\downarrow$ \\
10 & 96665 & 115289 & 116464 & \textsc{t.o.} & 190.68 & 1.37 & 1.46 & 1.46 (40) & 2.20 (57) & $34\%\downarrow$ & $5\%\downarrow$ \\
2  & 907993 & 1078873 & 1084752 & 5.60 & \textsc{t.o.} & 5.38 & 5.19 & 5.45 (14) & 6.35 (14) & $14\%\downarrow$ & $14\%\downarrow$ \\
4  & 907993 & 1078873 & 1084752 & 6.83 & \textsc{t.o.} & 5.98 & 6.03 & 6.09 (15) & 7.45 (14) & $18\%\downarrow$ & $24\%\downarrow$ \\
7  & 907993 & 1078873 & 1084752 & 23.03 & \textsc{t.o.} & 7.66 & 9.93 & 8.78 (63) & 13.31 (31) & $34\%\downarrow$ & $68\%\downarrow$ \\
10 & 907993 & 1078873 & 1084752 & \textsc{t.o.} & \textsc{t.o.} & 12.76 & 16.00 & 14.16 (60) & 46.13 (130) & $69\%\downarrow$ & $33\%\downarrow$ \\
\bottomrule
\end{tabular}
\end{table*}

\rev{To evaluate the runtime performance of our tool MOPMC, we designed two MDP models. Model A is derived from the switch model in Section~\ref{sec:eval:swtich_design_analysis} by adjusting the maximum timestep. Model B is specialised model that enables the specification of up to 100 objectives.
We also adopted the Team Formulation Protocol model (Model C) which is a multi-objective MDP benchmark used by both PRISM and Storm previously. All the model specifications in the PRISM language are available in MOPMC's GitHub repository.}

Table~\ref{tab:performance_results} showcases the performance data of MOPMC, benchmarked against Storm and PRISM across \rev{the three models}. 
\rev{Since MOPMC employed the sparse engine of Storm for model building, for a fair comparison the results, we set the engines of PRISM and Storm to ``sparse engine'' in the evaluations.} 

For achievability queries on Model A, MOPMC (GPU) consistently outperforms MOPMC (CPU), Storm, and PRISM, especially with larger models. This suggests MOPMC's GPU-accelerated implementation superior scalability with increasing model size.
%
For model B, both MOPMC (GPU) and MOPMC (CPU) demonstrate significantly better scalability than Storm and PRISM. 
\rev{For Model C, the performance data show that, except for small model size and objective number, both MOPMC (GPU) and MOPMC (CPU) can outperform both Storm and PRISM.}

Convex queries, whilst generally taking longer than achievability queries due to its more complex nature. Their runtime is influenced by the number of iterations required by the main query loop (with a maximum of 200 iterations), as well as the computational cost within each iteration. The most computationally intensive part of each main query algorithm is the computation Line~\ref{ln:find-supp-hp} of Algorithm~\ref{alg:main-convex-query}. Given that value iterations can be performed in parallel in our algorithm (as explained previously in Section~\ref{sec:parallel_pvi}), the computational time can directly benefit from GPU acceleration. \rrev{Except for small models}, MOPMC (GPU) are evident in faster execution \rrev{in items of both the end-to-end time and the time per iteration}.
\rev{Note that the time per iteration is a rough runtime metric. This is because, prior to  policy-value iterations, probabilistic model checkers execute graph algorithms, which can be more computationally intensive than the iterations themselves.}
Future work will focus on designing more sophisticated and adaptive termination conditions for MOPMC to further optimise its performance, building upon evaluation results observed.

It is worthy of notice that, compared with model size scalability, objective number scalability has not been evaluated sufficiently the multi-objective MDP model checking literature. 
\rev{However, the presence of numerous evaluation metrics reported in the literature~\cite{Delavari2025} underscores the need to address scalability with respect to the number of objectives.}  
This highlights a key strength and novel contribution of MOPMC in handling problems in which numerous objectives must be considered.

\subsection{Discussion}
\rev{
\subsubsection{Threats to Validity for \textbf{RQ1} Evaluation}
Before discussing possible threats to validity for \textbf{RQ1}, we first revisit the two domain assumptions that motivate our approach: (i) the occurrence of time delay in a real-time environment, and (ii) the detrimental impact of time delay on control performance (both of which are explained in Section~\ref{sec:rt_drl_attributes}). 
Although our evaluation uses autonomous driving as a representative example, we believe that these assumptions are realistic across other dynamic domains. Factors such as sensor acquisition lag, network congestion, and computational bottlenecks can all contribute to delay. 

To further clarify these assumptions and their implications, we consider two scenarios for an DRL agent, one with deterministic (tick-tock) timing and the other with random delay.
In both scenarios, the agent produces some action at time $t$, and after a fixed time interval $\Delta$, an observation $o$ is generated. 
In the tick-tock scenario, the agent processes $o$ and produces another optimal action $a$ before the environment evolves. 
In the delayed scenario, the action processes $o$ at time $t+\Delta+\epsilon$ where $\epsilon$ is an unknown data processing delay, and 
produces an action $a'$ at time $t+\Delta+\epsilon+\epsilon'$
Because the agent is trained under tick-tock conditions, $a'=a$.
But the environment may differ significantly from that at time $t+\Delta$, rendering $a$ sub-optimal or even unsafe.

Our evaluation for \textbf{RQ1} (Table~\ref{tab:time_delay_impact}) demonstrates the detrimental impact of time delay on a trained DDPG agent in the HighwayEnv environment. Several \emph{external threats} may qualify this finding:
\begin{itemize}
\item (Delay magnitude) Our evaluation shows that the detrimental impact becomes obvious from \num{10}\% of $\Delta$ (\num{0.1}s) as the (average) delay. This setting is invalid for domains where the delayed time $\epsilon+\epsilon'$ is negligible.
\item (Domain dynamics) If the environment is less dynamic (such that the environment change is minimal), then $a$ is still optimal in the delayed setting and thus performance degradation will not be observed.
\item (Agent adaptability)  Standard DRL algorithms are not latency-aware; they train agents with respective to loss functions defined for discrete training steps. However, an agent capable of adapting to elapsed time would mitigate delay effects by itself.
\item (Agent quality) If the trained agent is not optimal even in the non-delayed setting, further deterioration under delay may not be evident. Hence, a highly trained agent is essential for meaningful evaluation.
\end{itemize}

One \emph{construct threat} to \textbf{RQ1} concerns the choice of performance metrics. We used the 10th percentile as the TTC threshold,  the median as the HW threshold and the lane departure probability. Different metrics and threshold selections may lead to variations in observed performance.
}

\rev{
\subsubsection{Threats to Validity for \textbf{RQ2} Evaluation}
For the evaluation of RQ2, in addition to the threats discussed earlier, an additional \emph{external threat} arises from the presence of \emph{secondary controllers} that can compensate for critical performance aspects. In practice, it is important to consider whether such controllers are straightforward to design, and whether they require extensive performance tuning.
In our evaluation, we implemented several variants of a rule-based controller that only utilise HW and TTC for decision making. These controllers effectively enhance longitudinal safety but at the expense of lateral stability, as evidenced in Table~\ref{tab:case_study_data_more}.

\textbf{RQ2} is also subject to \emph{internal threats} related to model validity, particularly the structural and parameter accuracy of the underlying MDP model. In Section~\ref{sec:syntactic_mdp}, our syntactic conversion from a TA to an MDP relies on the well-known Markovian assumption, which states that the next state depends only on the current state and not on the sequence of previous state. For example, In Figure~\ref{fig:mdp_sw} the transition probabilities from $s_\mathit{c}$ are conditional solely on $s_\mathit{c}$ regardless of the transition history. Furthermore, our approach determines transition probabilities using a prescribed probability distribution and sampling, both of which may introduce inaccuracies.  These factors constitute internal threats to the accuracy of the model checking results presented in Table~\ref{tab:case_study_data} and Table~\ref{tab:case_study_data_more}.
To understand the impact by these inaccuracies, we have conducted a sensitivity analysis of the convex queries. Findings in Figure~\ref{fig:sens_analy} demonstrated stability of query outputs against model perturbations.

Rigorous methods for sensitivity analysis in (single-objective) probabilistic model checking are well established. As optimal schedulers are synthesised, sensitivity analysis for MDPs reduces to that for discrete-time Markov chains (DTMCs). The first method computes the first- and second-order derivatives of uncertain transition probabilities for a given DTMC model checking problem~\cite{Su2022,Su2016}. The second method, called parametric model checking, derives closed-form algebraic expressions for DTMC model-checking outputs~\cite{Daws2005,Hahn2010}. The third method computes worst-case bounds by assuming interval-valued transition probabilities in DTMCs~\cite{Puggelli2013}. 
Incorporating one or more of these rigorous methods into our approach is left for future work.
}


It is noteworthy that, despite these threats, the core value of our framework lies in its ability to provide a formal and expressive approach to supporting the switch design and optimisation process. The analysis results can furnish the switch designer with evidence to verify their assumptions, and offer clear metrics to determine and balance multiple factors to achieve desirable design solutions.

\rev{
\subsubsection{Simulation of Real-Time Behaviour in Other DRL Environments}
Although HighwayEnv is selected for our evaluations addressing \textbf{RQ1} and \textbf{RQ2}, our method of injecting fixed or randomised time delays, as described in Section~\ref{sec:exp_setup}, is in principle transferable to other control simulators (e.g., CARLA). We acknowledge that employing a DRL environment from a different domain may influence evaluation outcomes; nevertheless, implementing real-time interactions between an agent and environments subject to fixed or randomised delays remains highly feasible. In our setup, elapsed time for the agent’s observations ($\epsilon$) and actions ($\epsilon'$) is simulated by updating the environment accordingly. Alternative strategies for modelling real-time behaviour in DRL environments have also been explored in prior work~\cite{Ramstedt2019,Thodoroff22a}.
}

\section{Related Work}\label{sec:related_work}

\subsection{Fault-tolerance for DRL Systems}

Fault-tolerance for Deep Reinforcement Learning (DRL) systems is primarily achieved through two architectural approaches: Simplex-based designs that switch between controllers, and a protection component (often referred to as a ``shield'') that correct unsafe actions.

The Simplex paradigm leverages controller redundancy. 
Phan~\textit{et~al.}~\cite{Phan2020} introduced the Neural Simplex Architecture, which uses a verified safe controller to correct a DRL agent's behaviour at runtime. This concept has been applied extensively to autonomous driving. 
Chen~\textit{et~al.}~\cite{Shengduo2022} proposed Simplex-Drive, which combines a high-performance DRL controller with a provably safe, Velocity-Obstacle-based controller, using a verified management unit to switch between them and guarantee safety. 
Maderbacher~\textit{et~al.}~\cite{Maderbacher2025} focused on formally synthesising the switching logic itself,
ensuring correctness against temporal specifications while maximising the use of the high-performance controller. 
Cai~\textit{et~al.}~\cite{Cai2025} presented a runtime learning machine with a Student-Teacher architecture, where a high-assurance Teacher controller provides real-time corrections, enabling the Student DRL agent to learn safely and continuously while addressing the Sim-to-Real gap.


The second paradigm involves synthesising a protective shield or enforcer. Alshiekh~\textit{et~al.}~\cite{Alshiekh2018}  first proposed synthesising shields from temporal logic specifications to monitor and minimally correct an agent's actions to prevent safety violations. Jansen~\textit{et~al.}~\cite{Jansen2020} extended this to stochastic environments by developing probabilistic shields that provide safety guarantees under uncertainty. 
More recently, Yang~\textit{et~al.}~\cite{Yang2023} introduced Probabilistic Logic Shields, which use probabilistic logic programming to create a differentiable shield that can be integrated directly into the DRL training process. 
In a similar vein, Vuppala~\textit{et~al.}~\cite{Vuppala2024} used a runtime enforcer synthesised from formal policies to secure a DRL-based pacemaker, while Dunlap~\cite{Dunlap2025} employed a Run Time Assurance (RTA) system using control barrier functions to filter commands for an autonomous spacecraft inspection controller.

However, the existing literature does not address the unique challenges of fault-tolerance within real-time computing settings, nor does it provide a formal analysis for managing potentially conflicting safety and performance objectives under such constraints.

\subsection{Real-Time DRL}

Adapting DRL for real-time settings is a significant research area, which focuses on mitigating performance loss from computational and environmental latencies. 
Ramstedt and Pal~\cite{Ramstedt2019} developed a Real-Time Actor-Critic algorithm to counteract the sub-optimality of standard DRL methods where system states evolve faster than action computations. Similarly, Bouteille~\textit{et al.}~\cite{bouteiller2020reinforcement} proposed the Delay-Correcting Actor-Critic to specifically handle random environmental delays. 
To systematically evaluate these challenges, Thodoroff~\textit{et al.}~\cite{Thodoroff22a} introduced a benchmark for analysing the performance trade-offs of control and RL algorithms in time-sensitive robotics.
Li~\textit{et al.}~\cite{Li2023a} created the $R^3$ framework, which uses dynamic batch sizing and a runtime coordinator to balance timing, memory, and performance. Nhu~\textit{et al.}~\cite{nhu2025timeaware} introduced the Time-Aware World Model, a model-based DRL approach that explicitly incorporates temporal dynamics by training over a diverse range of time-step sizes. This method enables learning both high- and low-frequency task dynamics, leading to improved performance and data efficiency across various control problems.

These approaches all focus on enhancing real-time performance by modifying the internal DRL algorithm or its training process. By contrast, our work, along with the fault-tolerant methods previously discussed, employs external architectural mechanisms to provide formal performance guarantees and optimality.


\subsection{Multi-Objective Model Checking}\label{sec:related_work:mopmc}

Multi-objective model checking is a well-established field within probabilistic model checking~\cite{Baier2019,Andriushchenko2024}. 
Early methods from Chatterjee~\textit{et~al.}~\cite{Chatterjee2006} and Etessami~\textit{et~al.}~\cite{Etessami2008} relied on Linear Programming (LP) to simultaneously optimise MDPs with multiple discounted rewards and Linear Temporal Logic (LTL) properties, respectively.
 A significant advance came from Forejt~\textit{et~al.}~\cite{Forejt2012}, who introduced a more efficient value-iteration technique solve
achievability, numerical and Pareto queries over total rewards, LTL properties, and step-bounded properties.

Subsequent research expanded the scope of solvable problems. Multiple works focused on handling mean-payoff objectives, which were later extended by Quatmann~\textit{et~al.}~\cite{Quatmann2021} to continuous-time MDPs (Markov Automata). 
Hahn~\textit{et~al.}~\cite{Hahn2017} and Delgrange~\textit{et~al.}~\cite{Delgrange2020} addressed MDPs with uncertainty, developing methods for interval-valued and uncertain transition probabilities. Other specialised queries that were explored include solving inverse queries for multiple cost bounds by Hartmanns~\textit{et al.}~\cite{Hartmanns2020} and analysing multiple queries with deterministic and finite-memory schedulers by Delgrange~\textit{et~al.}~\cite{Delgrange2020}. 
Robinson and Su~\cite{Robinson2023} proposed a method to find the nearest achievable point to a target objective vector for multi-agent task allocation and planning.
More recently, Watanabe~\textit{et~al.}~\cite{Watanabe2024a,Watanabe2024b} proposed new compositional verification methods for MDPs that utilise multi-objective model checking to build Pareto curves (similar to Pareto queries).
Bals~\textit{et~al.}~\cite{Bals2024} introduced MultiGain, a major update of the tool presented by Brazdil~\textit{et~al.}~\cite{Brazdil2015}, which handles mean pay-off queries with LTL and steady-state probability constraints.

Joining the effort of the aforementioned literature, our novel approach introduces constrained convex optimisation of total rewards, which advances the multi-objective model checking techniques and implementation for MDPs.

\section{Conclusions and Future Work}\label{sec:conclusions}

In this work, we proposed a formal framework to design and analyse dependable real-time DRL systems. Our approach uses Timed Automata (TAs) to formally model a controller switching mechanism, which is then converted to a Markov Decision Process (MDP) syntactically for rigorous analysis. We introduced a novel convex query technique for multi-objective MDPs that optimises performance objectives while ensuring hard safety constraints. To demonstrate the practicality and scalability of our framework, we presented MOPMC, a GPU-accelerated tool that implements our proposed technique.

For future work, we plan to extend our convex query technique to a broader range of temporal properties, such as mean-payoff objectives. Another promising direction is to employ learning-based approaches to construct the MDP model from data, avoiding the need for explicit parameter estimation in the model. 
\rev{Additionally, we plan to conduct experiments in other environments and domains (such as the CARLA simulator). Another direction is to develop a sensitivity analysis method for our approach, which quantifies how perturbations in the model's transition probabilities influence the convex query output.}
Finally, we aim to integrate our framework into the reinforcement learning lifecycle, providing a formal feedback loop to retrain and improve the DRL agent's performance.

\ifacm
\begin{acks}
This work was partially supported by the National Natural Science Foundation of China (Grant Nos. 62477004, 62377040), Chongqing Municipal Economy and Information Technology Commission (YJX-2025001001008).
\end{acks}
\fi


\ifacm
\bibliographystyle{ACM-Reference-Format}
\else
\bibliographystyle{plain}
\fi
\bibliography{bib1,bib2}

@InCollection{Hahn2017,
  author    = {Ernst Moritz Hahn and Vahid Hashemi and Holger Hermanns and Morteza Lahijanian and Andrea Turrini},
  booktitle = {Quantitative Evaluation of Systems},
  publisher = {Springer International Publishing},
  title     = {Multi-objective Robust Strategy Synthesis for~Interval Markov Decision Processes},
  year      = {2017},
  pages     = {207--223},
  doi       = {10.1007/978-3-319-66335-7_13},
}

@Article{Etessami2008,
  author    = {Kousha Etessami and Marta Kwiatkowska and Moshe Vardi and Mihalis Yannakakis},
  journal   = {Logical Methods in Computer Science},
  title     = {Multi-Objective Model Checking of Markov Decision Processes},
  year      = {2008},
  month     = {nov},
  number    = {4},
  volume    = {4},
  doi       = {10.2168/lmcs-4(4:8)2008},
  editor    = {Michael Huth},
  publisher = {Centre pour la Communication Scientifique Directe ({CCSD})},
}

@InCollection{Chatterjee2006,
  author    = {Krishnendu Chatterjee and Rupak Majumdar and Thomas A. Henzinger},
  booktitle = {{STACS} 2006},
  publisher = {Springer Berlin Heidelberg},
  title     = {Markov Decision Processes with Multiple Objectives},
  year      = {2006},
  pages     = {325--336},
  doi       = {10.1007/11672142_26},
}

@InCollection{Quatmann2021,
  author    = {Tim Quatmann and Joost-Pieter Katoen},
  booktitle = {Tools and Algorithms for the Construction and Analysis of Systems},
  publisher = {Springer International Publishing},
  title     = {Multi-objective Optimization of Long-run Average and Total Rewards},
  year      = {2021},
  pages     = {230--249},
  doi       = {10.1007/978-3-030-72016-2_13},
}

@InCollection{Forejt2011,
  author    = {Vojt{\v{e}}ch Forejt and Marta Kwiatkowska and Gethin Norman and David Parker and Hongyang Qu},
  booktitle = {Tools and Algorithms for the Construction and Analysis of Systems},
  publisher = {Springer Berlin Heidelberg},
  title     = {Quantitative Multi-objective Verification for Probabilistic Systems},
  year      = {2011},
  pages     = {112--127},
  doi       = {10.1007/978-3-642-19835-9_11},
}

@InCollection{Brazdil2015,
  author    = {Tom{\'{a}}{\v{s}} Br{\'{a}}zdil and Krishnendu Chatterjee and Vojt{\v{e}}ch Forejt and Anton{\'{\i}}n Ku{\v{c}}era},
  booktitle = {Tools and Algorithms for the Construction and Analysis of Systems},
  publisher = {Springer Berlin Heidelberg},
  title     = {{MultiGain}: A Controller Synthesis Tool for {MDPs} with Multiple Mean-Payoff Objectives},
  year      = {2015},
  pages     = {181--187},
  doi       = {10.1007/978-3-662-46681-0_12},
}

@Article{Hartmanns2020,
  author    = {Hartmanns, Arnd and Junges, Sebastian and Katoen, Joost-Pieter and Quatmann, Tim},
  journal   = {Journal of Automated Reasoning},
  title     = {Multi-cost Bounded Tradeoff Analysis in {MDP}},
  year      = {2020},
  issn      = {1573-0670},
  month     = jul,
  number    = {7},
  pages     = {1483--1522},
  volume    = {64},
  doi       = {10.1007/s10817-020-09574-9},
  publisher = {Springer Science and Business Media LLC},
}

@InBook{Delgrange2020,
  author    = {Delgrange, Florent and Katoen, Joost-Pieter and Quatmann, Tim and Randour, Mickael},
  pages     = {346--364},
  publisher = {Springer International Publishing},
  title     = {Simple Strategies in Multi-Objective MDPs},
  year      = {2020},
  isbn      = {9783030451905},
  booktitle = {Tools and Algorithms for the Construction and Analysis of Systems},
  doi       = {10.1007/978-3-030-45190-5_19},
  issn      = {1611-3349},
}

@InProceedings{Bals2024,
  author     = {Bals, Severin and Evangelidis, Alexandros and Křetínský, Jan and Waibel, Jakob},
  booktitle  = {Proceedings of the 27th ACM International Conference on Hybrid Systems: Computation and Control},
  title      = {MULTIGAIN 2.0: MDP controller synthesis for multiple mean-payoff, {LTL} and steady-state constraints},
  year       = {2024},
  month      = may,
  pages      = {1--7},
  publisher  = {ACM},
  series     = {HSCC ’24},
  collection = {HSCC ’24},
  doi        = {10.1145/3641513.3650135},
}

@InBook{Watanabe2024b,
  author    = {Watanabe, Kazuki and Vegt, Marck van der and Junges, Sebastian and Hasuo, Ichiro},
  pages     = {467--491},
  publisher = {Springer Nature Switzerland},
  title     = {Compositional Value Iteration with Pareto Caching},
  year      = {2024},
  isbn      = {9783031656330},
  booktitle = {Computer Aided Verification},
  doi       = {10.1007/978-3-031-65633-0_21},
  issn      = {1611-3349},
}

@InBook{Forejt2012,
  author    = {Forejt, Vojtěch and Kwiatkowska, Marta and Parker, David},
  pages     = {317--332},
  publisher = {Springer Berlin Heidelberg},
  title     = {Pareto Curves for Probabilistic Model Checking},
  year      = {2012},
  isbn      = {9783642333866},
  booktitle = {Automated Technology for Verification and Analysis},
  doi       = {10.1007/978-3-642-33386-6_25},
  issn      = {1611-3349},
}

@InBook{Watanabe2024a,
  author    = {Watanabe, Kazuki and van der Vegt, Marck and Hasuo, Ichiro and Rot, Jurriaan and Junges, Sebastian},
  pages     = {279--298},
  publisher = {Springer Nature Switzerland},
  title     = {Pareto Curves for Compositionally Model Checking String Diagrams of MDPs},
  year      = {2024},
  isbn      = {9783031572494},
  booktitle = {Tools and Algorithms for the Construction and Analysis of Systems},
  doi       = {10.1007/978-3-031-57249-4_14},
  issn      = {1611-3349},
}

@InBook{Kwiatkowska2011,
  author    = {Kwiatkowska, Marta and Norman, Gethin and Parker, David},
  pages     = {585--591},
  publisher = {Springer Berlin Heidelberg},
  title     = {{PRISM} 4.0: Verification of Probabilistic Real-Time Systems},
  year      = {2011},
  isbn      = {9783642221101},
  booktitle = {Computer Aided Verification},
  doi       = {10.1007/978-3-642-22110-1_47},
  issn      = {1611-3349},
}

@Article{Hensel2021,
  author    = {Hensel, Christian and Junges, Sebastian and Katoen, Joost-Pieter and Quatmann, Tim and Volk, Matthias},
  journal   = {International Journal on Software Tools for Technology Transfer},
  title     = {The probabilistic model checker Storm},
  year      = {2021},
  issn      = {1433-2787},
  month     = jul,
  number    = {4},
  pages     = {589--610},
  volume    = {24},
  doi       = {10.1007/s10009-021-00633-z},
  publisher = {Springer Science and Business Media LLC},
}

@InCollection{Koenighofer2022,
  author    = {Bettina Könighofer and Roderick Bloem and Rüdiger Ehlers and Christian Pek},
  booktitle = {Lecture Notes in Computer Science},
  publisher = {Springer Nature Switzerland},
  title     = {Correct-by-construction runtime enforcement in~{AI} {\textendash} a survey},
  year      = {2022},
  pages     = {650--663},
  doi       = {10.1007/978-3-031-22337-2_31},
}

@InProceedings{Elsayed2021,
  author    = {ElSayed-Aly, Ingy and Bharadwaj, Suda and Amato, Christopher and Ehlers, R\"{u}diger and Topcu, Ufuk and Feng, Lu},
  booktitle = {Proceedings of the 20th International Conference on Autonomous Agents and MultiAgent Systems},
  title     = {Safe Multi-Agent Reinforcement Learning via Shielding},
  year      = {2021},
  address   = {Richland, SC},
  pages     = {483–491},
  publisher = {International Foundation for Autonomous Agents and Multiagent Systems},
  series    = {AAMAS '21},
  isbn      = {9781450383073},
  location  = {Virtual Event, United Kingdom},
  numpages  = {9},
  ranking   = {rank3},
  url       = {https://www.ifaamas.org/Proceedings/aamas2021/pdfs/p483.pdf},
}

@Article{Alshiekh2018,
  author    = {Alshiekh, Mohammed and Bloem, Roderick and Ehlers, Rüdiger and Könighofer, Bettina and Niekum, Scott and Topcu, Ufuk},
  journal   = {Proceedings of the AAAI Conference on Artificial Intelligence},
  title     = {Safe Reinforcement Learning via Shielding},
  year      = {2018},
  issn      = {2159-5399},
  month     = apr,
  number    = {1},
  volume    = {32},
  doi       = {10.1609/aaai.v32i1.11797},
  publisher = {Association for the Advancement of Artificial Intelligence (AAAI)},
}

@InProceedings{Ramstedt2019,
  author    = {Ramstedt, Simon and Pal, Chris},
  booktitle = {Advances in Neural Information Processing Systems},
  title     = {Real-Time Reinforcement Learning},
  year      = {2019},
  editor    = {H. Wallach and H. Larochelle and A. Beygelzimer and F. d\textquotesingle Alch\'{e}-Buc and E. Fox and R. Garnett},
  publisher = {Curran Associates, Inc.},
  volume    = {32},
  url       = {https://proceedings.neurips.cc/paper_files/paper/2019/file/54e36c5ff5f6a1802925ca009f3ebb68-Paper.pdf},
}

@InProceedings{Thodoroff22a,
  author    = {Thodoroff, Pierre and Li, Wenyu and Lawrence, Neil D.},
  booktitle = {NeurIPS 2021 Workshop on Pre-registration in Machine Learning},
  title     = {Benchmarking Real-Time Reinforcement Learning},
  year      = {2022},
  editor    = {Albanie, Samuel and Henriques, Jo\~{a}o F. and Bertinetto, Luca and Hern\'{a}ndez-Garc\'{i}a, Alex and Doughty, Hazel and Varol, G\"{u}l},
  month     = {13 Dec},
  pages     = {26--41},
  publisher = {PMLR},
  series    = {Proceedings of Machine Learning Research},
  volume    = {181},
  url       = {https://proceedings.mlr.press/v181/thodoroff22a.html},
}

@Misc{highway-env,
  author       = {Leurent, Edouard},
  howpublished = {\url{https://github.com/eleurent/highway-env}},
  title        = {An Environment for Autonomous Driving Decision-Making},
  year         = {2018},
  comment      = {Idealised 
http://highway-env.farama.org/dynamics/vehicle/controller/},
  journal      = {GitHub repository},
  publisher    = {GitHub},
  url          = {https://github.com/eleurent/highway-env},
}

@InProceedings{Jansen2020,
  author    = {Nils Jansen and Bettina K{\"o}nighofer and Sebastian Junges and Alex Serban and Roderick Bloem},
  booktitle = {31st International Conference on Concurrency Theory (CONCUR 2020)},
  title     = {{Safe Reinforcement Learning Using Probabilistic Shields (Invited Paper)}},
  year      = {2020},
  address   = {Dagstuhl, Germany},
  editor    = {Igor Konnov and Laura Kov{\'a}cs},
  pages     = {3:1--3:16},
  publisher = {Schloss Dagstuhl--Leibniz-Zentrum f{\"u}r Informatik},
  series    = {Leibniz International Proceedings in Informatics (LIPIcs)},
  volume    = {171},
  doi       = {10.4230/LIPIcs.CONCUR.2020.3},
  isbn      = {978-3-95977-160-3},
  issn      = {1868-8969},
  url       = {https://drops.dagstuhl.de/opus/volltexte/2020/12815},
  urn       = {urn:nbn:de:0030-drops-128155},
}

@InProceedings{Yang2023,
  author    = {Yang, Wen-Chi and Marra, Giuseppe and Rens, Gavin and De Raedt, Luc},
  booktitle = {The 32nd International Joint Conference on Artificial Intelligence (IJCAI)},
  title     = {Safe Reinforcement Learning via Probabilistic Logic Shields},
  year      = {2023},
  copyright = {Creative Commons Attribution 4.0 International},
  doi       = {10.48550/ARXIV.2303.03226},
  ranking   = {rank2},
}

@InCollection{Bacci2022,
  author    = {Edoardo Bacci and David Parker},
  booktitle = {Lecture Notes in Computer Science},
  publisher = {Springer International Publishing},
  title     = {Verified Probabilistic Policies for Deep Reinforcement Learning},
  year      = {2022},
  pages     = {193--212},
  doi       = {10.1007/978-3-031-06773-0_10},
}

@Article{Dunlap2023,
  author    = {Kyle Dunlap and Mark Mote and Kaiden Delsing and Kerianne L. Hobbs},
  journal   = {Journal of Aerospace Information Systems},
  title     = {Run Time Assured Reinforcement Learning for Safe Satellite Docking},
  year      = {2023},
  month     = {jan},
  number    = {1},
  pages     = {25--36},
  volume    = {20},
  doi       = {10.2514/1.i011126},
  publisher = {American Institute of Aeronautics and Astronautics ({AIAA})},
}

@Article{Alur1994,
  author    = {Rajeev Alur and David L. Dill},
  journal   = {Theoretical Computer Science},
  title     = {A theory of timed automata},
  year      = {1994},
  month     = {apr},
  number    = {2},
  pages     = {183--235},
  volume    = {126},
  doi       = {10.1016/0304-3975(94)90010-8},
  publisher = {Elsevier {BV}},
}

@InProceedings{Katoen2016,
  author    = {Katoen, Joost-Pieter},
  title     = {The Probabilistic Model Checking Landscape},
  year      = {2016},
  address   = {New York, NY, USA},
  pages     = {31–45},
  publisher = {Association for Computing Machinery},
  series    = {LICS '16},
  doi       = {10.1145/2933575.2934574},
  isbn      = {9781450343916},
  location  = {New York, NY, USA},
  numpages  = {15},
  url       = {https://doi.org/10.1145/2933575.2934574},
}

@InCollection{Riley2022,
  author    = {Joshua Riley and Radu Calinescu and Colin Paterson and Daniel Kudenko and Alec Banks},
  booktitle = {Lecture Notes in Computer Science},
  publisher = {Springer International Publishing},
  title     = {Assured Deep Multi-Agent Reinforcement Learning for Safe Robotic Systems},
  year      = {2022},
  pages     = {158--180},
  doi       = {10.1007/978-3-031-10161-8_8},
}

@InCollection{Phan2020,
  author    = {Dung T. Phan and Radu Grosu and Nils Jansen and Nicola Paoletti and Scott A. Smolka and Scott D. Stoller},
  booktitle = {Lecture Notes in Computer Science},
  publisher = {Springer International Publishing},
  title     = {Neural Simplex Architecture},
  year      = {2020},
  pages     = {97--114},
  doi       = {10.1007/978-3-030-55754-6_6},
}

@InBook{Kopetz2022,
  author    = {Kopetz, Hermann},
  pages     = {43--52},
  publisher = {Springer International Publishing},
  title     = {Data in Real-Time Control Systems},
  year      = {2022},
  isbn      = {9783030963293},
  booktitle = {SpringerBriefs in Computer Science},
  doi       = {10.1007/978-3-030-96329-3_8},
  issn      = {2191-5776},
}

@InProceedings{Liu2023,
  author    = {Liangkai Liu and Yanzhi Wang and Weisong Shi},
  booktitle = {Proceedings of the Fourth Workshop on Benchmarking Machine Learning Workloads on Emerging Hardware (MLBench)},
  title     = {Understanding Time Variations of {DNN} Inference in Autonomous Driving},
  year      = {2023},
  url       = {https://arxiv.org/abs/2209.05487},
}

@InBook{Vuppala2024,
  author    = {Vuppala, Sai Rohan Harshavardhan and Allen, Nathan and Pinisetty, Srinivas and Roop, Partha},
  pages     = {3--21},
  publisher = {Springer Nature Switzerland},
  title     = {A Formal Approach for Safe Reinforcement Learning: A Rate-Adaptive Pacemaker Case Study},
  year      = {2024},
  isbn      = {9783031742347},
  month     = oct,
  booktitle = {Runtime Verification},
  doi       = {10.1007/978-3-031-74234-7_1},
  issn      = {1611-3349},
}

@Article{Cai2025,
  author    = {Cai, Yihao and Mao, Yanbing and Sha, Lui and Cao, Hongpeng and Caccamo, Marco},
  journal   = {ACM Transactions on Cyber-Physical Systems},
  title     = {Runtime Learning Machine},
  year      = {2025},
  issn      = {2378-9638},
  month     = jun,
  doi       = {10.1145/3744351},
  publisher = {Association for Computing Machinery (ACM)},
  ranking   = {rank4},
}

@Article{Maderbacher2025,
  author    = {Maderbacher, Benedikt and Schupp, Stefan and Bartocci, Ezio and Bloem, Roderick and Ničković, Dejan and Könighofer, Bettina},
  journal   = {International Journal on Software Tools for Technology Transfer},
  title     = {An adaptive, provable correct simplex architecture},
  year      = {2025},
  issn      = {1433-2787},
  month     = feb,
  doi       = {10.1007/s10009-025-00779-0},
  publisher = {Springer Science and Business Media LLC},
}

@InProceedings{bouteiller2020reinforcement,
  author    = {Bouteiller, Yann and Ramstedt, Simon and Beltrame, Giovanni and Pal, Christopher and Binas, Jonathan},
  booktitle = {International conference on learning representations},
  title     = {Reinforcement learning with random delays},
  year      = {2021},
  url       = {https://openreview.net/forum?id=QFYnKlBJYR},
}

@Misc{Da2025,
  author    = {Da, Longchao and Turnau, Justin and Kutralingam, Thirulogasankar Pranav and Velasquez, Alvaro and Shakarian, Paulo and Wei, Hua},
  title     = {A Survey of Sim-to-Real Methods in {RL}: Progress, Prospects and Challenges with Foundation Models},
  year      = {2025},
  copyright = {arXiv.org perpetual, non-exclusive license},
  doi       = {10.48550/ARXIV.2502.13187},
  publisher = {arXiv preprint},
}

@InBook{Koenighofer2020,
  author    = {Könighofer, Bettina and Lorber, Florian and Jansen, Nils and Bloem, Roderick},
  pages     = {290--306},
  publisher = {Springer International Publishing},
  title     = {Shield Synthesis for Reinforcement Learning},
  year      = {2020},
  isbn      = {9783030613624},
  booktitle = {Leveraging Applications of Formal Methods, Verification and Validation: Verification Principles},
  doi       = {10.1007/978-3-030-61362-4_16},
  issn      = {1611-3349},
}

@InProceedings{Li2023a,
  author    = {Li, Zexin and Samanta, Aritra and Li, Yufei and Soltoggio, Andrea and Kim, Hyoseung and Liu, Cong},
  booktitle = {2023 IEEE Real-Time Systems Symposium (RTSS)},
  title     = {$\mathrm{R}^{3}$: On-Device Real-Time Deep Reinforcement Learning for Autonomous Robotics},
  year      = {2023},
  month     = dec,
  pages     = {131--144},
  publisher = {IEEE},
  doi       = {10.1109/rtss59052.2023.00021},
}

@InProceedings{nhu2025timeaware,
  author    = {Anh N Nhu and Sanghyun Son and Ming Lin},
  booktitle = {Forty-second International Conference on Machine Learning},
  title     = {Time-Aware World Model for Adaptive Prediction and Control},
  year      = {2025},
  url       = {https://openreview.net/forum?id=gZ5N3TLjwv},
}

@InProceedings{Shengduo2022,
  author    = {Chen, Shengduo and Sun, Yaowei and Li, Dachuan and Wang, Qiang and Hao, Qi and Sifakis, Joseph},
  booktitle = {2022 International Conference on Robotics and Automation (ICRA)},
  title     = {Runtime Safety Assurance for Learning-enabled Control of Autonomous Driving Vehicles},
  year      = {2022},
  publisher = {IEEE Press},
  doi       = {10.1109/ICRA46639.2022.9812177},
  location  = {Philadelphia, PA, USA},
  numpages  = {7},
  url       = {https://doi.org/10.1109/ICRA46639.2022.9812177},
}

@InProceedings{Dunlap2025,
  author    = {Dunlap, Kyle and Hamilton, Nathaniel P. and Lippay, Zachary and Shubert, Matthew and Phillips, Sean and Hobbs, Kerianne L.},
  booktitle = {AIAA SCITECH 2025 Forum},
  title     = {Demonstrating Reinforcement Learning and Run Time Assurance for Spacecraft Inspection Using Unmanned Aerial Vehicles},
  year      = {2025},
  month     = jan,
  publisher = {American Institute of Aeronautics and Astronautics},
  doi       = {10.2514/6.2025-0756},
  url       = {https://arxiv.org/html/2405.06770v1},
}

@InProceedings{zhao2020sim,
  author       = {Zhao, Wenshuai and Queralta, Jorge Pe{\~n}a and Westerlund, Tomi},
  booktitle    = {2020 IEEE symposium series on computational intelligence (SSCI)},
  title        = {Sim-to-Real transfer in deep reinforcement learning for robotics: a survey},
  year         = {2020},
  organization = {IEEE},
  pages        = {737--744},
}

@InProceedings{Haider2024,
  author     = {Haider, Tom and Roscher, Karsten and Herd, Benjamin and Schmoeller Roza, Felippe and Burton, Simon},
  booktitle  = {Proceedings of the 39th ACM/SIGAPP Symposium on Applied Computing},
  title      = {Can you trust your Agent? The Effect of Out-of-Distribution Detection on the Safety of Reinforcement Learning Systems},
  year       = {2024},
  month      = apr,
  pages      = {1569--1578},
  publisher  = {ACM},
  series     = {SAC ’24},
  collection = {SAC ’24},
  doi        = {10.1145/3605098.3635931},
}

@InProceedings{Hasanbeig2020,
  author    = {Hasanbeig, Mohammadhosein and Abate, Alessandro and Kroening, Daniel},
  booktitle = {Proceedings of the 19th International Conference on Autonomous Agents and MultiAgent Systems},
  title     = {Cautious Reinforcement Learning with Logical Constraints},
  year      = {2020},
  address   = {Richland, SC},
  pages     = {483–491},
  publisher = {International Foundation for Autonomous Agents and Multiagent Systems},
  series    = {AAMAS '20},
  isbn      = {9781450375184},
  location  = {Auckland, New Zealand},
  numpages  = {9},
}

@InProceedings{Rigter2021,
  author    = {Rigter, Marc and Lacerda, Bruno and Hawes, Nick},
  booktitle = {Advances in Neural Information Processing Systems},
  title     = {Risk-Averse Bayes-Adaptive Reinforcement Learning},
  year      = {2021},
  editor    = {M. Ranzato and A. Beygelzimer and Y. Dauphin and P.S. Liang and J. Wortman Vaughan},
  pages     = {1142--1154},
  publisher = {Curran Associates, Inc.},
  volume    = {34},
  url       = {https://proceedings.neurips.cc/paper_files/paper/2021/file/08f90c1a417155361a5c4b8d297e0d78-Paper.pdf},
}

@Article{zanon2020safe,
  author    = {Zanon, Mario and Gros, S{\'e}bastien},
  journal   = {IEEE Transactions on Automatic Control},
  title     = {Safe reinforcement learning using robust MPC},
  year      = {2020},
  number    = {8},
  pages     = {3638--3652},
  volume    = {66},
  publisher = {IEEE},
}

@Article{Reghenzani2023,
  author    = {Reghenzani, Federico and Guo, Zhishan and Fornaciari, William},
  journal   = {ACM Computing Surveys},
  title     = {Software Fault Tolerance in Real-Time Systems: Identifying the Future Research Questions},
  year      = {2023},
  issn      = {1557-7341},
  month     = jul,
  number    = {14s},
  pages     = {1--30},
  volume    = {55},
  doi       = {10.1145/3589950},
  publisher = {Association for Computing Machinery (ACM)},
}

@InBook{Andriushchenko2024,
  author    = {Andriushchenko, Roman and {et al.}},
  pages     = {90--146},
  publisher = {Springer Nature Switzerland},
  title     = {Tools at the Frontiers of Quantitative Verification: QComp 2023 Competition Report},
  year      = {2024},
  isbn      = {9783031676956},
  month     = nov,
  booktitle = {TOOLympics Challenge 2023},
  doi       = {10.1007/978-3-031-67695-6_4},
  issn      = {1611-3349},
}

@Book{boyd2004convex,
  author    = {Boyd, Stephen and Boyd, Stephen P and Vandenberghe, Lieven},
  publisher = {Cambridge university press},
  title     = {Convex optimization},
  year      = {2004},
}

@InProceedings{li2006towards,
  author  = {Li, Lihong and Walsh, Thomas J and Littman, Michael L},
  title   = {Towards a unified theory of state abstraction for MDPs.},
  year    = {2006},
  number  = {2},
  pages   = {3},
  volume  = {1},
  journal = {International Symposium on Artificial Intelligence and Mathematics},
}

@Article{allen2021learning,
  author  = {Allen, Cameron and Parikh, Neev and Gottesman, Omer and Konidaris, George},
  journal = {Advances in Neural Information Processing Systems},
  title   = {Learning {Markov} state abstractions for deep reinforcement learning},
  year    = {2021},
  pages   = {8229--8241},
  volume  = {34},
}

@Article{stable-baselines3,
  author  = {Raffin, Antonin and Hill, Ashley and Gleave, Adam and Kanervisto, Anssi and Ernestus, Maximilian and Dormann, Noah},
  journal = {Journal of Machine Learning Research},
  title   = {Stable-Baselines3: Reliable Reinforcement Learning Implementations},
  year    = {2021},
  number  = {268},
  pages   = {12348-12355},
  volume  = {22},
  url     = {http://jmlr.org/papers/v22/20-1364.html},
}

@InBook{Robinson2023,
  author    = {Robinson, Thomas and Su, Guoxin},
  pages     = {260--277},
  publisher = {Springer Nature Switzerland},
  title     = {Multi-objective Task Assignment and Multiagent Planning with Hybrid GPU-CPU Acceleration},
  year      = {2023},
  isbn      = {9783031331701},
  booktitle = {NASA Formal Methods},
  doi       = {10.1007/978-3-031-33170-1_16},
  issn      = {1611-3349},
}

@Misc{Delavari2025,
  author    = {Delavari, Elahe and Khanzada, Feeza Khan and Kwon, Jaerock},
  title     = {A Comprehensive Review of Reinforcement Learning for Autonomous Driving in the CARLA Simulator},
  year      = {2025},
  copyright = {Creative Commons Attribution 4.0 International},
  doi       = {10.48550/ARXIV.2509.08221},
  publisher = {arXiv},
}

@InBook{Baier2019,
  author    = {Baier, Christel and Hermanns, Holger and Katoen, Joost-Pieter},
  pages     = {420--451},
  publisher = {Springer International Publishing},
  title     = {The 10,000 Facets of MDP Model Checking},
  year      = {2019},
  isbn      = {9783319919089},
  booktitle = {Computing and Software Science},
  doi       = {10.1007/978-3-319-91908-9_21},
  issn      = {1611-3349},
}

@Article{Su2022,
  author    = {Su, Guoxin and Liu, Li and Zhang, Minjie and Rosenblum, David S.},
  journal   = {IEEE Transactions on Software Engineering},
  title     = {Quantitative Verification for Monitoring Event-Streaming Systems},
  year      = {2022},
  issn      = {2326-3881},
  month     = feb,
  number    = {2},
  pages     = {538--550},
  volume    = {48},
  doi       = {10.1109/tse.2020.2996033},
  publisher = {Institute of Electrical and Electronics Engineers (IEEE)},
}

@Article{Su2016,
  author    = {Su, Guoxin and Feng, Yuan and Chen, Taolue and Rosenblum, David S.},
  journal   = {IEEE Transactions on Software Engineering},
  title     = {Asymptotic Perturbation Bounds for Probabilistic Model Checking with Empirically Determined Probability Parameters},
  year      = {2016},
  issn      = {2326-3881},
  month     = jul,
  number    = {7},
  pages     = {623--639},
  volume    = {42},
  doi       = {10.1109/tse.2015.2508444},
  publisher = {Institute of Electrical and Electronics Engineers (IEEE)},
}

@InBook{Daws2005,
  author    = {Daws, Conrado},
  pages     = {280--294},
  publisher = {Springer Berlin Heidelberg},
  title     = {Symbolic and Parametric Model Checking of Discrete-Time Markov Chains},
  year      = {2005},
  isbn      = {9783540318620},
  booktitle = {Theoretical Aspects of Computing - ICTAC 2004},
  doi       = {10.1007/978-3-540-31862-0_21},
  issn      = {1611-3349},
}

@Article{Hahn2010,
  author    = {Hahn, Ernst Moritz and Hermanns, Holger and Zhang, Lijun},
  journal   = {International Journal on Software Tools for Technology Transfer},
  title     = {Probabilistic reachability for parametric Markov models},
  year      = {2010},
  issn      = {1433-2787},
  month     = apr,
  number    = {1},
  pages     = {3--19},
  volume    = {13},
  doi       = {10.1007/s10009-010-0146-x},
  publisher = {Springer Science and Business Media LLC},
}

@InBook{Puggelli2013,
  author    = {Puggelli, Alberto and Li, Wenchao and Sangiovanni-Vincentelli, Alberto L. and Seshia, Sanjit A.},
  pages     = {527--542},
  publisher = {Springer Berlin Heidelberg},
  title     = {Polynomial-Time Verification of PCTL Properties of MDPs with Convex Uncertainties},
  year      = {2013},
  isbn      = {9783642397998},
  booktitle = {Computer Aided Verification},
  doi       = {10.1007/978-3-642-39799-8_35},
  issn      = {1611-3349},
}


\ifacm\appendix\else\appendices\balance\fi

\section{Formal Proofs}\label{sec:proofs}
\subsection{Proofs for Section~\ref{sec:convex_query}}
We present several technical lemma.
Recall that 
a \emph{face} of a bounded polytope $C$ is a subset $H\subseteq C$ such that there is a vector $\vect{v}$ such that $\vect{u}\cdot \vect{v}\geq \vect{u}'\cdot \vect{v}$ for all $\vect{u}\in H,\vect{u'}\in C$.
We recall the following property for polytopes.
\begin{lemma}[\cite{Forejt2012}] \label{lem:polytope}
Let $C$ be a bounded polytope. For any vector $\vect{v}$, there is a face $H$ such that $\vect{u}\cdot \vect{v}=\vect{u}'\cdot \vect{v}$ and $\vect{u}\cdot\vect{v}> \vect{u}''\cdot \vect{v}$ for all $\vect{u},\vect{u'}\in H$ and $\vect{u}''\in C\backslash H$.   
\end{lemma}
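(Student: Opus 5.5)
The plan is to take $H$ to be the set of maximisers of the linear functional $\vect{u}\mapsto\vect{u}\cdot\vect{v}$ over $C$, and to check directly that it has the required properties. Since $C$ is a bounded polytope, it is the convex hull $\mathbf{V}(\Phi)$ of a finite set $\Phi$ of points, and hence compact. The map $\vect{u}\mapsto \vect{u}\cdot\vect{v}$ is continuous, so it attains a maximum $M=\max_{\vect{u}\in C}\vect{u}\cdot\vect{v}$ on $C$. Alternatively, without invoking compactness: every $\vect{u}\in C$ is a convex combination $\sum_i w_i\vect{u}_i$ of points $\vect{u}_i\in\Phi$, and $\vect{u}\cdot\vect{v}=\sum_i w_i(\vect{u}_i\cdot\vect{v})\le\max_{\vect{u}_i\in\Phi}\vect{u}_i\cdot\vect{v}$, so the maximum is attained at a point of $\Phi$.

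Next define $H=\{\vect{u}\in C\mid \vect{u}\cdot\vect{v}=M\}$, which is non-empty by the previous step. By construction, $\vect{u}\cdot\vect{v}=\vect{u}'\cdot\vect{v}=M$ for all $\vect{u},\vect{u}'\in H$. For $\vect{u}''\in C\backslash H$ we have $\vect{u}''\cdot\vect{v}\le M$ because $M$ is the maximum, and $\vect{u}''\cdot\vect{v}\neq M$ because $\vect{u}''\notin H$. Hence $\vect{u}\cdot\vect{v}>\vect{u}''\cdot\vect{v}$, as required.

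It remains to confirm that $H$ is a face in the sense recalled just before the lemma. The witnessing vector is $\vect{v}$ itself: for all $\vect{u}\in H$ and $\vect{u}'\in C$ we have $\vect{u}\cdot\vect{v}=M\ge\vect{u}'\cdot\vect{v}$. I would also remark that $H=C\cap\{\vect{x}\mid\vect{x}\cdot\vect{v}=M\}$ is the intersection of $C$ with a supporting hyperplane (cf.\ Lemma~\ref{lem:hyperplane}). It is therefore itself a polytope, namely the convex hull of those points of $\Phi$ that attain $M$.

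The only step needing any care is the existence of the maximiser, i.e.\ that $H\neq\emptyset$. This is where boundedness is essential: for an unbounded polyhedron the supremum may be infinite and the statement fails. I would handle it by the finite-vertex convex-combination argument above, which is elementary. The degenerate case $\vect{v}=\vect{0}$ is covered automatically, since then $H=C$ and $C\backslash H=\emptyset$.
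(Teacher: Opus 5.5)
Your proof is correct. The paper does not prove this lemma itself; it imports it from Forejt et al. What you have written is the standard direct argument that such a citation stands in for: take $H$ to be the whole set of maximisers of $\vect{u}\mapsto\vect{u}\cdot\vect{v}$ on $C$, which is non-empty because a maximum over $\mathbf{V}(\Phi)$ is attained at a generator.

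Two features of your version are worth keeping explicit, because they are what the paper actually needs downstream.

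First, under the paper's definition, any subset of $C$ on which $\vect{v}$ is maximised counts as a face. So the literal statement is satisfied vacuously by $H=\emptyset$ (the only option when $C=\emptyset$). Conversely, the strict inequality fails for every non-empty proper subset of the maximiser set. The real content is therefore that the full maximiser set works and is non-empty. This is exactly what Claim~\ref{claim2} in the proof of Theorem~\ref{thm:main-alt} uses, with $\vect{v}=\vect{w}$: it places $\vect{r}$ in $H$ and the remaining points of the algorithm's $\Phi$ outside it.

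Second, your closing remark says that $H$ is the convex hull of the generators attaining $M$. This holds because if $\sum_i w_i\,(\vect{u}_i\cdot\vect{v})=M$ with every $\vect{u}_i\cdot\vect{v}\le M$, then each $\vect{u}_i$ with $w_i>0$ attains $M$. It follows that there are at most $2^{|\Phi|}$ such sets, with $\Phi$ here your finite generating set of $C$. That is precisely the finiteness of faces on which the termination argument for Algorithm~\ref{alg:main-convex-query} relies. That finiteness would fail under the paper's looser notion of face, where every subset of a maximiser set qualifies.
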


\begin{lemma}\label{lem:minimum-angle-points}
Let $D\subset \{\vect{x}\in \mathbb{R}^n \mid \|\vect{x}\|_1=1\}$ such that if $\vect{x}\neq \vect{x}'$ then $\vect{x}\cdot\vect{x}'\leq 0$ or  for all $\vect{x},\vect{x}'\in S$. 
Then, $D$ is a finite set.
\end{lemma}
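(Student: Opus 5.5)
We read the hypothesis (which contains a typo, ``or for all $\vect{x},\vect{x}'\in S$'') as saying that $\vect{x}\cdot\vect{x}'\leq 0$ for all distinct $\vect{x},\vect{x}'\in D$. The plan is to show that $D$ is uniformly separated: any two distinct points of $D$ are at Euclidean distance at least some fixed $\delta>0$ depending only on $n$. Finiteness then follows because $D$ lies in a bounded subset of $\mathbb{R}^n$. We use no structure beyond the $\ell_1$-normalisation and the sign condition on dot products.

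The first step is a lower bound on Euclidean norms. For $\vect{x}$ with $\|\vect{x}\|_1=1$, Cauchy--Schwarz gives $1=\|\vect{x}\|_1\leq \sqrt{n}\,\|\vect{x}\|_2$, so $\|\vect{x}\|_2^2\geq 1/n$. Also $\|\vect{x}\|_2\leq\|\vect{x}\|_1=1$.

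The second step turns this into a separation bound. For $\vect{x},\vect{x}'\in D$, write
\begin{equation*}
\vect{x}\cdot\vect{x}' = \|\vect{x}\|_2^2 + \vect{x}\cdot(\vect{x}'-\vect{x}) \geq \tfrac{1}{n} - \|\vect{x}\|_2\,\|\vect{x}'-\vect{x}\|_2 \geq \tfrac{1}{n} - \|\vect{x}'-\vect{x}\|_2 .
\end{equation*}
Hence, if $\|\vect{x}-\vect{x}'\|_2<1/n$, then $\vect{x}\cdot\vect{x}'>0$, and the hypothesis forces $\vect{x}=\vect{x}'$. So distinct points of $D$ are at Euclidean distance at least $\delta=1/n$.

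The final step is a packing argument. The open Euclidean balls of radius $\delta/2$ centred at the points of $D$ are pairwise disjoint. They are all contained in the ball of radius $1+\delta/2$ about the origin, since $\|\vect{x}\|_2\leq 1$ for every $\vect{x}\in D$. Comparing volumes gives $|D|\leq \bigl((1+\delta/2)/(\delta/2)\bigr)^n=(2n+1)^n$, which is finite.

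Equivalently, one can argue by compactness. If $D$ were infinite, the compact set $\{\vect{x}\mid\|\vect{x}\|_1=1\}$ would contain an accumulation point of $D$, yielding two distinct points of $D$ closer than $1/n$, which is a contradiction.

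No step is a real obstacle. The only point needing care is the lower bound $\|\vect{x}\|_2^2\geq 1/n$: without it the separation constant could degenerate, and this is exactly where the $\ell_1$-normalisation is used. If a sharper statement is wanted, the known bound $|D|\leq 2n$ for pairwise non-acute sets could be cited, but it is not needed for finiteness.
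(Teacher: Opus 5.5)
Your proof is correct, but it reaches finiteness by a different route from the paper.

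The paper's proof is a one-line appeal to angles. Pairwise non-positive inner products mean pairwise angles of at least $90^\circ$, and the paper then cites, without proof, the fact that at most $2n$ such directions exist in $\mathbb{R}^n$. You instead prove a quantitative separation estimate. From the $\ell_1$-normalisation and Cauchy--Schwarz you get $\|\vect{x}\|_2^2\geq 1/n$, hence distinct points of $D$ are at Euclidean distance at least $1/n$. Volume packing (or compactness) then gives $|D|\leq(2n+1)^n$.

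The paper's route gives the sharp bound $2n$ and does not actually need the normalisation. Any family of nonzero vectors with pairwise non-positive inner products has at most $2n$ members. Also, two distinct points of $D$ can never be positive multiples of each other, so the passage from the $\ell_1$-sphere to the Euclidean sphere, which the paper glosses over, is harmless. The cost is that it rests on a nontrivial combinatorial-geometry fact (provable by induction on dimension) that the paper does not justify.

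Your route is elementary and self-contained. It uses the normalisation essentially, to keep points uniformly away from the origin, and yields only an exponential bound. That is immaterial here: the termination proof of Theorem~\ref{thm:main-alt} only needs $D$ to be finite. The derived inequalities $\vect{w}_i\cdot\vect{w}_j\leq 0$ for $i<j$ already force the $\vect{w}_i$ to be pairwise distinct, so an infinite sequence of them contradicts your version of the lemma directly.
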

\begin{proof}
The condition ``if $\vect{x}\neq \vect{x}'$ then $\vect{x}\cdot\vect{x}'\leq 0$'' implies that the minimum angle between $\vect{x}, \vect{x}'$ is $90$ degree. There are at most finitely many (actually $2n$) those points scattering on the $n$-sphere's surface. 
\end{proof}

\begin{lemma}\label{lem:v-h-relationship}
    Throughout the execution of Algorithm~\ref{alg:main-convex-query}, 
$\mathbf{V}(\Phi)\subseteq \Polytope{\vect{\rho}}\subseteq \mathbf{H}(\Lambda)$ holds.
\end{lemma}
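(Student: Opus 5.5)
The plan is an induction on the number of iterations of the main loop of Algorithm~\ref{alg:main-convex-query}, proving the two inclusions separately. The invariant is that after every execution of Line~\ref{ln:add-vector}, $\Phi$ contains only points of $\Polytope{\vect{\rho}}$, and each pair $(\vect{w},\vect{r})\in\Lambda$ satisfies $\vect{w}\cdot\vect{r}\geq\vect{w}\cdot\vect{x}$ for all $\vect{x}\in\Polytope{\vect{\rho}}$. The base case is trivial: initially $\Phi=\Lambda=\emptyset$, so $\mathbf{V}(\emptyset)=\emptyset\subseteq\Polytope{\vect{\rho}}\subseteq\mathbb{R}^m=\mathbf{H}(\emptyset)$. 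Between additions the sets $\Phi$ and $\Lambda$ are never modified (Lines~\ref{ln:main-find-minimum2}--\ref{ln:find-w-1} only update $\underline{\vect{v}}$, $\overline{\vect{v}}$ and $\vect{w}$), so it suffices to examine Line~\ref{ln:find-supp-hp}.

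For the left inclusion, note that the point $\vect{r}$ computed in Line~\ref{ln:find-supp-hp} is, via Proposition~\ref{thm:value_iteration}, of the form $\tolrew{\vect{\rho}}{\schr_{\vect{w}}}$ for a scheduler $\schr_{\vect{w}}\in\mathit{Sch}(\mdp)$. Hence $\vect{r}\in\Polytope{\vect{\rho}}$, and every element of $\Phi$ lies in $\Polytope{\vect{\rho}}$. Since $\Polytope{\vect{\rho}}$ is convex (it is a polytope by the lemma of \cite{Etessami2008,Forejt2011}), it contains all convex combinations of points of $\Phi$, which gives $\mathbf{V}(\Phi)\subseteq\Polytope{\vect{\rho}}$. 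If one prefers not to rely on the polytope lemma, convexity can also be argued directly: a randomised mixture of schedulers realises any convex combination of their total reward vectors, as in Section~\ref{sec:scheduler_generation}.

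For the right inclusion, take $\vect{x}\in\Polytope{\vect{\rho}}$, say $\vect{x}=\tolrew{\vect{\rho}}{\schr}$. For each pair $(\vect{w},\vect{r})$ added, $\vect{r}$ is chosen so that $\{\vect{y}\mid\vect{w}\cdot\vect{y}=\vect{w}\cdot\vect{r}\}$ supports $\Polytope{\vect{\rho}}$ with $\Polytope{\vect{\rho}}$ on the side $\vect{w}\cdot\vect{y}\leq\vect{w}\cdot\vect{r}$. Concretely, by linearity of expectation $\vect{w}\cdot\tolrew{\vect{\rho}}{\schr}=\tolrew{\vect{\rho}\cdot\vect{w}}{\schr}\leq\tolrew{\vect{\rho}\cdot\vect{w}}{\schr_{\vect{w}}}=\vect{w}\cdot\vect{r}$, using the optimality of $\schr_{\vect{w}}$. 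Thus $\vect{x}$ satisfies every defining inequality of $\mathbf{H}(\Lambda)$, so $\Polytope{\vect{\rho}}\subseteq\mathbf{H}(\Lambda)$. One point requires care: the pair is written $(\vect{w},\vect{r})$ in the algorithm but used as (point, direction) in the definition of $\mathbf{H}$. Since $\vect{u}\cdot\vect{v}$ is symmetric in the constraint $\vect{u}\cdot\vect{v}\geq\vect{x}\cdot\vect{v}$, I will read the pair with $\vect{w}$ as the direction and $\vect{r}$ as the point.

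The main obstacle is justifying that the supporting hyperplane has the correct orientation, namely that $\Polytope{\vect{\rho}}$ lies in the half-space $\vect{w}\cdot\vect{y}\leq\vect{w}\cdot\vect{r}$ even though $\vect{w}$ produced at Line~\ref{ln:find-w-1} may have negative entries. This is not a problem, because the optimality of $\schr_{\vect{w}}$ over \emph{all} memoryless schedulers holds for arbitrary real $\vect{w}$. There is one subtlety I would flag. Proposition~\ref{thm:value_iteration} optimises over memoryless schedulers, and $\Polytope{\vect{\rho}}$ is also defined over memoryless schedulers, so the argument is internally consistent. It does, however, rely on a maximising memoryless scheduler existing for the scalarised total reward, which follows from standard MDP theory under the reward finiteness assumption of Definition~\ref{def:total_reward}.
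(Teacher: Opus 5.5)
Your proposal is correct and is essentially the paper's own argument ($\Phi\subseteq\Polytope{\vect{\rho}}$ gives the left inclusion, and each pair in $\Lambda$ defining a supporting hyperplane gives the right one), with the convexity of $\Polytope{\vect{\rho}}$ and the orientation of each half-space (via the scalarisation argument behind Proposition~\ref{thm:value_iteration}) made explicit. One caveat: your optional ``direct'' convexity argument does not work as stated, since an initial random choice among memoryless schedulers is not itself in $\mathit{Sch}(\mdp)$ and state-wise mixing of schedulers does not in general average their reward vectors, so rely on the cited polytope lemma for that step.
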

\begin{proof}
    $\mathbf{V}(\Phi)\subseteq \Polytope{\vect{\rho}}$ holds as $\Phi\subseteq \Polytope{\vect{\rho}}$. $\Polytope{\vect{\rho}}\subseteq \mathbf{H}(\Lambda)$ holds as each $(\vect{w}',\vect{r}')\in\Lambda$ defines a supporting hyperplane of $\Polytope{\vect{\rho}}$.
\end{proof}

We now show the proof for Theorem~\ref{thm:main-alt}.

\begin{proof}[Proof of Theorem~\ref{thm:main-alt}]
We first show the termination of Theorem~\ref{thm:main-alt}.  
Assume that Algorithm~\ref{alg:main-convex-query} completes in the $\ell^\mathrm{th}$ iteration for any $\iota>1$ but does not terminate yet. 
As $\|\overline{\vect{v}} - \underline{\vect{v}}\|_1\neq 0$ (otherwise the algorithm terminates according to Line~\ref{ln:main-exit-condition2}), $\vect{w}$ is well-defined. 
The proof relies on several claims.
\begin{claim}\label{claim1}
$\vect{w}\cdot\vect{x}=\vect{w}\cdot\overline{\vect{v}}
$ for any $\vect{x}$ defines a supporting hyperplane of $\mathbf{V}(\Phi\backslash\{\vect{r}\})$.
\end{claim}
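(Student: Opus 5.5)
The plan is to read the claim as the first-order optimality condition for a Euclidean projection onto a convex set. First I will fix the indexing. At the start of the current ($\ell^{\mathrm{th}}$) iteration, the direction $\vect{w}$ was produced at Line~\ref{ln:find-w-1} of the previous iteration from the values $\underline{\vect{v}}$ and $\overline{\vect{v}}$ computed there. At that time the set of support points was exactly $\Phi\backslash\{\vect{r}\}$, because $\vect{r}$ is only added at Line~\ref{ln:add-vector} of the current iteration. So $\overline{\vect{v}}$ in the claim is the point $\argmin_{\vect{x}\in\mathbf{V}(\Phi\backslash\{\vect{r}\})}\|\underline{\vect{v}}-\vect{x}\|$. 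By construction, $\vect{w}=(\underline{\vect{v}}-\overline{\vect{v}})/\|\underline{\vect{v}}-\overline{\vect{v}}\|_1$ is well defined, since the algorithm did not exit and so $\underline{\vect{v}}\neq\overline{\vect{v}}$.

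Write $K=\mathbf{V}(\Phi\backslash\{\vect{r}\})$. It is a convex hull of finitely many points, so it is nonempty, closed and convex, and $\overline{\vect{v}}\in K$. The key step is the projection inequality:
$$(\underline{\vect{v}}-\overline{\vect{v}})\cdot(\vect{x}-\overline{\vect{v}})\leq 0 \quad \text{for all } \vect{x}\in K.$$
To prove it, take $\vect{x}\in K$ and $t\in(0,1]$. By convexity, $\overline{\vect{v}}+t(\vect{x}-\overline{\vect{v}})\in K$. Minimality of $\overline{\vect{v}}$ then gives
$$\|\underline{\vect{v}}-\overline{\vect{v}}-t(\vect{x}-\overline{\vect{v}})\|^2\geq\|\underline{\vect{v}}-\overline{\vect{v}}\|^2.$$
Expanding yields $-2t(\underline{\vect{v}}-\overline{\vect{v}})\cdot(\vect{x}-\overline{\vect{v}})+t^2\|\vect{x}-\overline{\vect{v}}\|^2\geq 0$. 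Dividing by $t$ and letting $t\to 0^+$ gives the inequality.

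Dividing the inequality by the positive scalar $\|\underline{\vect{v}}-\overline{\vect{v}}\|_1$ gives $\vect{w}\cdot\vect{x}\leq\vect{w}\cdot\overline{\vect{v}}$ for every $\vect{x}\in K$. Since $\overline{\vect{v}}\in K$ attains equality, the hyperplane $\{\vect{x}\mid\vect{w}\cdot\vect{x}=\vect{w}\cdot\overline{\vect{v}}\}$ touches $K$ and leaves all of $K$ on one side. This is exactly the notion of supporting hyperplane used in Lemma~\ref{lem:hyperplane}, with $K$ in the role of $\Psi$, $\overline{\vect{v}}$ in the role of $\vect{u}$, and $\vect{w}$ in the role of $\vect{w}'$.

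I expect the main obstacle to be bookkeeping rather than mathematics. First, I must state precisely that $\overline{\vect{v}}$ and $\vect{w}$ come from the previous iteration, so that the relevant set is $\Phi\backslash\{\vect{r}\}$ and not $\Phi$. Second, Line~\ref{ln:main-find-minimum1} must use the Euclidean norm (or at least an inner-product norm), since the projection inequality relies on it; I will state this as the intended reading of the algorithm. Third, the normalisation of $\vect{w}$ uses $\|\cdot\|_1$ rather than the Euclidean norm, which is harmless because only the sign of the scaling factor matters. The degenerate first-iteration case, $|\Lambda|=1$, is excluded by the standing assumption that we are past the first iteration.
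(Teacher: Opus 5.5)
Your proposal is correct and takes essentially the same route as the paper: both arguments rest on the first-order optimality condition for the Euclidean projection of $\underline{\vect{v}}$ onto the convex set $\mathbf{V}(\Phi\backslash\{\vect{r}\})$, with $\vect{w}$ a positive multiple of $\underline{\vect{v}}-\overline{\vect{v}}$. The paper argues by contradiction: a point $\vect{y}$ with $\vect{w}\cdot(\vect{y}-\overline{\vect{v}})>0$ would give $\nabla g(\overline{\vect{v}})\cdot(\vect{y}-\overline{\vect{v}})<0$ for $g(\vect{x})=\|\vect{x}-\underline{\vect{v}}\|^2$, whereas you derive the variational inequality directly by expanding along the segment, which spells out the step the paper leaves implicit; your reading that $\vect{w}$ and $\overline{\vect{v}}$ come from the previous iteration is also the one the paper's argument requires.
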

Let $\Phi'=\Phi\backslash\{\vect{r}\}$. Assume that $\vect{w}\cdot\vect{x}=\vect{w}\cdot\overline{\vect{v}}
$ is \textit{not} a supporting hyperplane of $\mathbf{V}(\Phi')$. 
Then, there is $\vect{y}\in \mathbf{V}(\Phi')$ such that $\vect{w}\cdot(\vect{y}-\overline{\vect{v}}) >0
$.
Let $g(\vect{x})= \|\vect{x}-\underline{\vect{v}}\|^2$ and so $\nabla g(\vect{x})=2(\vect{x}-\underline{\vect{v}})$.
Thus, $\nabla g(\overline{\vect{v}}) = 2(\overline{\vect{v}}-\underline{\vect{v}})= -2 \vect{w}\|\underline{\vect{v}}-\overline{\vect{v}}\|_1$. 
This implies $\nabla g(\overline{\vect{v}})\cdot(\vect{y}-\overline{\vect{v}})<0$. 
As $\mathbf{V}(\Phi')$ is convex, 
this contradicts to $\overline{\vect{v}}$ being a minimiser of $g$ in $\mathbf{V}(\Phi')$.
\begin{claim}\label{claim2}
If $\vect{w}\cdot\vect{r}>\vect{w}\cdot\overline{\vect{v}}$, then $\vect{r}$ is on a new face of $\Polytope{\vect{\rho}}$.
\end{claim}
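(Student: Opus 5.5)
The plan is to read ``new face'' as follows. Let $H$ be the face of $\Polytope{\vect{\rho}}$ supported by the current direction $\vect{w}$. Then $H$ contains $\vect{r}$ but differs from every face $H'$ found in an earlier iteration, each such $H'$ being supported by the pair $(\vect{w}',\vect{r}')\in\Lambda$. Since $\Polytope{\vect{\rho}}$ has finitely many faces, this is what the termination argument needs. Throughout, $\vect{w}$ is the direction used in Line~\ref{ln:find-supp-hp} to produce $\vect{r}$, and $\overline{\vect{v}}$ is the minimum norm point computed from $\Phi'=\Phi\backslash\{\vect{r}\}$ that defined $\vect{w}$. This is exactly the setting of Claim~\ref{claim1}.

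First, I will identify the face containing $\vect{r}$. By Line~\ref{ln:find-supp-hp}, $\{\vect{y}\mid \vect{w}\cdot\vect{y}=\vect{w}\cdot\vect{r}\}$ is a supporting hyperplane of $\Polytope{\vect{\rho}}$, so $\vect{r}$ maximises $\vect{w}\cdot\vect{x}$ over $\Polytope{\vect{\rho}}$; concretely, it can be taken to be $\tolrew{\vect{\rho}}{\schr_{\vect{w}}}$ from Proposition~\ref{thm:value_iteration}. Apply Lemma~\ref{lem:polytope} to $C=\Polytope{\vect{\rho}}$ and direction $\vect{w}$. This gives a face $H$ on which $\vect{w}\cdot\vect{x}$ is constant and strictly larger than at every point of $C\backslash H$. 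Since $\vect{r}$ attains the maximum, $\vect{r}\in H$, and $H=\{\vect{x}\in C\mid \vect{w}\cdot\vect{x}=\vect{w}\cdot\vect{r}\}$.

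Second, I will show that $H$ avoids all earlier points. By Claim~\ref{claim1}, $\vect{w}\cdot\vect{y}\le\vect{w}\cdot\overline{\vect{v}}$ for every $\vect{y}\in\mathbf{V}(\Phi')$. In particular this holds for each previously found $\vect{r}'\in\Phi'$. Combined with the hypothesis $\vect{w}\cdot\vect{r}>\vect{w}\cdot\overline{\vect{v}}$, this gives $\vect{w}\cdot\vect{r}'<\vect{w}\cdot\vect{r}$, so $\vect{r}'\notin H$. By contrast, each earlier $\vect{r}'$ lies in the face $H'$ determined by its own direction $\vect{w}'$, again via Lemma~\ref{lem:polytope}. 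Hence $H\neq H'$ for every earlier iteration, so $H$ is a new face. In fact the stronger statement $H\cap\mathbf{V}(\Phi')=\emptyset$ also follows.

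The main obstacle is bookkeeping rather than geometry. I must check that the $\overline{\vect{v}}$ in the hypothesis is the one computed before $\vect{r}$ was added, so that Claim~\ref{claim1} applies to $\Phi'$ and not to $\Phi$. I must also fix the formal meaning of ``new face'' so that it is what the later counting argument over the finitely many faces of $\Polytope{\vect{\rho}}$ consumes. If ``face'' must instead mean the minimal face containing $\vect{r}$, I would argue the same way: that minimal face is contained in $H$, so it too contains no earlier point $\vect{r}'$. Any earlier face $H'$ of that form contains $\vect{r}'$, so the two faces cannot coincide.
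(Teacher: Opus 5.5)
Your argument is correct and is essentially the paper's own proof: Claim~\ref{claim1} gives $\vect{w}\cdot\vect{r}'\le\vect{w}\cdot\overline{\vect{v}}<\vect{w}\cdot\vect{r}$ for every earlier point $\vect{r}'$, and Lemma~\ref{lem:polytope} applied to the direction $\vect{w}$ then yields a face that contains $\vect{r}$ but none of the earlier points. Your extra bookkeeping makes explicit what the paper leaves implicit in the phrase ``new face'': that $\overline{\vect{v}}$ is the minimiser over the previous $\Phi$, and that the new face differs from each earlier face $H'$ because $\vect{r}'\in H'$.
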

Actually, by Claim~\ref{claim1}, $\vect{w}\cdot\vect{r}>\vect{w}\cdot\overline{\vect{v}}$ implies $\vect{w}\cdot\vect{r}>\vect{w}\cdot\vect{u}$ for all $\vect{u}\in \mathbf{V}(\Phi\backslash\{\vect{r}\})$. Thus, by Lemma~\ref{lem:polytope}, 
there is a face $H$ of $\Polytope{\vect{\rho}}$ such that $\vect{r}\in H$ and $\vect{r}'\notin H$ for all $\vect{r'}\in \Phi\backslash\{\vect{r}\}$; in other words, $\vect{r}$ is on a new face of $\Polytope{\vect{\rho}}$.
\begin{claim}\label{claim3}
If $\vect{w}\cdot\vect{r}\leq\vect{w}\cdot\overline{\vect{v}}$, then $H_{\vect{w},\overline{\vect{v}}} = \{ \vect{x}\in \Polytope{\vect{\rho}} \mid \vect{w}\cdot\vect{x}=\vect{w}\cdot\overline{\vect{v}}\}$ is a face of $\Polytope{\vect{\rho}}$.
\end{claim}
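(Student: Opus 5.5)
The plan is to show that under the hypothesis $\vect{w}\cdot\vect{r}\leq\vect{w}\cdot\overline{\vect{v}}$ the inequality is in fact an equality, and that $\overline{\vect{v}}$ therefore maximises $\vect{x}\mapsto\vect{w}\cdot\vect{x}$ over $\Polytope{\vect{\rho}}$. Once this holds, $H_{\vect{w},\overline{\vect{v}}}$ is exactly the set of maximisers of a linear functional over $\Polytope{\vect{\rho}}$, and that set is a face by the definition recalled at the start of this appendix.

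First I fix which objects are compared. Here $\overline{\vect{v}}$ is the minimum-norm point computed in the previous iteration, and $\vect{w}$ is the direction built from it in Line~\ref{ln:find-w-1}. The vector $\vect{r}$ is the support point that Line~\ref{ln:find-supp-hp} returns for this same $\vect{w}$, as in Claims~\ref{claim1} and~\ref{claim2}. By Proposition~\ref{thm:value_iteration}, $\vect{r}=\tolrew{\vect{\rho}}{\schr_{\vect{w}}}$ and the hyperplane $\{\vect{y}\mid\vect{w}\cdot\vect{y}=\vect{w}\cdot\vect{r}\}$ supports $\Polytope{\vect{\rho}}$. Concretely, $\vect{w}\cdot\vect{r}\geq\vect{w}\cdot\vect{x}$ for every $\vect{x}\in\Polytope{\vect{\rho}}$, because $\vect{w}\cdot\tolrew{\vect{\rho}}{\schr}=\tolrew{\vect{\rho}\cdot\vect{w}}{\schr}$ by linearity of expectation and $\schr_{\vect{w}}$ maximises the latter.

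Second, I use $\overline{\vect{v}}\in\mathbf{V}(\Phi\backslash\{\vect{r}\})\subseteq\mathbf{V}(\Phi)\subseteq\Polytope{\vect{\rho}}$, which follows from Lemma~\ref{lem:v-h-relationship}. The maximality of $\vect{r}$ then gives $\vect{w}\cdot\overline{\vect{v}}\leq\vect{w}\cdot\vect{r}$. Combined with the hypothesis, this yields $\vect{w}\cdot\vect{r}=\vect{w}\cdot\overline{\vect{v}}=\max_{\vect{x}\in\Polytope{\vect{\rho}}}\vect{w}\cdot\vect{x}$.

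Third, for every $\vect{u}\in H_{\vect{w},\overline{\vect{v}}}$ and $\vect{u}'\in\Polytope{\vect{\rho}}$ we get $\vect{u}\cdot\vect{w}=\vect{w}\cdot\overline{\vect{v}}\geq\vect{u}'\cdot\vect{w}$. This is precisely the face condition with witness vector $\vect{w}$. The set is nonempty, since it contains both $\overline{\vect{v}}$ and $\vect{r}$. Lemma~\ref{lem:polytope} also shows it is the full face determined by $\vect{w}$, not merely a subset of one.

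The only delicate step is the bookkeeping in the first paragraph. One must make sure that $\overline{\vect{v}}$ and $\vect{r}$ refer to the same $\vect{w}$, and that Line~\ref{ln:find-supp-hp} returns a genuine maximiser in direction $\vect{w}$ rather than an arbitrary point on some supporting hyperplane. With the maximiser reading supplied by Proposition~\ref{thm:value_iteration}, the remaining steps are immediate.
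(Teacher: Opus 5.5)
Your proof is correct and follows the paper's own argument. Since $\overline{\vect{v}}\in\mathbf{V}(\Phi)\subseteq\Polytope{\vect{\rho}}$ and $(\vect{w},\vect{r})$ defines a supporting hyperplane, $\vect{w}\cdot\vect{r}\geq\vect{w}\cdot\overline{\vect{v}}$, so the hypothesis forces equality and $H_{\vect{w},\overline{\vect{v}}}$ is the set of $\vect{w}$-maximisers, hence a face. You additionally make explicit two points the paper leaves implicit: the orientation of the supporting hyperplane (via Proposition~\ref{thm:value_iteration}) and the final check of the face condition.
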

Actually, as $\mathbf{V}(\Phi)\subseteq \Polytope{\vect{\rho}}$ and $(\vect{w},\vect{r})$ defines a supporting hyperplane for $\Polytope{\vect{\rho}}$, $\vect{w}\cdot\vect{r} \geq \vect{w}\cdot\overline{\vect{v}}$. Thus $\vect{w}\cdot\vect{r}\leq\vect{w}\cdot\overline{\vect{v}}$ equals to $\vect{w}\cdot\vect{r}=\vect{w}\cdot\overline{\vect{v}}$. 

{Now suppose the algorithm does not terminate}. 
In each iteration either the condition of Claim~\ref{claim2} or that of Claim~\ref{claim3} holds. But Claim~\ref{claim2}'s condition cannot hold infinitely often as the number of faces for $\Polytope{\vect{\rho}}$ is finite. Thus, Claim~\ref{claim3}'s condition holds infinitely. 
Also due to the finite number of faces for $\Polytope{\vect{\rho}}$, there must be one infinite sequence of faces $H_{\vect{w}_{1},\overline{\vect{v}}_1},H_{\vect{w}_{2},\overline{\vect{v}}_2},\ldots$ such that all of them coincide with one face of $\Polytope{\vect{\rho}}$.
Then, for each tuple $(i, j)$ such that $i< j$,
\begin{equation*}\begin{aligned}
\vect{w}_i\cdot\vect{w}_j \ & \propto\ \vect{w}_i\cdot(\underline{\vect{v}}_j - \overline{\vect{v}}_j) \\
& =\ \vect{w}_i\cdot(\underline{\vect{v}}_j -\overline{\vect{v}}_i + \overline{\vect{v}}_i-  \overline{\vect{v}}_j)  \\
& =\ \vect{w}_i\cdot(\underline{\vect{v}}_j-\overline{\vect{v}}_i) \quad \text{(as $\overline{\vect{v}}_i, \overline{\vect{v}}_j\in H_{\vect{w}_i,\overline{\vect{v}}_i}$)}\\
& \leq\ \vect{w}_i\cdot(\underline{\vect{v}}_j-{\vect{r}}_i) \quad \text{(as $\vect{w}_i\cdot\vect{r}_i\leq\vect{w}_i\cdot \overline{\vect{v}}_i$}\\
& \qquad \text{is the condition of Claim~\ref{claim3})}\\
& \leq\ 0 
\end{aligned}\end{equation*}
The last inequality holds because Lines~\ref{ln:main-condition1}-\ref{ln:main-find-minimum2} of Alg.~\ref{alg:main-convex-query} ensures that $w_{i}\cdot \underline{\vect{v}}_{k} \leq w_{i}\cdot \vect{r} $ for all $k>i$ (note that we have assumed $|\Lambda|>1$ as $\iota>1$).
This contradicts to Lemma~\ref{lem:minimum-angle-points}. Thus, the algorithm terminates eventually.

As shown above, the algorithm either returns ``infeasible'' or $\overline{\vect{v}}$.  
If it returns ``infeasible'' then $\mathbf{H}(\Lambda)\cap[\vect{l}, \vect{u}]=\emptyset$ and thus $\Polytope{\vect{\rho}}\cap[\vect{l}, \vect{u}]=\emptyset$ by Lemma~\ref{lem:v-h-relationship}.
If it does not return ``infeasible'', then $\overline{\vect{v}}\in [\vect{l},\vect{u}]$ must hold in some future iteration. As $\overline{\vect{v}}\in\Polytope{\vect{\rho}}$ by Lemma~\ref{lem:v-h-relationship}, $\Polytope{\vect{\rho}}\cap[\vect{l}, \vect{u}]\neq\emptyset$. 

Lastly, if the algorithm returns $\overline{\vect{v}}$,
then $\vect{v}\in\Polytope{\vect{\rho}}\cap[\vect{l}, \vect{u}]$ and $\los(\overline{\vect{v}}) - \los(\underline{\vect{v}}) \leq\epsilon$.
Thus, the minised value $\min_{\vect{x}\in\Polytope{\vect{\rho}}\cap[\vect{l},\vect{u}]}\los(\vect{x})$ exits.
Also, $\min_{\vect{x}\in\Polytope{\vect{\rho}}\cap[\vect{l},\vect{u}]}\los(\vect{x})\geq\los(\underline{\vect{v}})$ as $\Polytope{\vect{\rho}}\cap[\vect{l},\vect{u}]\subseteq\mathbf{H}(\Lambda)\cap[\vect{l},\vect{u}]$. Thus,
$\los(\overline{\vect{v}}) - \min_{\vect{x}\in\Polytope{\vect{\rho}}\cap[\vect{l},\vect{u}]}\los(\vect{x}) \leq \varepsilon$ immediately follows.
\end{proof}

\subsection{Proofs for Section~\ref{sec:parallel_pvi}}

\begin{proof}[Proof of Proposition~\ref{thm:value_iteration}]
Let $\schr_*$ be the optimal scheduler computed in Line~\ref{ln:find-supp-hp} of Algorithm~\ref{alg:main-convex-query}, namely, $\schr_* = \argmax_{\schr\in {Sch}(\mdp)} \tolrew{\vect{w}\cdot\vect{\rho}}{\schr}$. Then, for any $\schr\in {Sch}(\mdp)$,
\begin{equation*}
\begin{aligned}
\vect{w}\cdot\tolrew{\vect{\rho}}{\schr} =\ & w_1 \tolrew{\rho_1}{\schr} +\ldots w_m \tolrew{\rho_m}{\schr} \\
=\ & \tolrew{\vect{w}\cdot\vect{\rho}}{\schr}  \quad \text{(Linearity of total rewards)}\\
\leq\ & \tolrew{\vect{w}\cdot\vect{\rho}}{\schr_*}\\
=\ & \vect{w}\cdot\tolrew{\vect{\rho}}{\schr_*}\\
\end{aligned}    
\end{equation*}
This shows that $\vect{w}\cdot \vect{y}=\vect{w}\cdot \tolrew{\vect{\rho}}{\schr_*}$ defines a supporting hyperplane of $\Polytope{\vect{\rho}}$.
\end{proof}

\rev{\section{Complete PRISM Specification of $\mdp_{\mathit{SW}2}$}\label{sec:complete_prism_spec}}
\centering
\lstset{
    language={Prism}, 
    numbers=left,
    frame=single, 
    rulesepcolor=\color{black}, 
    rulecolor=\color{black},
    breaklines=true,
    breakatwhitespace=true,
    firstnumber=1,
    stepnumber=1,
    breakindent=10pt,
    postbreak=\mbox{\textvisiblespace\space}, 
}

\begin{lstlisting}[
    firstline=1, 
    %caption={\rev{Additional Modules and Rewards Definitions for $\mdp_{\mathit{SW}2} $}},
    label={list:prism_model_added},
    linewidth=0.88\textwidth
]
//Switch model with four configurations for primary and secondary controllers 
mdp
const int MAX_TS = 40; // max number of timesteps
const int MIN_CC = 4; // min number of consec calls to a sec contr
//prob to entering sec contr under four confs
const double p1et; 
const double p2et; 
const double p3et; 
const double p4et; 
//prob of existing sec contr under four confs (subject to MIN_CC)
const double p1ex; 
const double p2ex;
const double p3ex;
const double p4ex;
//prob of observing low headway
const double p1hw;
const double p2hw;
const double p3hw;
const double p4hw;
//prob of observing low ttc
const double p1ttc;
const double p2ttc;
const double p3ttc;
const double p4ttc;
//prob of being on lane
const double p1ol;
const double p2ol;
const double p3ol;
const double p4ol;

module BaseSwitch
  conf: [0..4]; //four configurations
  sm: [0..2] init 0; //main state variable
  sp: [0..4] init 0; //primary control state variable
  ss: [0..5] init 0; //secondary control state variable
  // select configurations 
  [conf] sm=0 -> (sm'=1) & (conf'=1);
  [conf] sm=0 -> (sm'=1) & (conf'=2);
  [conf] sm=0 -> (sm'=1) & (conf'=3);
  [conf] sm=0 -> (sm'=1) & (conf'=4);
  [] sm=1 & (sp=0) & (ss=0) & conf=1 -> (1-p1et) : (sp'=1) + p1et : (ss'=1);
  [] sm=1 & (sp=0) & (ss=0) & conf=2 -> (1-p2et) : (sp'=1) + p2et : (ss'=1);
  [] sm=1 & (sp=0) & (ss=0) & conf=3 -> (1-p3et) : (sp'=1) + p3et : (ss'=1);
  [] sm=1 & (sp=0) & (ss=0) & conf=4 -> (1-p4et) : (sp'=1) + p4et : (ss'=1);
  [obs] sp=1 -> (sp'=2);
  [act_p] sp=2 -> (sp'=3);
  [ex_p] sp=3 -> (sm'=2) & (sp'=0);
  [obs] ss=1 -> (ss'=2);
  [act_s] ss=2 -> (ss'=3);
  [] ss=3 & conf=1 -> (1-p1ex) : (ss'=4) + p1ex : (ss'=5);
  [] ss=3 & conf=2 -> (1-p2ex) : (ss'=4) + p2ex : (ss'=5);
  [] ss=3 & conf=3 -> (1-p3ex) : (ss'=4) + p3ex : (ss'=5);
  [] ss=3 & conf=4 -> (1-p4ex) : (ss'=4) + p4ex : (ss'=5);
  [re_s] ss=4 -> (ss'=1);
  [re_s] ss=5 -> (ss'=1);
  [ex_s] ss=5 -> (sm'=2) & (ss'=0);
  [] sm=2 -> (sm'=0) & (conf'=0);
endmodule

module ConsecCallCounter
  cc: [0..MIN_CC]; //consecutive call counts
  [re_s] cc<MIN_CC -> (cc'=cc+1);
  [re_s] cc=MIN_CC -> (cc'=MIN_CC);
  [ex_s] cc=MIN_CC -> (cc'=0);
endmodule

module HWRecorder
  shw: [0..3] init 0; //2: low headway, 3: high headway
  [obs] shw=0 -> (shw'=1);
  [] conf=1 & shw=1 -> p1hw : (shw'=2) + (1-p1hw) : (shw'=3);
  [] conf=2 & shw=1 -> p2hw : (shw'=2) + (1-p2hw) : (shw'=3);
  [] conf=3 & shw=1 -> p3hw : (shw'=2) + (1-p3hw) : (shw'=3);
  [] conf=4 & shw=1 -> p4hw : (shw'=2) + (1-p4hw) : (shw'=3);
  [ex_p] shw=2 -> (shw'=0);
  [ex_s] shw=2 -> (shw'=0);
  [re_s] shw=2 -> (shw'=0);
  [ex_p] shw=3 -> (shw'=0);
  [ex_s] shw=3 -> (shw'=0);
  [re_s] shw=3 -> (shw'=0);
endmodule

module TTCRecorder
  sttc: [0..3] init 0; //2: low ttc, 3: high ttc
  [obs] sttc=0 -> (sttc'=1);
  [] conf=1 & sttc=1 -> p1ttc : (sttc'=2) + (1-p1ttc) : (sttc'=3);
  [] conf=2 & sttc=1 -> p2ttc : (sttc'=2) + (1-p2ttc) : (sttc'=3);
  [] conf=3 & sttc=1 -> p3ttc : (sttc'=2) + (1-p3ttc) : (sttc'=3);
  [] conf=4 & sttc=1 -> p4ttc : (sttc'=2) + (1-p4ttc) : (sttc'=3);
  [ex_p] sttc=2 -> (sttc'=0);
  [ex_s] sttc=2 -> (sttc'=0);
  [re_s] sttc=2 -> (sttc'=0);
  [ex_p] sttc=3 -> (sttc'=0);
  [ex_s] sttc=3 -> (sttc'=0);
  [re_s] sttc=3 -> (sttc'=0);
endmodule

module OLRecorder
  sol: [0..3] init 0; //2: on lane, 3: not on lane
  [obs] sol=0 -> (sol'=1);
  [] conf=1 & sol=1 -> p1ol : (sol'=2) + (1-p1ol) : (sol'=3);
  [] conf=2 & sol=1 -> p2ol : (sol'=2) + (1-p2ol) : (sol'=3);
  [] conf=3 & sol=1 -> p3ol : (sol'=2) + (1-p3ol) : (sol'=3);
  [] conf=4 & sol=1 -> p4ol : (sol'=2) + (1-p4ol) : (sol'=3);
  [ex_p] sol=2 -> (sol'=0);
  [ex_s] sol=2 -> (sol'=0);
  [re_s] sol=2 -> (sol'=0);
  [ex_p] sol=3 -> (sol'=0);
  [ex_s] sol=3 -> (sol'=0);
  [re_s] sol=3 -> (sol'=0);
endmodule

module TimestepCounter
  tc: [0..MAX_TS] init 0;
  //trigger conf and obs only if tc<MAX_TS
  [conf] tc<MAX_TS -> true;
  [obs] tc<MAX_TS -> true;
  //increase timestep counter
  [act_p] tc<MAX_TS -> (tc'=tc+1);
  [act_s] tc<MAX_TS -> (tc'=tc+1);
  //add self-loop to avoid deadlocks
  [] sm<2 & shw=0 & sttc=0 & sol=0 & tc=MAX_TS -> true;
endmodule

rewards "ctrl_cost"
  [act_p] true : 1/MAX_TS; 
  [act_s] true : 2/MAX_TS; 
endrewards

rewards "headway_cost" // low headway (shw=2) has cost 1/MAX_TS
  [ex_p] shw=2 : 1/MAX_TS; 
  [ex_s] shw=2 : 1/MAX_TS; 
  [re_s] shw=2 : 1/MAX_TS; 
endrewards

rewards "lane_dep_cost" // not on lane (sol=3) has cost 1/MAX_TS
  [ex_p] sol=3 : 1/MAX_TS; 
  [ex_s] sol=3 : 1/MAX_TS; 
  [re_s] sol=3 : 1/MAX_TS; 
endrewards

rewards "ttc_cost" // low ttc (sttc=2) has cost 1/MAX_TS
  [ex_p] sttc=2 : 1/MAX_TS; 
  [ex_s] sttc=2 : 1/MAX_TS; 
  [re_s] sttc=2 : 1/MAX_TS; 
endrewards
\end{lstlisting}

\end{document}